\documentclass[a4paper]{article}
\usepackage[utf8]{inputenc}
\usepackage{amsmath}
\usepackage{amsfonts}
\usepackage{amssymb}
\usepackage{amsthm}
\usepackage{bbm}
\usepackage{xcolor}
\usepackage{graphicx}
\usepackage{hyperref}
\numberwithin{equation}{section}
\usepackage[top=1in,
  bottom=1.2in,
  left=1in,
  right=0.75in]{geometry}
%\linespread{1.6}

\newtheorem{theorem}{Theorem}[section]
\newtheorem{lemma}[theorem]{Lemma}
\newtheorem{definition}[theorem]{Definition}

\newtheorem{corollary}[theorem]{Corollary}
\newtheorem*{notation*}{Notation}
\newtheorem{remark}[theorem]{Remark}
\newtheorem{example}[theorem]{Example}

\theoremstyle{definition}
\newtheorem{assumption}[theorem]{Assumption}

\newenvironment{assum}[1]{%
  \manualtheoreminner
}{\endmanualtheoreminner}

\title{Sandwiched Volterra Volatility model: 
\\ Markovian approximations and hedging}

\author{Giulia Di Nunno$^{1,2}$\\\href{mailto:giulian@math.uio.no}{giulian@math.uio.no}
   \and Anton Yurchenko-Tytarenko$^1$ \\ \href{mailto:antony@math.uio.no}{antony@math.uio.no}}

\date{%
    $^1$Department of Mathematics, University of Oslo\\%
    $^2$Department of Business and Management Science, NHH Norwegian School of Economics, Bergen\\[2ex]%
    July 9, 2024
}

\begin{document}

\maketitle

\begin{abstract}
    We consider stochastic volatility dynamics driven by a general H\"older continuous Volterra-type noise and with unbounded drift. For these so-called SVV-models, we consider the explicit computation of quadratic hedging strategies. While the theoretical hedge is well-known in terms of the non-anticipating derivative for all square integrable claims, the fact that these models are typically non-Markovian provides is a challenge in the direct computation of conditional expectations at the core of the explicit hedging strategy. To overcome this difficulty, we propose a Markovian approximation of the model which stems from an adequate approximation of the kernel in the Volterra noise. We study the approximation of the volatility, of the prices and of the optimal mean-square hedge. We provide the corresponding error estimates. The work is completed with numerical simulations.
\end{abstract}

\noindent\textbf{Keywords:} stochastic volatility, sandwiched process, H\"older continuous noise, hedging, Monte Carlo methods\\
\textbf{MSC 2020:} 91G30; 60H10; 60H35; 60G22 \\[9pt]
\textbf{Acknowledgements.} The present research is carried out within the frame and support of the ToppForsk project nr. 274410 of the Research Council of Norway with title STORM: Stochastics for Time-Space Risk Models.

\section{Introduction}\label{s: introduction paper 5}

%\subsection{Motivation and background}

A significant number of modern financial market models are characterized by the absence of the Markov property. For a detailed discussion of such a choice of modeling framework, we refer the reader to a recent review \cite{Di_Nunno_Kubilius_Mishura_Yurchenko-Tytarenko_2023}; here we only note that the non-Markovian structure is substantiated by several empirical considerations coming from the analysis of financial time series and implied volatility (IV) surfaces.

\begin{itemize}
    \item A number of studies (see e.g. \cite{AndersenBollerslev1997, AndersenBollerslevDieboldLabys2001, BollerslevMikkelsen1996, DingGrangerEngle1993} or \cite{Cont2005, Cont2006}) report the presence of \textit{long memory} in financial markets. Additionally, Comte and Renault \cite{ComteRenault1998} note that the observed IV smile amplitude decreases much slower than predicted by standard models, when the time to maturity $T\to\infty$, which can also be interpreted as a manifestation of volatility persistence. These observations resulted in multiple works utilizing long memory processes such as fractional Brownian motion (fBm) with Hurst index $H>1/2$ for volatility modeling (see e.g. \cite{BMdP2018, ChronopoulouViens2012, ComteCoutinRenault2012} or \cite{Rosenbaum_2008}).

    \item Other studies (such as \cite{Fukasawa_Takabatake_Westphal_2019, GatheralJaissonRosenbaum2018}) estimate the H\"older order of the market volatility to be in the vicinity of 0.1. Additionally, this idea of modeling volatility by a process with low regularity is supported by the behavior of the short-term skew slope of the at-the-money IV \cite{Alos_Leon_Vives_2007, Fukasawa_2021}. These facts gave birth to a number of so-called \textit{rough volatility} models and, while \textit{roughness} in itself does not necessarily yield the absence of Markovianity, many rough volatility models in the literature (see e.g. \cite[Section 7.2]{Alos_Leon_Vives_2007} as well as \cite{BayerFrizGatheral2016, EuchRosenbaum2019, Fukasawa_Gatheral_2022, Harms_Stefanovits_2019}) exploit the structure of fBm with $H<1/2$ and hence are non-Markovian.
\end{itemize}
It should be noted, however, that fBm-based models are far from perfect, despite being typical choices for introducing long memory/roughness.
\begin{itemize}
    \item[(i)] In fractional models, there seems to be an inherent contradiction in the choice of the Hurst index $H$: long memory and the behavior of the IV surfaces for longer maturities yield $H>1/2$ whereas the power law of the IV skew slope and roughness demand $H<1/2$, see e.g. the discussion in \cite{Funahashi_Kijima_2017, Funahashi_Kijima_2017-1} or \cite[Section 7.7]{Alos_Garcia_Lorite_2021}.

    \item[(ii)] In rough volatility models (such as rough Stein-Stein \cite{Abi_Jaber_2022, Harms_Stefanovits_2019} or rough Heston \cite{EuchGatheralRosenbaum2018, Euch_Rosenbaum_2018, EuchRosenbaum2019}), there is often no transparent procedure of transition between physical and pricing measures: it is often not clear whether the volatility process $\sigma = \{\sigma(t),~t\in[0,T]\}$ hits zero. Hence the integral $\int_0^T \frac{1}{\sigma^2(s)}ds$ that is typically present in densities of martingale measures may be undefined.

    \item[(iii)] In general, stochastic volatility models are susceptible to moment explosions, i.e. the expectations of the prices $\mathbb E[S^2(T)]$ may be infinite for all big enough $T$, see e.g. \cite{AP2006, Gerhold_Gerstenecker_Pinter_2019, Keller-Ressel_2011}  or the article ``\textit{Moment Explosions}'' in \cite{Cont_2010}. Such a technical property can be detrimental in many ways: it may result in infinite prices \cite[Section 8]{AP2006}, invalidate error estimates in numerical schemes \cite[Section 4.2]{Alfonsi_2010}, and cause infinite expected utility in optimization problems \cite{Kallsen_Muhle-Karbe_2010}. In addition, infinite second moments of prices rule out quadratic hedging tools which is a big disadvantage for models that normally produce incomplete markets.
\end{itemize}
Moreover, even if a fractional model somehow overcomes the problems (i)--(iii), it still produces substantial difficulties from the point of view of stochastic methods, especially in optimization problems. As an example, consider a mean-variance hedging problem of the form 
\begin{equation}\label{eq: J}
    \inf_{u} J(u) := \inf_{u}\mathbb E \left[\left(F - \int_0^T u(s) dX(s)\right)^2\right],
\end{equation}
where $X$ is a square-integrable non-Markovian martingale w.r.t. some filtration $\mathbb F = \{\mathcal F_t,~t\in[0,T]\}$ denoting a discounted risk-neutral price, $F = f(X(T))$ is a square integrable financial claim and the infimum in \eqref{eq: J} is taken over all $\mathbb F$-adapted $X$-integrable strategies. From the theoretical perspective, the problem \eqref{eq: J} is well understood: for martingale discounted prices $X$, the \textit{existence} of the solution to \eqref{eq: J} is guaranteed by the celebrated \textit{Galtchouk-Kunita-Watanabe decomposition theorem} (see e.g. \cite{Pham_2000, Schweizer_2001}). Furthermore, \cite{DiNunno} gives the \textit{explicit representation} of the optimal hedging strategy: according to \cite[Theorem 2.1]{DiNunno}, the optimal hedging portfolio $u$ minimizing \eqref{eq: J} can be written as the \textit{non-anticipating derivative} $\mathfrak D F$ defined as the $L^2$-limit of simple processes
\begin{equation}\label{eq: NA-derivative introduction}
\begin{gathered}
    \mathfrak D F := L^2\text{-}\lim_{|\pi| \to 0} u_\pi, \quad
    u_{\pi} := \sum_{k=0}^{n-1} \frac{\mathbb E \left[ (X(t_{k+1}) - X(t_k))F~|~\mathcal F_{t_k}\right]}{\mathbb E\left[ (X(t_{k+1}) - X(t_k))^2~|~\mathcal F_{t_k}\right]} \mathbbm 1_{(t_k, t_{k+1}]},
\end{gathered}    
\end{equation}
where $|\pi| := \max_k(t_k - t_{k-1})$ denotes the mesh of the partition $\pi = \{0 = t_0 < t_1 < ... < t_n = T\}$. Note that \eqref{eq: NA-derivative introduction} is explicit in the sense that the hedge is written only in terms of the discounted price model, the information flow of reference, and the claim $F$. Nevertheless, the practical use of \eqref{eq: NA-derivative introduction} is still limited: analytical expressions for $\mathfrak D F$ are usually impossible to obtain whereas dependence of $X$ on the past significantly complicates numerical computation of the conditional expectations in \eqref{eq: NA-derivative introduction} by e.g. Monte Carlo methods.

\paragraph{Modeling framework: outline and main features.}

In this paper, we consider the numerical aspect of the mean-variance hedging problem \eqref{eq: J} within the \textit{Sandwiched Volterra Volatility} (\textit{SVV}) model initially introduced in \cite{DNMYT2022, DN_YT_power_law_2023} in the option pricing setting. Namely, we assume that the price $S = \{S(t),~t\in[0,T]\}$ of a risky asset has the risk-free dynamics of the form
\begin{align}
    S(t) &= S(0) + \int_0^t \nu(s) S(s)ds + \int_0^t Y(s) S(s) \left(\rho dB_1(s) + \sqrt{1 - \rho^2} dB_2(s)\right), \label{eq: S}
    \\
    Y(t) &= Y(0) + \int_0^t b(s, Y(s))ds + Z(t),\label{eq: Y}
    \\
    X(t) &= \exp\left\{-\int_0^t \nu(s)ds\right\}S(t),\label{eq: X}
\end{align}
where $S(0)$ and $Y(0)$ are given constants, $\nu\in C([0,T])$ is a deterministic instantaneous interest rate, $(B_1, B_2) = \{(B_1(t),B_2(t)),~t\in[0,T]\}$ is a standard 2-dimensional Brownian motion, $\rho\in(-1,1)$, $Z(t) = \int_0^t \mathcal K(t,s) dB_1(s)$ is a H\"older continuous Gaussian Volterra process. The drift $b = b(t,y)$ in \eqref{eq: Y} is unbounded and has an explosive growth to $+\infty$ whenever $y\to\varphi(t)+$ and explosive decay to $-\infty$ whenever $y \to \psi(t)-$, where $0 < \varphi < \psi$ are two deterministic H\"older continuous functions. This structure ensures that the volatility process $Y = \{Y(t),~t\in[0,T]\}$ is \textit{sandwiched} between $\varphi$ and $\psi$, i.e.
\begin{equation}\label{eq: sandwich intro}
    0 < \varphi(t) < Y(t) < \psi(t), \quad t\in [0,T].
\end{equation}
The SVV model \eqref{eq: S}--\eqref{eq: X} has several important advantages.
\begin{itemize}
    \item In line with the recent papers \cite{Jaber_Illand_Shaun_Li_2022, Merino_Pospisil_Sobotka_Sottinen_Vives_2021}, the noise driving the volatility is a general Gaussian Volterra process. This family is very broad and allows bypassing problem (i) by e.g. using several fBms with different Hurst indices as suggested in \cite{Funahashi_Kijima_2017-1} and \cite[Section 7.7]{Alos_Garcia_Lorite_2021} or utilizing multifractional Brownian motion with time-varying local regularity as in \cite{Ayache_Peng_2012, CLV_2014}. For example, according to \cite[Theorem 4.8 and Example 4.9]{DN_YT_power_law_2023}, \eqref{eq: S}--\eqref{eq: X} can reproduce the IV skew power law and long memory simultaneously if $\mathcal K$ is a linear combination of fractional kernels.

    \item The lower bound in \eqref{eq: sandwich intro} guarantees that the volatility $Y$ stays strictly positive. In general, this property allows modeling the market under the physical measure and switching to the pricing measure when required. In this case, \cite[Subsection 2.2]{DNMYT2022} gives the full description of all equivalent local martingale measures.

    \item The upper bound in \eqref{eq: sandwich intro} ensures that the price $S$ has moments of all orders, which allows us to consider the quadratic hedging problem \eqref{eq: J} in the first place.

    \item In addition to addressing problems (i)--(iii), the SVV model has efficient simulation schemes preserving the property \eqref{eq: sandwich intro} \cite{DNMYT2022-1} and generates Malliavin differentiable volatility and price \cite{DNMYT2022, DN_YT_power_law_2023}.
\end{itemize}

We tackle the numerical computation of \eqref{eq: NA-derivative introduction} in three stages. 
\begin{itemize}
    \item[I.] First, we construct a finite-dimensional Markovian approximation to the SVV model \eqref{eq: S}--\eqref{eq: Y} by using the approach in the spirit of \cite{AJE2019, AbiJaber_Miller_Pham_2021, Bauerle_Desmettre_2020, CarmonaCoutin1998}. Namely, we approximate $\mathcal K$ with a sequence of degenerate kernels $\{\mathcal K_m,~m\ge 1\}$
    \[
        \mathcal K_m(t,s) = \sum_{i=0}^{m} e_{m,i}(t)f_{m,i}(s)\mathbbm 1_{s<t}
    \]
    and then exploit the Markovianity of the $(m+2)$-dimensional process $(X_m, Y_m, U_{m,0}, ..., U_{m,m})$, where
    \begin{equation}\label{eq: model approx intro}
    \begin{aligned}
        X_m(t) &= X(0) + \int_0^t Y_m(s) X_m(s) \left(\rho dB_1(s) + \sqrt{1 - \rho^2} dB_2(s)\right),
        \\
        Y_m(t) &= Y(0) + \int_0^t b(s, Y_m(s))ds + \sum_{i=1}^{m} e_{m,i}(t) U_{m,i}(t),
        \\
        U_{m,i}(t) &= \int_0^t f_{m,i}(s) dB_1(s),\quad i=0,1,...,m.
    \end{aligned}
    \end{equation}

    \item[II.] Second, we plug \eqref{eq: model approx intro} in the conditional expectations from \eqref{eq: NA-derivative introduction} and prove that the result converges to $\mathfrak D F$ in $L^2$.

    \item[III.] Finally, we employ the Markovian structure of \eqref{eq: model approx intro} to numerically estimate conditional expectations $\mathbb E \left[ (X_m(t_{k+1}) - X_m(t_k))F(X_m(T))~|~\mathcal F_{t_k}\right]$ and $\mathbb E\left[ (X_m(t_{k+1}) - X_m(t_k))^2~|~\mathcal F_{t_k}\right]$. We propose two algorithms: Nested Monte Carlo (NMC) and Least Squares Monte Carlo (LSMC).
\end{itemize}
Note that the convergence of \eqref{eq: model approx intro} as $m\to\infty$ is not trivial due to the explosive nature of the drift $b$. For this reason, we have to use our own technique based on pathwise estimates from \cite{DNMYT2020}.

\paragraph{Structure of this work.} %The paper is organized as follows. 
Section \ref{s: Preliminaries and Assumptions} contains the detailed description of the SVV model \eqref{eq: S}--\eqref{eq: X} and gathers some necessary known results. Section \ref{s: Markovian approximation} is devoted to Stage I of our plan above: we introduce the approximation \eqref{eq: model approx intro} and prove its convergence to the original SVV model. In Section \ref{s: NA-derivative estimates}, Stage II is realized: we construct an approximation of the optimal hedging strategy $\mathfrak D F$ and study its convergence. Stage III is implemented in Section \ref{s: Monte Carlo approximation of the hedging strategy} which concentrates on numerical algorithms for computing the optimal hedge and describes two Monte Carlo approaches for the computation of the non-anticipating derivative; the results are illustrated by simulations. Appendices \ref{appendix: proof of approximation of volatility} and \ref{appendix: estimates for price increments} contain the proofs of some technical results.

\section{Sandwiched Volterra Volatility Model}\label{s: Preliminaries and Assumptions}

Hereafter we give detailed specifications for the model \eqref{eq: S}--\eqref{eq: X} and collect some fundamental results for the upcoming discussion. Before proceeding to the main content, let us introduce some notation used in the sequel.

\begin{notation*}\hfill
    \begin{enumerate}
        \item The model \eqref{eq: S}--\eqref{eq: Y} is defined on a complete probability space $(\Omega, \mathcal F, \mathbb P)$ with a $\mathbb P$-augmented filtration $\mathbb F = \{\mathcal F_t,~t\in[0,T]\}$, $T<\infty$,  generated by the 2-dimensional Brownian motion $(B_1, B_2) = \{(B_1(t), B_2(t)),~t\in [0,T]\}$.
    
        \item  In what follows, by $C$ we will denote any positive deterministic constant the exact value of which is not important in the context. Note that $C$ may change from line to line (or even within one line).
        
        \item Throughout the paper, $L^2(\mathbb P)$ will denote a standard space of square-integrable random variables. For a given square-integrable martingale $\eta = \{\eta(t),~t\in[0,T]\}$, the Hilbert space of stochastic processes such that
        \[
            \lVert u \rVert_{L^2(\mathbb P \times [\eta])} := \left(\mathbb E \left[ \int_0^T u^2(s) d[\eta](s) \right]\right)^{\frac{1}{2}} < \infty 
        \]
        will be denoted by $L^2(\mathbb P \times [\eta])$. 
    \end{enumerate}
\end{notation*}

\subsection{Volterra noise}\label{subsec: noise}
Consider a function $\mathcal K$: $[0,T]^2 \to \mathbb R$ that satisfies the following assumption.

\begin{assum}{(K)}\label{assum: kernel} The function $\mathcal K$: $[0,T]^2 \to \mathbb R$ is a \textit{Volterra kernel}, i.e. $\mathcal K(t,s) = 0$ whenever $t<s$, and the following conditions hold:
    \begin{itemize}
    \item[\textbf{(K1)}] $\mathcal K$ is square integrable, i.e.
        \[
            \int_0^T \int_0^T \mathcal K^2(t,s)ds dt = \int_0^T \int_0^t \mathcal K^2(t,s)dsdt < \infty;
        \]
    \item[\textbf{(K2)}] there exists a constant $H \in (0,1)$ such that, for any $\lambda \in (0,H)$,
        \[
            \int_0^t (\mathcal K(t,u) - \mathcal K(s, u))^2 du \le \ell_\lambda^2 |t-s|^{2\lambda}, \quad 0\le s\le t\le T,
        \]
        where $\ell_\lambda > 0$ is a constant, possibly dependent on $\lambda$.
\end{itemize}
\end{assum}
\noindent For a fixed $\mathcal K$ satisfying Assumption \ref{assum: kernel}, we define a Gaussian Volterra process $Z=\{Z(t),~t\in[0,T]\}$ as
\begin{equation}\label{eq: Z}
    Z(t) := \int_0^t \mathcal K(t,s)dB_1(s), \quad t\in [0,T],
\end{equation}
and immediately remark that
\begin{itemize}
    \item[1)] \textbf{(K1)} ensures that the stochastic integral in \eqref{eq: Z} is well-defined for all $t\in[0,T]$,
        
    \item[2)] \textbf{(K2)} guarantees (see e.g. \cite[Theorem 1 and Corollary 4]{ASVY2014}) that $Z$ has a modification that is H\"older continuous of any order $\lambda\in(0,H)$. Moreover, for any $\lambda\in(0,H)$, the positive random variable $\Lambda = \Lambda_\lambda$ such that
        \begin{equation}\label{eq: Lambda}
            |Z(t) - Z(s)| \le \Lambda |t-s|^\lambda, \quad t,s\in[0,T],
        \end{equation}
        can be chosen to have moments of all orders. In what follows, such H\"older continuous modification as well as $\Lambda$ with all the moments will be used.
\end{itemize}

\noindent Before proceeding further, let us give two examples of kernels satisfying Assumption \ref{assum: kernel}.

\begin{example}\label{ex: Holder kernel}{\textit{(H\"older continuous kernels)}}
    Let $\mathcal K(t,s) = \mathcal K(t-s)\mathbbm 1_{0\le s \le t \le T}$ for some $H$-H\"older continuous function $\mathcal K$, $H \in (0,1)$. Then
    \begin{align*}
        \int_0^t &(\mathcal K(t,u) - \mathcal K(s, u))^2 du = \int_0^s (\mathcal K(t - u) - \mathcal K(s - u))^2 du + \int_s^t (\mathcal K(t-u) - \mathcal K(0) +\mathcal K(0))^2 du
        \\
        & \le C \int_0^s |t-s|^{2H} du + C \int_0^{t-s} u^{2H}  du + C \int_0^{t-s} \mathcal K^2(0) du \le C|t-s|^{2\left( H \wedge \frac 1 2 \right)},
    \end{align*}
    i.e. \textbf{(K2)} is satisfied and the process $Z(t) = \int_0^t \mathcal K(t-s) dB_1(s)$, $t\in [0,T]$, has a modification that is H\"older continuous up to the order $H \wedge \frac 1 2$. Furthermore, if additionally $\mathcal K(0) = 0$, then it is easy to see that
    \[
         \mathbb E [(Z(t) - Z(s))^2] = \int_0^t (\mathcal K(t,u) - \mathcal K(s, u))^2 du \le C|t-s|^{2H},
    \]
    i.e. $Z$ has a modification that is H\"older continuous up to the order $H$.
\end{example}

\begin{example}\label{ex: fractional kernel}\label{ex: RL-fBm}{\textit{(Fractional kernel)}}
    Let $\mathcal K(t,s) = \frac{1}{\Gamma\left(H+\frac{1}{2}\right)} (t-s)^{H - \frac{1}{2}}\mathbbm 1_{0\le s \le t \le T}$ with $H\in\left(0, 1 \right)$. The process 
    \[
        Z(t) = \frac{1}{\Gamma\left(H+\frac{1}{2}\right)}\int_0^t (t-s)^{H-\frac{1}{2}} dB_1(s), \quad t\in [0,T],
    \]
    is called the \textit{Riemann-Liouville fractional Brownian motion} (RL-fBm) and it is well-known (see e.g. \cite[Lemma 3.1]{WX2022}) that, for any $s<t$,
    \begin{align*}
        \mathbb E [(Z(t) - Z(s))^2] & = \frac{1}{\Gamma^2\left(H+\frac{1}{2}\right)}\left(\int_0^s \left((t - u)^{H - \frac{1}{2}} - (s - u)^{H - \frac{1}{2}}\right)^2 du + \int_0^{t-s} u^{2H - 1} du\right)
        \\
        & \le C|t-s|^{2H}
    \end{align*}
    for some constant $C$. Thus \textbf{(K2)} holds and the RL-fBm has a modification that is H\"older continuous up to the order $H$.
\end{example}

\subsection{Sandwiched volatility} 

For a given pair $\varphi$, $\psi$: $[0,T] \to \mathbb R$ such that $\varphi(t) < \psi(t)$, $t\in[0,T]$, and any $a_1$, $a_2 \ge 0$, denote
\[
    \mathcal D_{a_1,a_2} := \{(t,y)\in[0,T]\times\mathbb R_+,~y\in(\varphi(t) + a_1, \psi(t) - a_2)\}.
\]
Consider a stochastic process $Y = \{Y(t),~t\in[0,T]\}$ defined by \eqref{eq: Y}, where the initial value $Y(0)$ and the drift $b$ satisfy the following assumption.

\begin{assum}{(Y)}\label{assum: drift} There exist $H$-H\"older continuous functions $\varphi$, $\psi$: $[0,T]\to\mathbb R$, $0 < \varphi(t) < \psi(t)$, $t\in[0,T]$, with $H$ being the same as in \textbf{(K2)} such that
    \begin{itemize}
        \item[\textbf{(Y1)}] $Y(0)$ is deterministic and $0 < \varphi(0) < Y(0) < \psi(0)$,
        \item[\textbf{(Y2)}] $b$: $\mathcal D_{0,0} \to \mathbb R$ is continuous and, moreover, for any $\varepsilon \in \left(0, \min\left\{1, \frac{1}{2}\lVert \psi - \varphi \rVert_\infty\right\}\right)$
        \[
            |b(t_1,y_1) - b(t_2, y_2)| \le \frac{c_1}{\varepsilon^{p}} \left(|y_1 - y_2| + |t_1 - t_2|^H \right), \quad (t_1, y_1), (t_2,y_2) \in \mathcal D_{\varepsilon, \varepsilon},
        \]
        where $c_1 > 0$ and $p>1$ are given constants,
        \item[\textbf{(Y3)}] 
        \begin{equation}\label{eq: explosion in b}
        \begin{aligned}
            b(t, y) &\ge \frac{c_2}{(y - \varphi(t))^{\gamma}}, &\quad (t,y) \in \mathcal D_{0,0}\setminus \mathcal D_{y_*, 0},
             \\
             b(t, y) &\le -\frac{c_2}{(\psi(t) - y)^{\gamma}}, &\quad (t,y) \in \mathcal D_{0,0}\setminus \mathcal D_{0, y_*},
        \end{aligned}
        \end{equation}
        where $y_*$, $c_2 > 0$ are given constants and $\gamma > \frac{1}{H} - 1$,
        \item[\textbf{(Y4)}] the partial derivative $\frac{\partial b}{\partial y}$ with respect to the spacial variable exists, is continuous and
        \[
            -c_3 \left( 1 + \frac{1}{(y - \varphi(t))^q} + \frac{1}{(\psi(t) - y)^q} \right) < \frac{\partial b}{\partial y}(t, y) < c_3, \quad (t,y) \in \mathcal D_{0,0},
        \]
        for some $c_3 > 0$ and $q>0$.
    \end{itemize}
\end{assum}

\noindent Processes with such drifts were studied in detail in \cite{DNMYT2020}. For reader's convenience, we collect some relevant results form \cite{DNMYT2020} regarding the volatility \eqref{eq: Y} in the statement below.
%Theorems 2.3, 2.5, 2.6, 3.1 and 3.2

\begin{theorem}\label{th: properties of Y}
    Let Assumptions \ref{assum: kernel} and \ref{assum: drift} hold. Then
    \begin{itemize}
        \item[1)] the SDE \eqref{eq: Y} has a unique strong solution $Y$ and \eqref{eq: double sandwich 1} holds with probability 1;
        
        \item[2)] for any $\lambda\in\left(\frac{1}{1+\gamma},H\right)$, where $H$ is from \textbf{(K2)}, there exist deterministic constants $L_1$ and $L_2 > 0$ (depending only on $Y(0)$, the shape of $b$ and $\lambda$) such that, with probability 1,
        \begin{equation}\label{eq: upper and lower bounds for sandwiched volatility, general case}
            \varphi(t) + \frac{L_1}{ ( L_2 + \Lambda )^{\frac{1}{\gamma \lambda + \lambda - 1}} } \le Y(t) \le \psi(t) - \frac{L_1}{ ( L_2 + \Lambda )^{\frac{1}{\gamma \lambda + \lambda - 1}} }, \quad  \quad t\in[0,T],
        \end{equation}
        where $\Lambda$ is from \eqref{eq: Lambda} and $\gamma$ is from \textbf{(Y3)}; in particular,
        \begin{equation}\label{eq: double sandwich 1}
            \varphi(t) < Y(t) < \psi(t), \quad t \in [0,T];
        \end{equation}
        
        \item[3)] for any $r \in \mathbb R$,
        \[
            \mathbb E\left[ \sup_{t\in[0,T]} \frac{1}{(Y(t) - \varphi(t))^r} \right] < \infty, \quad \mathbb E\left[ \sup_{t\in[0,T]} \frac{1}{(\psi(t) - Y(t))^r} \right] < \infty.
        \]
    \end{itemize}
\end{theorem}
\begin{remark}
    Having in mind property \eqref{eq: double sandwich 1}, we will refer to the process $Y = \{Y(t),~t\in[0,T]\}$ defined by \eqref{eq: Y} as a \textit{sandwiched} process.
\end{remark}

\noindent In addition to Theorem \ref{th: properties of Y}, we will also use the following result which can be found in \cite[Lemma 3.6]{DNMYT2022-1}.
\begin{lemma}
    Let Assumptions \ref{assum: kernel} and \ref{assum: drift} hold. Then, for each $\lambda\in \left(0, H\right)$ there exists a random variable $\Upsilon = \Upsilon_\lambda>0$ such that $\mathbb E[\Upsilon^r] < \infty$ for any $r>0$ and
    \[
        |Y(t) - Y(s)| \le \Upsilon |t-s|^\lambda, \quad t,s\in[0,T].
    \]
\end{lemma}

\subsection{Price process} 

Let Assumptions \ref{assum: kernel} and \ref{assum: drift} hold, $Z$ be a Gaussian Volterra process \eqref{eq: Z}, and $Y = \{Y(t),~t\in[0,T]\}$ be the sandwiched process \eqref{eq: Y} such that \eqref{eq: double sandwich 1} holds for some functions $0< \varphi < \psi$. We now consider the model introduced in Section \ref{s: introduction paper 5} with a risk-free asset $e^{\int_0^t \nu(s) ds}$ and a risky asset $S = \{S(t),~t\in[0,T]\}$ defined by \eqref{eq: S}, where $S(0) > 0$ is some deterministic constant and $\rho \in (-1,1)$. Clearly, the discounted price process $X$ defined by \eqref{eq: X} satisfies the SDE
\begin{equation}\label{eq: X dynamics}
    X(t) = X(0) + \int_0^t Y(s) X(s) \left( \rho dB_1(t) + \sqrt{1 - \rho^2} dB_2(t) \right), \quad t\in [0,T],
\end{equation}
where, for simplicity, we denote $X(0) := S(0)$. Moreover, the following Theorem holds.

\begin{theorem}\label{th: properties of price}{({\cite[Theorem 2.6]{DNMYT2022})}}
    \begin{itemize}
        \item[1)] Equations \eqref{eq: S} and \eqref{eq: X dynamics} both have unique strong solutions and
        \begin{align*}
             S(t) &= S(0) \exp\left\{ \int_0^t \left(\nu(s) - \frac{Y^2(s)}{2} \right)ds + \int_0^t Y(s) \left( \rho dB_1(t) + \sqrt{1 - \rho^2} dB_2(t) \right) \right\};
             \\
             X(t) &= X(0) \exp\left\{ - \int_0^t \frac{Y^2(s)}{2} ds + \int_0^t Y(s) \left( \rho dB_1(t) + \sqrt{1 - \rho^2} dB_2(t) \right) \right\}.
        \end{align*}
        
        \item[2)] For any $r\in\mathbb R$,
        \begin{equation}\label{eq: sup of S and X}
            \mathbb E \left[ \sup_{t\in[0,T]} S^r(t) \right] < \infty, \quad \mathbb E \left[ \sup_{t\in[0,T]} X^r(t) \right] < \infty.
        \end{equation}
    \end{itemize}
\end{theorem}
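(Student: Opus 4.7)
The plan is to tackle Part 1) by observing that, by Theorem \ref{th: properties of Y} together with the sandwich property \eqref{eq: double sandwich 1}, the volatility $Y$ is a continuous adapted process bounded pathwise by $\|\psi\|_\infty$ on $[0,T]$. This makes \eqref{eq: S} a linear SDE in $S$ with pathwise bounded random coefficients. I would \emph{define} the candidate solution through the exponential formula in the statement (which is well-posed since $\int_0^T Y^2(s)\,ds$ is pathwise bounded and the stochastic integral is well-defined) and then verify \eqref{eq: S} by a direct application of It\^o's formula; the expression for $X$ follows from \eqref{eq: X}. Uniqueness follows either from a Gr\"onwall argument applied to the difference of two strong solutions, or by noting that any strong solution remains strictly positive (both coefficients vanish at zero) and applying It\^o to $\log\widetilde S$, which recovers the same exponent.

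For Part 2), the explicit expression reduces the question to estimating $\sup_{t\in[0,T]} e^{rM(t)}$, where
\[
    M(t) := \int_0^t Y(s)\bigl(\rho\, dB_1(s) + \sqrt{1-\rho^2}\, dB_2(s)\bigr).
\]
The crucial observation is that $\langle M\rangle_T = \int_0^T Y^2(s)\,ds \le \|\psi\|_\infty^2\, T =: C_*$ is a \emph{deterministic} bound. By the Dambis-Dubins-Schwarz theorem, $M(t) = \widetilde B(\langle M\rangle_t)$ for some standard Brownian motion $\widetilde B$, so that $\sup_{t\in[0,T]} |M(t)| \le \sup_{s\in[0,C_*]} |\widetilde B(s)|$, and the latter has finite exponential moments of all orders. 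Combining this with the trivial deterministic bounds on $\int_0^T |\nu(s)|\,ds$ and on $\int_0^T Y^2(s)/2\,ds$ yields $\mathbb E[\sup_t S^r(t)] < \infty$ for every $r\in\mathbb R$; the bound on $X$ follows at once since $X$ and $S$ differ by a deterministic bounded multiplicative factor.

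The step I expect to require the most care is the uniformity in $r\in\mathbb R$ in Part 2). The key structural feature enabling this is the \emph{deterministic} upper sandwich $Y \le \|\psi\|_\infty$, which forces $\langle M\rangle_T$ to be deterministically bounded; without it, one would be pushed into Novikov-type arguments that typically only accommodate a limited range of exponents. Since the full strength of Theorem \ref{th: properties of Y} guarantees precisely this feature, the argument goes through for all $r$ with no restriction.
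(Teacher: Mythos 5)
Your argument is correct, but note that the paper itself gives no proof of this statement: it is imported by citation from \cite[Theorem~2.6]{DNMYT2022}. The closest in-paper analogue is the proof of Lemma \ref{lemma: moments of prices}, which establishes the same moment bound for the approximations $S_m$ uniformly in $m$, and there the route is different from yours. The paper factors $S_m^r$ as a deterministically bounded term times the stochastic exponential $\mathcal E_t(rY_m\cdot W)$, invokes Novikov (valid because $Y_m$ is bounded by $\max\psi$) to get a uniformly integrable martingale, and then controls $\mathbb E[\sup_t\mathcal E_t(rY_m\cdot W)]$ via the Burkholder--Davis--Gundy inequality together with a second-moment bound on $\mathcal E_s(rY_m\cdot W)$ obtained by another Novikov argument. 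Your route instead bounds $\sup_{t\in[0,T]}|M(t)|$ pathwise by $\sup_{s\in[0,C_*]}|\widetilde B(s)|$ via the Dambis--Dubins--Schwarz time change, where $C_* = \lVert\psi\rVert_\infty^2 T$ is deterministic, and then uses the Gaussian tails of the running maximum of Brownian motion. Both arguments hinge on exactly the same structural feature you identify --- the deterministic upper sandwich $Y\le\lVert\psi\rVert_\infty$ making $\langle M\rangle_T$ deterministically bounded --- but yours is arguably more elementary (no maximal inequality for the exponential martingale is needed) and makes the validity for \emph{all} $r\in\mathbb R$, negative included, completely transparent, since everything reduces to $\mathbb E\bigl[e^{|r|\sup_{s\le C_*}|\widetilde B(s)|}\bigr]<\infty$. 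Part 1) as you present it (define the candidate by the exponential formula, verify by It\^o, uniqueness by Gr\"onwall using the bounded random Lipschitz constant) is the standard argument and is sound.
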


\begin{remark}
    Theorem \ref{th: properties of price} implies that $X = \{X(t),~t\in[0,T]\}$ is a square integrable martingale w.r.t. the filtration $\mathbb F = \{\mathcal F_t,~t\in[0,T]\}$ generated jointly by $B_1$ and $B_2$.
\end{remark}

\begin{remark}
    We stress that \eqref{eq: sup of S and X} does not hold in general even in classical stochastic volatility models such as the Heston model (see e.g. \cite{AP2006}). This has repercussions in the possibilities of numerical simulations as well as in various applications such as pricing hedging or portfolio optimization. In our case, the existence of moments of all orders is achieved thanks to the boundedness of $Y$. 
\end{remark}

\noindent We conclude this Section with a result on regularity of the probability laws of $S$ and $X$ presented in \cite{DNMYT2022}.

\begin{theorem}{{\cite[Corollary 3.8]{DNMYT2022}}}\label{th: absolute continuity of laws of S and X} For any $t\in(0,T]$, the law of $S(t)$ (and consequently of $X(t) = e^{-\int_0^t \nu(s) ds} S(t)$) has continuous and bounded density.
\end{theorem}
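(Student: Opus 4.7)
The plan is to condition on $\mathcal{G} := \sigma(B_1(s),~s\in[0,T])$, exploiting the fact that the volatility SDE \eqref{eq: Y} is driven only by $B_1$. Since Theorem \ref{th: properties of Y} guarantees pathwise uniqueness of $Y$, the solution is $\mathcal{G}$-measurable; consequently so are
\[
    V_t := \int_0^t Y^2(s) ds \quad \text{and} \quad M_t := \int_0^t Y(s) dB_1(s).
\]
Because $B_1$ and $B_2$ are independent, conditional on $\mathcal{G}$ the process $B_2$ remains a standard Brownian motion, and therefore $\int_0^t Y(s) dB_2(s)$ is conditionally $\mathcal{N}(0,V_t)$-distributed. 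Combined with the explicit formula for $\log S(t)$ in Theorem \ref{th: properties of price}, this means that the conditional law of $\log S(t)$ given $\mathcal{G}$ is $\mathcal{N}(\mu_t,(1-\rho^2)V_t)$, where $\mu_t := \log S(0) + \int_0^t \nu(s) ds - V_t/2 + \rho M_t$.

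Next, by the sandwich bounds \eqref{eq: double sandwich 1} and the assumption $\rho^2<1$, the conditional variance satisfies
\[
    0 < (1-\rho^2)\, t \min_{[0,T]}\varphi^2 \,\le\, (1-\rho^2) V_t \,\le\, (1-\rho^2)\, t \max_{[0,T]}\psi^2 < \infty
\]
almost surely for any $t\in(0,T]$. Disintegration then yields
\[
    p_{S(t)}(y) = \mathbb{E}\left[\frac{1}{y\sqrt{2\pi(1-\rho^2)V_t}} \exp\left(-\frac{(\log y - \mu_t)^2}{2(1-\rho^2)V_t}\right)\right], \qquad y>0.
\]
Pathwise, the log-normal integrand attains its maximum in $y$ at $e^{\mu_t - (1-\rho^2)V_t}$, taking there the value $(2\pi(1-\rho^2)V_t)^{-1/2}\exp(-\mu_t + (1-\rho^2)V_t/2)$. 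The deterministic bounds above control the prefactor and the term in $V_t$, while the remaining random factor $e^{-\rho M_t}$ is integrable by Girsanov's theorem: the boundedness of $Y$ trivially implies Novikov's condition, so $\exp(-\rho M_t - \rho^2 V_t/2)$ is a true martingale of mean $1$, whence $\mathbb{E}[e^{-\rho M_t}] = \mathbb{E}[e^{\rho^2 V_t/2}]<\infty$. Taking the expectation of the pathwise maximum therefore produces a uniform bound on $p_{S(t)}$, and continuity of $p_{S(t)}$ follows by dominated convergence against this same envelope. Since $X(t)$ differs from $S(t)$ only by a positive deterministic factor, the corresponding claims for $X(t)$ are immediate.

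The main technical obstacle is the $\mathcal{G}$-measurability of $Y$ and the resulting conditional Gaussianity of $\int_0^t Y(s) dB_2(s)$ given $\mathcal{G}$; these hinge on pathwise uniqueness of the volatility SDE together with the independence of $B_1$ and $B_2$. Once they are secured, the argument reduces to uniform log-normal estimates enabled by the sandwich bounds $\varphi \le Y \le \psi$, which simultaneously keep $V_t$ away from $0$ and $\infty$ and make the exponential moments of $M_t$ finite.
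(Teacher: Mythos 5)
Your argument is correct, and it is worth noting that the paper itself does not prove this statement: it is imported by citation from \cite[Theorem 3.8]{DNMYT2022}, so there is no in-text proof to compare against. Your route --- conditioning on $\mathcal G=\sigma(B_1(s),\,s\in[0,T])$, using that $Y$ is a strong (hence $\mathcal G$-measurable) solution of a pathwise equation driven only by $B_1$, deducing that $\log S(t)$ is conditionally $\mathcal N\bigl(\mu_t,(1-\rho^2)V_t\bigr)$, and then exploiting the sandwich bounds to pin $V_t$ between two positive deterministic constants --- is exactly the canonical mechanism that makes this result work for correlated stochastic volatility models with $|\rho|<1$, and all the ingredients you invoke (pathwise uniqueness of \eqref{eq: Y}, independence of $B_1$ and $B_2$, boundedness of $Y$, Theorem \ref{th: properties of price}) are available in the paper. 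The disintegration step is legitimate by Tonelli, the pathwise supremum of the conditional log-normal density is computed correctly, and domination plus continuity in $y$ of the integrand gives both boundedness and continuity of $p_{S(t)}$. One small imprecision: the identity $\mathbb E[e^{-\rho M_t}]=\mathbb E[e^{\rho^2 V_t/2}]$ is not an equality under $\mathbb P$; writing $e^{-\rho M_t}=\mathcal E_t\, e^{\rho^2 V_t/2}$ with $\mathcal E_t:=\exp\{-\rho M_t-\tfrac{\rho^2}{2}V_t\}$ a true martingale (Novikov, since $Y\le\max\psi$), you get $\mathbb E[e^{-\rho M_t}]=\mathbb E[\mathcal E_t\, e^{\rho^2 V_t/2}]\le e^{\rho^2 T\max_{[0,T]}\psi^2/2}$, i.e.\ the right-hand expectation should be taken under the Girsanov-tilted measure or simply replaced by the deterministic bound on $V_t$; the finiteness you need survives either way. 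With that cosmetic fix the proof is complete and self-contained.
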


\section{Markovian approximation of the SVV model}\label{s: Markovian approximation}

In general, the SVV model \eqref{eq: S}--\eqref{eq: Y} is non-Markovian because of the Volterra Gaussian noise in the volatility dynamics. As discussed in the Introduction, while it can be viewed as a positive feature from the modeling perspective, non-Markovianity entails numerous numerical problems. In principle, one could construct Markovian approximations for the SVV model based on the following idea taking inspiration from \cite{AJE2019, CarmonaCoutin1998}. It is evident that the original kernel $\mathcal K$ in \eqref{eq: Y} can be approximated in $L^2$ by a sequence of \textit{degenerate} Volterra kernels
\[
    \mathcal K_m(t,s) = \sum_{i=0}^{m} e_{m,i}(t)f_{m,i}(s)\mathbbm 1_{s<t},
\]
where $\int_0^T e^2_{m,i}(t)dt < \infty$, $\int_0^T f^2_{m,i}(t)dt < \infty$, $i=0,...,m$, $m\ge 0$. Then one can attempt to define the approximated SVV model as
\begin{align}
    S_m(t) &= S(0) + \int_0^t \nu(s) S_m(s) ds + \int_0^t Y_m(s) S_m(s) dW(s),\label{eq: price dynamics approximated}
    \\
    Y_m(t) &= Y(0) + \int_0^t b(s, Y_m(s))ds + Z_m(t),\label{eq: sandwiched approximation}
    \\
    X_m(t) &= e^{-\int_0^t \nu(s) ds }S_m(t),\label{eq: X approximated}
\end{align}
where $Z_m$ is given by
\begin{equation}\label{eq: approximated noise}
    Z_m(t) := \int_0^t \mathcal K_m(t,s) dB_1(s) = \sum_{i=0}^{m} e_{m,i}(t)\int_0^t f_{m,i}(s)dB_1(s), \quad t\in[0,T].
\end{equation}
In this case, the $(m+2)$-dimensional process $(S_m, Y_m, U_{m,0}, ..., U_{m,m})$, where 
\begin{equation}\label{eq: finite-dimensional Markovian approximation}
\begin{aligned}
    U_{m,i}(t) &= \int_0^t f_{m,i}(s) dB_1(s), \quad i = 0,1,...,m,
\end{aligned}
\end{equation}
is Markovian w.r.t. the filtration $\mathbb F = \{\mathcal F_t,~t\in[0,T]\}$ generated by $(B_1, B_2)$ and, \textit{intuitively}, should approximate the original non-Markovian SVV model if $\mathcal K_m$ is close enough to $\mathcal K$. However, such intuition is not simple to back up analytically due to the explosion \eqref{eq: explosion in b} of the drift $b$:
\begin{itemize}
    \item Assumption \textbf{(Y3)} demands $\gamma > \frac{1}{H}-1$, i.e. the order of the drift explosion in \eqref{eq: explosion in b} is coupled with the H\"older regularity of the Volterra noise $Z$ -- changing the kernel may violate \textbf{(Y3)} and hence \eqref{eq: sandwiched approximation} may not have a solution;

    \item it is not clear why $Y_m$ approximates $Y$ in $L^2$: the standard reasoning via Gronwall's inequality fails, again because of the explosive drift.
\end{itemize}

\noindent The goal of this section is to overcome the problems mentioned above and study the convergence of approximations \eqref{eq: price dynamics approximated}--\eqref{eq: X approximated}.

\subsection{Admissible approximating kernels}

As noted above, we want to replace the kernel $\mathcal K$ with its appropriate $L^2$-approximations $\mathcal K_m$, $m\ge 1$, and then consider stochastic processes $Z_m = \{Z_m(t),~t\in[0,T]\}$ and $Y_m = \{Y_m(t),~t\in[0,T]\}$ defined by \eqref{eq: approximated noise} and \eqref{eq: sandwiched approximation} respectively. However, we need to make sure that the solution to \eqref{eq: sandwiched approximation} exists, i.e. Theorem \ref{th: properties of Y} holds, for all $m\ge 1$. Moreover, we will need that bounds of the type \eqref{eq: upper and lower bounds for sandwiched volatility, general case} hold for each $Y_m$ with constants $L_1$, $L_2$ that do not depend on $m$. In order to ensure that, we introduce the following assumption.

\begin{assum}{{(Km)}}\label{assum: approx kernels} 
    The sequence of kernels $\{\mathcal K_m, m\ge 1\}$ is such that each function $\mathcal K_m$: $[0,T]^2 \to \mathbb R$ is a Volterra kernel, i.e. $\mathcal K_m(t,s) = 0$ whenever $t<s$, and
    \begin{itemize}
        \item[\textbf{(Km1)}] for each $m\ge 1$, $\mathcal K_m$ is square integrable, i.e.
        \[
            \int_0^T \int_0^T \mathcal K_m^2(t,s)ds dt = \int_0^T \int_0^t \mathcal K_m^2(t,s)dsdt < \infty;
        \]
        
        \item[\textbf{(Km2)}] for each $m\ge 1$ and $\lambda \in (0,H)$ with $H$ being from \textbf{(K2)},
        \[
            \int_0^t \left( \mathcal K_m(t,u) - \mathcal K_m(s,u) \right)^2 du \le \ell^2_{\lambda} (t-s)^{2\lambda}, \quad 0\le s \le t\le T,
        \]
        where $\ell_{\lambda} > 0$ is a constant that possibly depends on $\lambda$ but does not depend on $m$.
    \end{itemize}
\end{assum}

\noindent In general, it is not clear whether a given $L^2$-approximation $\{\mathcal K_m,~m\ge 1\}$ of the kernel $\mathcal K$ satisfies Assumption \ref{assum: approx kernels}, so one has to check it separately. Luckily, for kernels from Examples \ref{ex: Holder kernel} and \ref{ex: fractional kernel} above, such approximations exist and have relatively simple shapes. 

\begin{example}[\textit{H\"older continuous kernels}]\label{ex: Holder kernel approx}
    Assume that $\mathcal K(t,s) = \mathcal K(t-s) \mathbbm 1_{s<t}$ with $\mathcal K(0) = 0$ and $\mathcal K \in C^H([0,T])$ for some $H\in (0,1)$, i.e.
    \begin{equation}\label{eq: Holder condition for kernel}
        |\mathcal K(t) - \mathcal K(s)| \le C_H |t-s|^H, \quad s,t \in [0,T].
    \end{equation}
    Let $\mathcal K_m$ be the corresponding Bernstein polynomial of order $m$ defined as
    \begin{equation}\label{eq: Bernstein polynomial}
    \begin{aligned}
        \mathcal K_m(t) &= \frac{1}{T^m} \sum_{i=0}^m \mathcal K\left(\frac{Ti}{m}\right) \binom{m}{i} t^i (T-t)^{m-i} = \sum_{i=0}^m \left( \frac{1}{T^i} \sum_{j=0}^i (-1)^{i-j} \mathcal K\left(\frac{Tj}{m}\right) \binom{m}{j} \binom{m-j}{i-j} \right) t^i
        \\
        & =: \sum_{i=0}^m \kappa_{m,i} t^i
    \end{aligned}    
    \end{equation}
    and
    \begin{equation}\label{eq: K Bernstein}
    \begin{aligned}
        \mathcal K_m(t,s) &=\mathcal K_m(t-s)\mathbbm 1_{s<t} = \sum_{i=0}^m \kappa_{m,i} \left(\sum_{j=0}^i  \binom{i}{j} (-1)^{i-j} t^j s^{i-j}\right) \mathbbm 1_{s<t} 
        \\
        &= \sum_{j=0}^m t^j \left(\sum_{i=j}^m \kappa_{m,i} \binom{i}{j} (-1)^{i-j}  s^{i-j}\right) \mathbbm 1_{s<t} =: \sum_{i=0}^{m} e_{m,i}(t)f_{m,i}(s)\mathbbm 1_{s<t}.
    \end{aligned}   
    \end{equation}
    By \cite[Proposition 2]{Mathe1999}, 
    \[
        |\mathcal K_m(t) - \mathcal K_m(s)| \le C_H |t-s|^H, \quad s,t \in [0,T],
    \]
    where $C_H$ is the same as in \eqref{eq: Holder condition for kernel} for all $m\ge 1$. Since $\mathcal K(0) = 0$, $\mathcal K_m(0) = \kappa_{m,0} = 0$ and we can write for any $0\le s \le t \le T$
    \begin{align*}
        \int_0^t (\mathcal K_m (t-u) &- \mathcal K_m(s-u))^2 du + \int_s^t \mathcal K_m^2(t-u)du \le C_H^2 T |t-s|^{2H} + \int_0^{t-s} \mathcal K_m^2(v) dv
        \\
        & \le C_H^2 T |t-s|^{2H} + C_H^2 \int_0^{t-s} v^{2H} dv \le  C_H^2 T |t-s|^{2H} + \frac{C_H^2}{2H+1} |t-s|^{2H+1}
        \\
        & \le C|t-s|^{2H},
    \end{align*}
    where $C$ does not depend on $m$, i.e. Assumption \ref{assum: approx kernels} holds. Moreover, \cite[Theorem 1]{Mathe1999} gives the estimate
    \[
        \sup_{t\in[0,T]} | \mathcal K(t) - \mathcal K_m(t) | \le C m^{-\frac{H}{2}},
    \]
    which implies that
    \begin{equation}\label{eq: Bernstein kernel error}
        \lVert \mathcal K - \mathcal K_m \rVert_{L^2([0,T])} \le C m^{-\frac{H}{2}} \to 0, \quad m \to \infty.
    \end{equation}
\end{example}

\begin{remark}
    Note that each process
    \[
        Z_m(t) = \int_0^t \mathcal K_m(t-s)dB_1(s) = \sum_{i=0}^m \kappa_{m,i} \int_0^t (t-s)^i dB_1(s), \quad t\in [0,T], 
    \]
    is actually H\"older continuous of \textit{any} order $\lambda\in(0,1)$, i.e. the solution to the SDE
    \[
        Y_m(t) = Y(0) + \int_0^t b(s, Y_m(s))ds + Z_m(t), \quad t\in [0,T],
    \]
    exists and is unique for any value $\gamma > 0$ in \textbf{(Y3)} provided that $\varphi$ and $\psi$ are both Lipschitz.
\end{remark}

\begin{example}[\textit{Fractional kernel}]\label{ex: fractional kernel approx}
    Let $H\in \left(0, \frac{1}{2}\right)$ and 
    \begin{equation}\label{eq: rough fractional kernel}
        \mathcal K(t,s) = \mathcal K(t-s) \mathbbm 1_{s<t} = \frac{(t-s)^{H-\frac{1}{2}}}{\Gamma\left(H + \frac{1}{2}\right)} \mathbbm 1_{s<t},
    \end{equation}
    i.e. $Z(t) = \int_0^t \mathcal K(t,s) dB_1(s)$ is a RL-fBm as discussed in Example \ref{ex: RL-fBm}.
    For such a kernel, an appropriate sequence $\{\mathcal K_m,~m\ge 1\}$ of degenerate $L^2$-approximations satisfying Assumption \ref{assum: approx kernels} is presented in \cite[Section 3]{AJE2019}. Namely, let $\mu$ be a measure on $\mathbb R_+$ defined as
    \[
        \mu(d\alpha) := \frac{\alpha^{-H-\frac{1}{2}}}{\Gamma\left(H+ \frac{1}{2}\right)\Gamma\left(\frac{1}{2} - H\right)}d\alpha,
    \]
    \[
        \sigma_{m,i} := \int_{\tau_{m,i-1}}^{\tau_{m,i}} \mu(d\alpha), \quad \alpha_{m,i} := \frac{1}{\sigma_{m,i}} \int_{\tau_{m, i-1}}^{\tau_{m,i}} \alpha \mu(d\alpha), \quad i = 1,...,m,
    \]
    where $0 = \tau_{m,0} < \tau_{m,1} < ... < \tau_{m,m}$ is such that
    \begin{equation}\label{eq: rough fractional kernel condition on partition}
        \tau_{m,m} \to \infty, \quad \sum_{i=1}^m \int_{\tau_{m, i-1}}^{\tau_{m,i}} (\alpha_{m,i} - \alpha)^2 \mu(d\alpha) \to 0, \quad m\to\infty.
    \end{equation}
    Denote for $m\ge 1$
    \begin{equation}\label{eq: rough fractional kernel approximation}
        \mathcal K_m(t-s)\mathbbm 1_{s<t} = \sum_{i=1}^m \sigma_{m,i}  e^{-\alpha_{m,i}(t-s)}\mathbbm 1_{s<t} = \sum_{i=1}^m \left(\sigma_{m,i} e^{-\alpha_{m,i}t}\right)   e^{\alpha_{m,i}s}\mathbbm 1_{s<t} =: \sum_{i=1}^{m} e_{m,i}(t)f_{m,i}(s)\mathbbm 1_{s<t}.
    \end{equation}
    By \cite[Lemma 5.2]{AJE2019}, the sequence $\{\mathcal K_m,~m\ge 1\}$ satisfies Assumption \ref{assum: approx kernels} and, moreover, by \cite[Proposition 3.3]{AJE2019},
    \[
        \lVert \mathcal K - \mathcal K_m \rVert_{L^2([0,T])} \to 0, \quad m\to\infty.
    \]
    Note that the values $0 = \tau_{m,0} < \tau_{m,1} < ... < \tau_{m,m}$ satisfying \eqref{eq: rough fractional kernel condition on partition} can be chosen in multiple ways. For example, as suggested in \cite[Subsection 3.2]{AJE2019}, one can put
    \begin{equation}\label{eq: possible choice of tau}
        \tau_{m,i} := \frac{1}{T}\left(\frac{\sqrt{10}(1-2H)}{5 - 2H}\right)^{\frac{2}{5}} im^{-\frac{1}{5}}, \quad i = 0, 1, 2, ..., m.    
    \end{equation}
    In this case,
    \begin{equation}\label{eq: fractional kernel error}
        \lVert \mathcal K - \mathcal K_m \rVert_{L^2([0,T])} \le C_H m^{-\frac{4H}{5}},
    \end{equation}
    where  $C_H$ is a positive constant that depends only on the Hurst parameter $H \in \left(0, \frac{1}{2}\right)$.
\end{example}

\begin{remark}\label{rem: OU}
    Putting $e_{m,i}(t) := \sigma_{m,i} e^{-\alpha_{m,i} t}$, $f_{m,i}(s) := e^{\alpha_{m,i} s}$ in Example \ref{ex: fractional kernel approx}, we obtain that each process of the form 
    \[
         e_{m,i}(t) U_{m, i}(t) = \sigma_{m,i}\int_0^t  e^{-\alpha_{m,i}(t-s)} dB_1(s) =: V_{m,i}(t), \quad t\in [0,T], 
    \]
    from \eqref{eq: finite-dimensional Markovian approximation} is an Ornstein-Uhlenbeck process satisfying the SDE
    \[
        V_{m,i}(t) = -\alpha_{m,i} \int_0^t V_{m,i}(s) ds + \sigma_{m,i} B_1(t), \quad t\in [0,T].
    \]
\end{remark}

\noindent Just like in Subsection \ref{subsec: noise}, we see here that Assumption \ref{assum: approx kernels} together with \cite[Theorem 1 and Corollary 4]{ASVY2014} guarantee that each $Z_m$ has a continuous modification that satisfies the $\lambda$-H\"older property for all $\lambda \in (0,H)$. In fact, \textbf{(Km2)} allows to deduce a stronger statement: the H\"older seminorms of $Z_m$, $m\ge 1$, are uniformly bounded in $L^r(\mathbb P)$. The corresponding result is presented in the next lemma.

\begin{lemma}\label{lemma: Holder continuity of approximated noises}
    Let $\mathcal K$ satisfy Assumption \ref{assum: kernel} and the sequence $\{\mathcal K_m, m\ge 1\}$ satisfy Assumption \ref{assum: approx kernels}. Then for any constant $\lambda \in \left( 0, H\right)$ there exist positive random variables $\Lambda_m$, $m\ge 1$, such that for each $m\ge 1$, $r>0$ and $s,t\in[0,T]$
    \[
        |Z_m(t) - Z_m(s)| \le \Lambda_m |t-s|^\lambda,
    \]
    and
    \begin{equation}\label{eq: uniform bound on Holder seminorms of the noise}
        \sup_{m\ge 1} \mathbb E \Lambda_m^r < \infty.
    \end{equation}
\end{lemma}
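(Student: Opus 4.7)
The plan is to apply the same Gaussianity plus Garsia--Rodemich--Rumsey argument which is already invoked in the excerpt for $Z$ (via \cite[Theorem 1 and Corollary 4]{ASVY2014}) simultaneously to every $Z_m$, and to read off uniformity in $m$ from the fact that the constant $\ell_\lambda$ in \textbf{(K$_m$2)} does not depend on $m$.

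First I would fix $\lambda \in (0,H)$ and pick an auxiliary exponent $\lambda' \in (\lambda, H)$. Since $\mathcal K_m(s,\cdot)$ vanishes on $(s,T]$, the increment
\[
    Z_m(t) - Z_m(s) = \int_0^t (\mathcal K_m(t,u) - \mathcal K_m(s,u))\,dB_1(u)
\]
is a centered Gaussian random variable whose variance is, by \textbf{(K$_m$2)}, at most $\ell_{\lambda'}^2 |t-s|^{2\lambda'}$, with the \emph{same} constant $\ell_{\lambda'}$ for every $m$. Gaussianity then yields
\[
    \mathbb E[|Z_m(t) - Z_m(s)|^{2p}] \le (2p-1)!!\, \ell_{\lambda'}^{2p} |t-s|^{2p\lambda'}, \quad s,t\in[0,T],\ p\ge 1,
\]
with the analogous bound for $Z$. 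Applying the standard Garsia--Rodemich--Rumsey estimate
\[
    \lVert f \rVert_{C^\lambda([0,T])}^{2p} \le C_{p,\lambda,T} \int_0^T\!\!\int_0^T \frac{|f(u)-f(v)|^{2p}}{|u-v|^{2p\lambda+2}}\,du\,dv
\]
together with the above moment bound gives, for any $p$ with $2p(\lambda'-\lambda)>1$,
\[
    \mathbb E \lVert Z_m \rVert_{C^\lambda}^{2p} \le C_{p,\lambda,T}\,(2p-1)!!\, \ell_{\lambda'}^{2p} \iint_{[0,T]^2} |u-v|^{2p(\lambda'-\lambda)-2}\,du\,dv,
\]
where the right-hand side is finite and, crucially, independent of $m$; the identical bound also holds for $Z$. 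Setting $\Lambda := \lVert Z \rVert_{C^\lambda}$ and $\Lambda_m := \lVert Z_m \rVert_{C^\lambda}$ of the appropriate continuous modifications then produces the two Hölder inequalities of the statement together with $\mathbb E \Lambda^{2p} + \sup_m \mathbb E \Lambda_m^{2p} < \infty$. The bound for arbitrary $r>0$ follows by choosing $p$ with $2p \ge r$ and applying Jensen's inequality.

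No substantial obstacle is expected: the proof is the standard one for a single Gaussian Volterra process, repeated verbatim for the family $\{Z_m\}$. The one point that genuinely has to be tracked is uniformity in $m$, which is built directly into Assumption \ref{assum: approx kernels}, stated with a constant $\ell_\lambda$ shared by all $\mathcal K_m$ and mirroring the role of $\ell_\lambda$ in Assumption \ref{assum: kernel} for $\mathcal K$.
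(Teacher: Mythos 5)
Your proposal is correct and follows essentially the same route as the paper: Gaussianity of the increments plus the uniform-in-$m$ constant $\ell_\lambda$ from Assumption \textbf{(K$_m$2)}, fed into the Garsia--Rodemich--Rumsey inequality, with an auxiliary exponent gap (your $\lambda'>\lambda$ versus the paper's choice $\lambda+\tfrac{2}{p}<H$) to make the double integral converge. The only cosmetic difference is that the paper defines $\Lambda_m$ directly as the GRR functional and uses Minkowski's integral inequality for general $r$, whereas you take the H\"older seminorm and finish with Jensen; both are valid.
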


\begin{proof}
    Fix an arbitrary $\lambda \in \left(0, H\right)$, choose $\alpha > 1$ such that $\lambda + \frac{2}{\alpha} < H$ and denote 
    \begin{align*}
        \Lambda_m &:= A_{\lambda, \alpha} \left( \int_0^T \int_0^T \frac{|Z_m(t) - Z_m(s)|^\alpha}{|t-s|^{\lambda \alpha + 2}} ds dt\right)^{\frac{1}{\alpha}}, \quad m\ge 1,
    \end{align*}
    with $A_{\lambda, \alpha} := 2^{3 + \frac{2}{\alpha}} \left( \frac{\lambda \alpha + 2}{\lambda \alpha} \right)$. By the Garsia-Rodemich-Rumsey inequality (see \cite{GRR} and \cite[Lemma 1.1]{DNMYT2020}), for any $m \ge 1$ and $s,t\in[0,T]$,
    \[
        |Z_m(t) - Z_m(s)| \le \Lambda_m |t-s|^\lambda,
    \]
    and, moreover, the proof of \cite[Theorem 1]{ASVY2014} implies that $\mathbb E \Lambda^r_m < \infty$ for each $r>0$. It remains to show that $\sup_{m\ge 1}\mathbb E \Lambda^r_m < \infty$. Take any $r > \alpha$, and observe that Minkowski integral inequality, Gaussian distribution of $Z_m(t) - Z_m(s)$ and \textbf{(Km2)} yield
    \begin{align*}
        (\mathbb E[\Lambda_m^{r}])^{\frac{\alpha}{r}} & \le A^\alpha_{\lambda, \alpha} \int_0^T \int_0^T \frac{\mathbb (E[|Z_m(t)-Z_m(s)|^r])^{\frac{\alpha}{r}}}{|t-s|^{\lambda \alpha + 2}}  ds dt
        \\
        & = 2^{\frac{\alpha}{2}}\left( \frac{\Gamma\left(\frac{r+1}{2}\right)}{\sqrt{\pi}}\right)^{\frac{\alpha}{r}}  A^\alpha_{\lambda, \alpha} \int_0^T \int_0^T \frac{ \left( \int_0^{t\vee s} \left( \mathcal K_m(t,u) - \mathcal K_m(s,u) \right)^2 du\right)^{\frac{\alpha}{2}}}{|t-s|^{\lambda \alpha + 2}}  ds dt
        \\
        & \le 2^{\frac{\alpha}{2}}\left( \frac{\Gamma\left(\frac{r+1}{2}\right)}{\sqrt{\pi}}\right)^{\frac{\alpha}{r}}  A^p_{\lambda, \alpha} \ell^\alpha_{\lambda + \frac{2}{\alpha}} \int_0^T \int_0^T \frac{ |t-s|^{\lambda \alpha + 2}}{|t-s|^{\lambda \alpha + 2}}  ds dt
        \\
        & \le 2^{\frac{\alpha}{2}} T^2 \left( \frac{\Gamma\left(\frac{r+1}{2}\right)}{\sqrt{\pi}}\right)^{\frac{\alpha}{r}}  A^\alpha_{\lambda, \alpha} \ell^\alpha_{\lambda + \frac{2}{\alpha}}
    \end{align*}
    and thus
    \[
        \sup_{m \ge 1} E[\Lambda_m^{r}] \le 2^{\frac{r}{2}} T^\frac{2r}{\alpha} \frac{\Gamma\left(\frac{r+1}{2}\right)}{\sqrt{\pi}}  A^r_{\lambda, \alpha} \ell^r_{\lambda + \frac{2}{\alpha}} < \infty,
    \]
    which ends the proof.
\end{proof}

\subsection{Approximation of the volatility}

Next, consider a family of sandwiched process $Y_m = \{Y_m(t),~t\in[0,T]\}$, $m\ge 1$, defined by equations of the form \eqref{eq: sandwiched approximation}. Note that the conditions of Theorem \ref{th: properties of Y} are met for all $m\ge 1$, i.e. each $Y_m$ is well-defined. Since constants $L_1$, $L_2$ in \eqref{eq: upper and lower bounds for sandwiched volatility, general case} depend only on $Y(0)$, the shape of $b$ and $\lambda \in \left(\frac{1}{1+\gamma}, H\right)$ that can be chosen jointly for all $m\ge 1$, we have that
\begin{equation}\label{eq: upper and lower bounds for sandwiched volatility, approximations}
    \varphi(t) + \frac{L_1}{(L_2 + \Lambda_m)^{\frac{1}{\gamma \lambda + \lambda - 1}}} < Y_m(t) < \psi(t) - \frac{L_1}{(L_2 + \Lambda_m)^{\frac{1}{\gamma \lambda + \lambda - 1}}}, \quad t\in [0,T], \quad m\ge 1.
\end{equation}
Moreover, the following result is true.

\begin{lemma}\label{lemma: Holder continuity of approximated volatilities}
    Let Assumptions \ref{assum: kernel}, \ref{assum: drift} and \ref{assum: approx kernels} hold. Then, for all $\lambda\in \left(0, H\right)$, there exists a sequence of positive random variables $\{\Upsilon_m,~m\ge 1\}$ such that $\sup_{m\ge 1} \mathbb E[\Upsilon^r_m] < \infty$ for any $r>0$ and
    \[
        |Y_m(t) - Y_m(s)| \le \Upsilon_m |t-s|^\lambda, \quad t,s\in[0,T].
    \]
\end{lemma}
 
\begin{proof}
    It is enough to prove the claim for any fixed $\lambda\in \left(\frac{1}{1+\gamma}, H\right)$. Taking the corresponding $\Lambda_m$ from Lemma \ref{lemma: Holder continuity of approximated noises} and proceeding exactly as in the proof of \cite[Lemma 3.6]{DNMYT2022-1}, one can see that
    \[
        |Y_m(t) - Y_m(s)| \le C\left(1+\Lambda_m + \frac{(L_2 + \Lambda_m)^{\frac{p}{\gamma \lambda + \lambda - 1}}}{L_1^p}\right)|t-s|^\lambda,
    \]
    where $C>0$ is a deterministic constant that does not depend on $m$, $p$ is from Assumption \textbf{(Y2)} and $L_1$, $L_2$ are from \eqref{eq: upper and lower bounds for sandwiched volatility, approximations}. Putting $\Upsilon_m:= C\left(1+\Lambda_m + \frac{(L_2 + \Lambda_m)^{\frac{p}{\gamma \lambda + \lambda - 1}}}{L_1^p}\right)$ and taking into account \eqref{eq: uniform bound on Holder seminorms of the noise}, it is easy to see that $\sup_{m\ge 1} \mathbb E[\Upsilon^r_m] < \infty$ for any $r>0$ as required.
\end{proof}

The next result gives a pathwise estimate of distance between $Y$ and $Y_m$.

\begin{theorem}\label{th: approximation of volatility}
    Let the processes $Y$ and $Y_m$, $m\ge1$, satisfy Assumptions \ref{assum: kernel}, \ref{assum: drift} and \ref{assum: approx kernels}. Then, for each $m\ge 1$, there exist a deterministic constant $C>0$ that does not depend on $m$ or $t$ and a random variable $\xi_m$ that does not depend on $t$ such that, for any $r>0$,
    \begin{equation}\label{eq: xim uniformly bounded moments}
        \sup_{m\ge 1} \mathbb E [\xi_m^r] < \infty
    \end{equation}
    and, for any $t\in[0,T]$,
    \[
        |Y(t) - Y_m(t)| \le C\left( |Z(t) - Z_m(t)| + \xi_m\int_0^t |Z(u) - Z_m(u)| du \right).
    \]
\end{theorem}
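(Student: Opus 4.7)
My plan is to reduce the theorem to a linear scalar ODE and then invoke the pointwise sandwich bounds of Theorem \ref{th: properties of Y} (applied to both $Y$ and each $Y_m$) together with the one-sided control on $\partial_y b$ provided by \textbf{(Y4)}.

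More precisely, set $R_m(t):=Z(t)-Z_m(t)$, $\Delta_m(t):=Y(t)-Y_m(t)$, and define
\[
    D_m(t):=\Delta_m(t)-R_m(t)=\int_0^t\bigl(b(s,Y(s))-b(s,Y_m(s))\bigr)ds,
\]
which is an absolutely continuous function of $t$ with $D_m(0)=0$. By \textbf{(Y4)}, $b$ is $C^1$ in $y$ on $\mathcal D_{0,0}$, so the mean value theorem yields a measurable $\theta_s$ lying between $Y(s)$ and $Y_m(s)$ with
\[
    D_m'(t)=\partial_y b(t,\theta_t)\bigl(D_m(t)+R_m(t)\bigr).
\]
Writing $\beta_m(t):=\partial_y b(t,\theta_t)$, this is a linear ODE in $D_m$ with forcing $\beta_m R_m$, whose variation-of-constants solution is
\[
    D_m(t)=\int_0^t \exp\!\Bigl(\int_s^t\beta_m(u)du\Bigr)\beta_m(s)R_m(s)ds.
\]

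The key point is that both factors in the integrand admit $m$-uniform controls. For the exponential kernel, the upper bound $\beta_m\le c_3$ from \textbf{(Y4)} gives immediately $\exp(\int_s^t\beta_m(u)du)\le e^{c_3 T}$, which is a \emph{deterministic} constant independent of $m$. For $|\beta_m(s)|$, the lower bound in \textbf{(Y4)} gives
\[
    |\beta_m(s)|\le c_3\Bigl(2+(\theta_s-\varphi(s))^{-q}+(\psi(s)-\theta_s)^{-q}\Bigr),
\]
and since $\theta_s\in[\min(Y(s),Y_m(s)),\max(Y(s),Y_m(s))]$, the sandwich bounds \eqref{eq: upper and lower bounds for sandwiched volatility, general case} and \eqref{eq: upper and lower bounds for sandwiched volatility, approximations} yield
\[
    \theta_s-\varphi(s)\wedge\psi(s)-\theta_s\ge\frac{L_1}{(L_2+\Lambda\vee\Lambda_m)^{1/(\gamma\lambda+\lambda-1)}},
\]
so $|\beta_m(s)|$ is bounded by a polynomial in $\Lambda\vee\Lambda_m$ with $s$-independent coefficients.

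Defining
\[
    \xi_m:=2c_3 e^{c_3 T}\Bigl(1+L_1^{-q}(L_2+\Lambda\vee\Lambda_m)^{q/(\gamma\lambda+\lambda-1)}\Bigr),
\]
I obtain $|D_m(t)|\le \xi_m\int_0^t|R_m(s)|ds$ and hence $|\Delta_m(t)|\le|R_m(t)|+\xi_m\int_0^t|R_m(s)|ds$, which is the claimed estimate with $C=1$. The uniform moment bound \eqref{eq: xim uniformly bounded moments} then follows at once from Lemma \ref{lemma: Holder continuity of approximated noises}, since $\xi_m$ is a polynomial in $\Lambda\vee\Lambda_m$ and the latter has $m$-uniformly bounded moments of every order.

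The main subtlety, and where I expect a careful argument is needed, is in handling the random variable $\theta_s$: I must ensure that the lower bounds on $\theta_s-\varphi(s)$ and $\psi(s)-\theta_s$ hold \emph{simultaneously for all $s\in[0,T]$} with the \emph{same} $\lambda\in(1/(1+\gamma),H)$ used in both Theorem \ref{th: properties of Y} and \eqref{eq: upper and lower bounds for sandwiched volatility, approximations}. This is exactly why Assumption \ref{assum: approx kernels} was set up to make the constants $L_1,L_2$ in \eqref{eq: upper and lower bounds for sandwiched volatility, approximations} independent of $m$. Everything else is just bookkeeping: the variation-of-constants formula avoids any need for Gronwall's lemma and thus eliminates the exponential-of-a-random-variable that would otherwise obstruct the polynomial moment bound.
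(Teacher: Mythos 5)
Your proof is correct, and it reaches the same estimate by a genuinely cleaner route than the paper. The paper discretizes $[0,t]$, sets up a recursion for $e_{n,m}=Y(t_n)-Y_m(t_n)$, introduces the discrete integrating factor $\zeta_{n,m}=\prod_k(1-\partial_y b(t_k,\Theta_{k,m})\Delta_N)$, performs an Abel summation by parts on the noise increments, controls the extra Riemann-sum error terms $\int_{t_{n-1}}^{t_n}|b(s,Y(s))-b(t_n,Y(t_n))|ds$ via the H\"older bounds of Lemma \ref{lemma: Holder continuity of approximated volatilities}, and only then lets $\Delta_N\to 0$. Your variation-of-constants formula for $D_m(t)=\int_0^t(b(s,Y(s))-b(s,Y_m(s)))ds$ is exactly the continuous-time limit of that construction ($\zeta_{n,m}$ is the discrete analogue of $\exp(-\int_0^{t_n}\beta_m)$, and the Abel summation is the discrete analogue of your integral), so the two key mechanisms coincide: the one-sided bound $\partial_y b<c_3$ from \textbf{(Y4)} makes the exponential factor a deterministic constant $e^{c_3T}$ rather than the exponential of a random variable that a naive Gronwall argument would produce, and the sandwich bounds \eqref{eq: upper and lower bounds for sandwiched volatility, general case} and \eqref{eq: upper and lower bounds for sandwiched volatility, approximations} (with a single fixed $\lambda\in(1/(1+\gamma),H)$ and $m$-independent $L_1,L_2$) turn $|\partial_y b(s,\theta_s)|$ into a polynomial in $\Lambda\vee\Lambda_m$, whose moments are $m$-uniformly bounded by Lemma \ref{lemma: Holder continuity of approximated noises}. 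What your approach buys is the elimination of the discretization error terms and of Lemma \ref{lemma: Holder continuity of approximated volatilities} from this proof entirely; the only point you should state explicitly is that the pathwise ODE $D_m'=\beta_m(D_m+R_m)$ is legitimate because $s\mapsto b(s,Y(s))-b(s,Y_m(s))$ is continuous (so $D_m$ is $C^1$) and $\beta_m$ is a.s.\ bounded on $[0,T]$ (so the integrating factor is well defined); both facts follow from the sandwich bounds you already invoke, so there is no gap.
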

\noindent The proof of Theorem \ref{th: approximation of volatility} is rather long and technical, so we present it in Appendix \ref{appendix: proof of approximation of volatility}.

%\begin{corollary}
%    For any $r\ge 1$, there exists a constant $C>0$ that does not depend on $m$ such that
%    \[
%        \int_0^T \mathbb E\left[|Y(s) - Y_m(s)|^{2r}\right]ds \le 
%    \]
%\end{corollary}
%\begin{proof}
%    Since, for any $s\in[0,T]$, the random variable $Z(s) - Z_m(s) = \int_0^s (\mathcal K(s,u) - \mathcal K_m(s,u))dB_2(u)$ is Gaussian, for any $q>0$
%    \[
%        \mathbb E \left[ |Z(s) - Z_m(s)|^{2q} \right] = C \left(\int_0^s \left(\mathcal K(s,u) - \mathcal K_m(s,u)\right)^2du\right)^q,
%    \]
%    where the constant $C$ depends only on $q$. Thus, by Theorem \ref{th: approximation of volatility}, 
%    \begin{align*}
%        \mathbb E \left[|Y(t) - Y_m(t)|^{2r}\right] &\le C \left(\mathbb E\left[|Z(t) - Z_m(t)|^{2r}\right] + \mathbb E\left[\xi^r_m \int_0^t |Z(s) - Z_m(s)|^{2r} ds\right] \right)
%        \\
%        &\le C \left(\int_0^t \left(\mathcal K(t,u) - \mathcal K_m(t,u)\right)^2du\right)^r
%    \end{align*}
%\end{proof}

\subsection{Approximation of the price}

We are now ready to study the error estimates for the approximations of the prices. For this, we first present a result on the moments of the approximating processes. Let us consider the price process $S$ given by \eqref{eq: S} and denote
\[
    W(t) := \rho B_1(t) + \sqrt{1 - \rho^2} B_2(t), \quad t\in [0,T].
\]
Note that Theorem \ref{th: properties of price} immediately implies that for each $m \ge 1$ and all $r\in\mathbb R$
\[
    \mathbb E \left[ \sup_{t\in[0,T]} S_m^r(t) \right] < \infty, \quad \mathbb E \left[ \sup_{t\in[0,T]} X_m^r(t) \right] < \infty.
\]
However, in the sequel we will require a slightly stronger result summarized in the following Lemma. 

\begin{lemma}\label{lemma: moments of prices}
    Under Assumptions \ref{assum: kernel}, \ref{assum: drift} and \ref{assum: approx kernels}, for any $r\in\mathbb R$
    \[
        \sup_{m\ge 1} \mathbb E\left[ \sup_{t\in[0,T]}S^r_m(t) \right] < \infty, \quad \sup_{m\ge 1} \mathbb E\left[ \sup_{t\in[0,T]}X^r_m(t) \right] < \infty.
    \]
\end{lemma}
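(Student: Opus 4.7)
The plan is to exploit the fact that each $Y_m$ is pathwise bounded above by $M := \max_{t\in[0,T]}\psi(t)$ and below by $\min_{t\in[0,T]}\varphi(t) > 0$, with these deterministic bounds being independent of $m$. Indeed, Theorem \ref{th: properties of Y} applies to each $Y_m$ (since Assumption \ref{assum: approx kernels} together with \ref{assum: kernel} and \ref{assum: drift} gives all the needed ingredients), so the sandwich $\varphi < Y_m < \psi$ holds a.s., uniformly in $m$. This essentially turns the stochastic volatility coefficient in \eqref{eq: price dynamics approximated} into a uniformly bounded one, after which the standard moment analysis of geometric Brownian motion goes through with constants not depending on $m$.

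Concretely, I would first apply Theorem \ref{th: properties of price} to $(S_m, Y_m)$ to obtain the closed-form
\[
    X_m(t) = X(0)\exp\left\{-\frac{1}{2}\int_0^t Y_m^2(s)\,ds + \int_0^t Y_m(s)\,dW(s)\right\}.
\]
For any $r\in\mathbb R$, I would then rewrite
\[
    X_m^r(t) = X(0)^r \exp\!\left\{\tfrac{r^2-r}{2}\int_0^t Y_m^2(s)\,ds\right\}\, \mathcal E\!\left(r\int_0^\cdot Y_m(s)\,dW(s)\right)_{\!t},
\]
where $\mathcal E(\cdot)$ denotes the Doléans-Dade exponential. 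Since $|Y_m(s)|\le M$ pathwise uniformly in $m$, Novikov's criterion holds with a uniform constant and $\mathcal E(r\int_0^\cdot Y_m\,dW)$ is a genuine positive $\mathbb F$-martingale; moreover, the deterministic prefactor is dominated by $e^{|r^2-r|M^2T/2}$ uniformly in $m$.

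To control the supremum over $t$, I would apply Doob's $L^q$ maximal inequality with some fixed $q>1$ to the positive martingale $\mathcal E(r\int_0^\cdot Y_m\,dW)$, reducing matters to bounding $\mathbb E[\mathcal E(r\int_0^T Y_m\,dW)^q]$. A second application of the stochastic-exponential identity $\mathcal E(Z)^q = \mathcal E(qZ)\exp\{\tfrac{q^2-q}{2}\langle Z\rangle\}$, combined once more with $|Y_m|\le M$, shows that this $q$-th moment is dominated by $e^{|q^2-q|r^2M^2T/2}$, again uniformly in $m$. Combining these estimates yields $\sup_{m\ge 1}\mathbb E[\sup_{t\in[0,T]}X_m^r(t)] < \infty$, and the matching bound for $S_m^r$ follows immediately from $S_m(t)=e^{\int_0^t\nu(s)\,ds}X_m(t)$ and continuity (hence boundedness) of $\nu$ on $[0,T]$.

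I do not expect a genuine obstacle: the only $m$-dependent ingredient is $Y_m$, and it is uniformly bounded above by $\|\psi\|_\infty$ pathwise; once this observation is in place the rest of the argument is the textbook moment computation for geometric Brownian motion with bounded stochastic volatility. Notably, the finer sandwich estimate \eqref{eq: upper and lower bounds for sandwiched volatility, approximations}, which does involve $\Lambda_m$, is not needed here—only the crude deterministic bound $Y_m<\psi$ is used.
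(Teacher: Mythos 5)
Your proposal is correct and follows essentially the same route as the paper: both rest on the uniform pathwise bound $Y_m \le \max_{t\in[0,T]}\psi(t)$, the explicit exponential form of $S_m$, the splitting into a deterministic prefactor and the stochastic exponential $\mathcal E(r\int_0^\cdot Y_m\,dW)$, Novikov's criterion, and the identity reducing powers of a stochastic exponential to another stochastic exponential times a bounded correction. The only (cosmetic) difference is that you control $\sup_t \mathcal E_t$ via Doob's $L^q$ maximal inequality while the paper uses the Burkholder--Davis--Gundy inequality on the SDE for $\mathcal E$; both yield the same uniform-in-$m$ bound.
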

\begin{proof}
    It is sufficient to prove the result for $S_m$. Using the same argument as in the proof of \cite[Theorem 2.6]{DNMYT2022}, it is straightforward to show that 
    \begin{align}\label{proofeq: uniform expectation for approx prices}
        \mathbb E\left[\sup_{t\in[0,T]}S_m^r(t)\right] &\le C\left( 1+C_1 \sqrt{T}\sup_{s\in[0,T]}\psi(s) \exp\left\{ \frac{r^2T}{2}\max_{s\in[0,T]}\psi^2(s) \right\}\right),
    \end{align}
    where
    \[
        C := S^r(0)\exp\left\{|r|T\max_{s\in[0,T]}|\nu(s)| + \frac{|r|(|r|+1)T}{2}\max_{s\in[0,T]}\psi^2(s)\right\}
    \]
    and the constant $C_1$ comes from the Burkholder-Davis-Gundy inequality. The result follows from the fact that the right-hand side of \eqref{proofeq: uniform expectation for approx prices} does not depend on $m$.
    
\end{proof}

We are now finally ready to proceed to the main result of the Subsection.

\begin{theorem}\label{th: approx with difference kernel}
    Let Assumptions \ref{assum: kernel}, \ref{assum: drift}, \ref{assum: approx kernels} hold and $r\ge 2$ be fixed.
    \begin{itemize}
        \item[1)] If $\mathcal K, \mathcal K_m \in L^r([0,T]^2)$, $m\ge 1$, then there exists a constant $C>0$ that does not depend on $m$ such that
        \[
            \mathbb E\left[\sup_{t\in[0,T]}|S(t) - S_m(t)|^r\right] \le C\lVert \mathcal K - \mathcal K_m \rVert^r_{L^r([0,T]^2)}, \quad \mathbb E\left[\sup_{t\in[0,T]}|X(t) - X_m(t)|^r\right] \le C\lVert \mathcal K - \mathcal K_m \rVert^r_{L^r([0,T]^2)}.
        \]
        \item[2)] If $\mathcal K, \mathcal K_m \in L^2([0,T]^2)$, $m\ge 1$, \emph{are of the difference type}, i.e.
        \[
        \mathcal K(t,s) = \mathcal K(t-s)\mathbbm 1_{s<t}, \quad \mathcal K_m(t,s) = \mathcal K_m(t-s)\mathbbm 1_{s<t}, \quad t,s \in [0,T],
        \]
        then there exists constant $C>0$ that does not depend on $m$ such that
        \[
            \mathbb E\left[\sup_{t\in[0,T]}|S(t) - S_m(t)|^r\right] \le C \lVert \mathcal K - \mathcal K_m \rVert^r_{L^2([0,T])}, \quad \mathbb E\left[\sup_{t\in[0,T]}|X(t) - X_m(t)|^r\right] \le C \lVert \mathcal K - \mathcal K_m \rVert^r_{L^2([0,T])}.
        \]
    \end{itemize}
\end{theorem}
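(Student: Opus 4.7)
The strategy is to work at the level of $L^r(\mathbb P)$-norms throughout, exploiting Minkowski's integral inequality (available since $r/2 \ge 1$) together with the Gaussian moment identity for $Z - Z_m$. Writing $D := S - S_m$ and decomposing the diffusion coefficient as $Y S - Y_m S_m = Y D + (Y - Y_m) S_m$, the difference $D$ satisfies
\begin{equation*}
    D(t) = \int_0^t \nu(s) D(s)\, ds + \int_0^t Y(s) D(s)\, dW(s) + \int_0^t (Y(s) - Y_m(s))\, S_m(s)\, dW(s).
\end{equation*}
Taking the supremum over $[0,t]$ and $r$-th moments, and applying Burkholder-Davis-Gundy together with Jensen's inequality to the first two terms (which is legitimate since $\nu$ is continuous and $Y$ is uniformly bounded by Theorem \ref{th: properties of Y}), we obtain
\begin{align*}
    \mathbb E\Bigl[\sup_{s \le t}|D(s)|^r\Bigr] &\le C \int_0^t \mathbb E\Bigl[\sup_{v \le u}|D(v)|^r\Bigr]\, du + C\, \mathcal M(T),
    \\
    \mathcal M(T) &:= \mathbb E\Bigl[\Bigl(\int_0^T (Y-Y_m)^2 S_m^2\, du\Bigr)^{r/2}\Bigr].
\end{align*}
Gronwall's lemma then reduces the problem to estimating $\mathcal M(T)$ by the right-hand side of the claim.

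To treat $\mathcal M(T)$, Minkowski's integral inequality in $L^{r/2}(\mathbb P)$ gives
\begin{equation*}
    \mathcal M(T) \le \Bigl(\int_0^T \bigl\| (Y(u)-Y_m(u))\, S_m(u) \bigr\|^2_{L^r(\mathbb P)}\, du\Bigr)^{r/2}.
\end{equation*}
We separate the two factors by a Hölder splitting with a small exponent $\varepsilon > 0$: since Lemma \ref{lemma: moments of prices} gives $\sup_m \sup_u \|S_m(u)\|_{L^{r(1+\varepsilon)/\varepsilon}(\mathbb P)} < \infty$, we reach
\begin{equation*}
    \bigl\| (Y-Y_m)(u)\, S_m(u) \bigr\|_{L^r(\mathbb P)} \le C_{\varepsilon}\, \bigl\| Y(u)-Y_m(u) \bigr\|_{L^{r(1+\varepsilon)}(\mathbb P)}.
\end{equation*}
The $L^{r(1+\varepsilon)}(\mathbb P)$-norm of $Y-Y_m$ is controlled via Theorem \ref{th: approximation of volatility}, which dominates $|Y(u) - Y_m(u)|$ by $C(|Z(u)-Z_m(u)| + \xi_m \int_0^u |Z(v)-Z_m(v)|\,dv)$. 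A further Hölder splitting using \eqref{eq: xim uniformly bounded moments} handles the $\xi_m$-factor, Minkowski's integral inequality is applied to $\int_0^u |Z-Z_m|\,dv$, and the Gaussian identity $\|Z(v)-Z_m(v)\|_{L^p(\mathbb P)} = c_p\, g(v)$ with $g(v) := \bigl(\int_0^v (\mathcal K(v,s) - \mathcal K_m(v,s))^2\, ds\bigr)^{1/2}$ yields
\begin{equation*}
    \bigl\| Y(u) - Y_m(u) \bigr\|_{L^{r(1+\varepsilon)}(\mathbb P)} \le C\Bigl( g(u) + \int_0^T g(v)\, dv \Bigr).
\end{equation*}
Squaring, integrating in $u \in [0,T]$, and using Cauchy-Schwarz on $\bigl(\int_0^T g\,dv\bigr)^2 \le T \int_0^T g^2\, dv$, we conclude $\mathcal M(T) \le C\bigl(\int_0^T g(u)^2\, du\bigr)^{r/2}$.

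For part 1, $\int_0^T g(u)^2\, du = \|\mathcal K - \mathcal K_m\|^2_{L^2([0,T]^2)}$ and the embedding $L^r([0,T]^2) \hookrightarrow L^2([0,T]^2)$, valid for $r \ge 2$ on the finite-measure space $[0,T]^2$, upgrades this to the claimed $\|\mathcal K - \mathcal K_m\|^r_{L^r([0,T]^2)}$ bound. For part 2, the difference-kernel structure gives $g(u)^2 = \int_0^u (\mathcal K(\tau) - \mathcal K_m(\tau))^2\, d\tau \le \|\mathcal K - \mathcal K_m\|^2_{L^2([0,T])}$ uniformly in $u$, from which the $L^2([0,T])$ bound follows directly. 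The main obstacle, and what makes a naive approach fail, is the tension between the target $L^r$-norm of $\mathcal K - \mathcal K_m$ and the Gaussian moment identity, which inherently produces an $L^2$-type quantity: a crude Cauchy-Schwarz splitting of $(Y-Y_m) S_m$ in $L^2(\mathbb P)$ would route the argument through the $2r$-th moment of $Z - Z_m$ and, via Jensen, land on an $L^{2r}([0,T]^2)$-bound that is \emph{not} comparable to the target $L^r([0,T]^2)$. The remedy is to concede only an infinitesimal amount of $\mathbb P$-integrability of $Y-Y_m$ in exchange for essentially full integrability of $S_m$ and $\xi_m$, both of which remain harmless thanks to Lemma \ref{lemma: moments of prices} and \eqref{eq: xim uniformly bounded moments}.
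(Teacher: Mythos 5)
Your proof is correct, and its skeleton---an SDE for the difference $S-S_m$, Burkholder--Davis--Gundy, Gronwall, the pathwise bound of Theorem \ref{th: approximation of volatility} for $|Y-Y_m|$, and Gaussianity of $Z-Z_m$---is the same as the paper's. Where you genuinely diverge is in the moment bookkeeping. The paper splits $\mathbb E[S^r(s)|Z(s)-Z_m(s)|^r]$ by Cauchy--Schwarz in $\mathbb P$, uses the Gaussian identity $(\mathbb E[|Z(s)-Z_m(s)|^{2r}])^{1/2}= c\,(\int_0^s(\mathcal K(s,u)-\mathcal K_m(s,u))^2du)^{r/2}$, and then applies Jensen in the \emph{time} variable to convert $\int_0^T(\int_0^s(\mathcal K-\mathcal K_m)^2du)^{r/2}ds$ into $C\lVert\mathcal K-\mathcal K_m\rVert^r_{L^r([0,T]^2)}$. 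You instead run Minkowski's integral inequality in $L^{r/2}(\mathbb P)$ together with a near-endpoint H\"older splitting, which lands directly on $(\int_0^T g^2\,du)^{r/2}=\lVert\mathcal K-\mathcal K_m\rVert^r_{L^2([0,T]^2)}$ and only then passes to the $L^r$-norm by the embedding on the finite-measure square. A pleasant byproduct is that your route proves part 1) with the smaller $L^2([0,T]^2)$-norm on the right-hand side, so the hypothesis $\mathcal K,\mathcal K_m\in L^r([0,T]^2)$ becomes superfluous. However, your closing diagnosis of why the ``naive'' Cauchy--Schwarz route fails is mistaken: since $Z(s)-Z_m(s)$ is Gaussian, all its moments are comparable, so $(\mathbb E[|Z(s)-Z_m(s)|^{2r}])^{1/2}$ is a constant times $g(s)^r$ --- exactly the quantity your own route produces --- and Jensen in time, applied \emph{after} the square root (i.e.\ to the exponent $r/2$ on $\int_0^s(\mathcal K-\mathcal K_m)^2du$), yields the $L^r([0,T]^2)$-norm, not an incomparable $L^{2r}$ one. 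That ``naive'' route is precisely the paper's proof and it closes without the $\varepsilon$-H\"older detour; your extra machinery is not needed for correctness, it only buys the sharper norm in part 1).
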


\begin{proof} It is sufficient to prove both claims for the non-discounted price $S$.

\noindent    \textit{Item 1.} For any $t\in[0,T]$, \eqref{eq: S}, \eqref{eq: price dynamics approximated} and the continuity of $\nu$ imply
    \begin{align*}
        |S(t) - S_m(t)|^r &\le C \left(\int_0^t |S(s) - S_m(s)|^r ds + \left|\int_0^t (Y(s)S(s) - Y_m(s) S_m(s))dW(s)\right|^r \right) 
        \\
        &\le C \int_0^t \sup_{u\in[0,s]}|S(u) - S_m(u)|^r ds  + C \left(\sup_{u\in[0,t]}\left|\int_0^u (Y(s)S(s) - Y_m(s) S_m(s))dW(s)\right|\right)^r , 
    \end{align*}
    whence
    \begin{align*}
        \sup_{u\in[0,t]}|S(u) - S_m(u)|^r &\le C \int_0^t \sup_{u\in[0,s]}|S(u) - S_m(u)|^r ds 
        \\
        &\quad + C \left(\sup_{u\in[0,t]}\left|\int_0^u (Y(s)S(s) - Y_m(s) S_m(s))dW(s)\right|\right)^r.
    \end{align*}
    By the Burkholder-Davis-Gundy inequality,
    \begin{equation}\label{eq: pre-Gron diff of S and Sm 1}
    \begin{aligned}
        \mathbb E&\left[\sup_{u\in[0,t]}|S(u) - S_m(u)|^r\right] 
        \\
        &\le C \int_0^t \mathbb E\left[\sup_{u\in[0,s]}|S(u) - S_m(u)|^r\right]ds + C \mathbb E\left[ \left(\int_0^t \big(Y(s)S(s) - Y_m(s) S_m(s)\big)^2 ds \right)^{\frac{r}{2}}\right]
        \\
        & \le C \int_0^t \mathbb E\left[\sup_{u\in[0,s]}|S(u) - S_m(u)|^r\right]ds  + C  \int_0^t \mathbb E\left[\big| Y(s)S(s) - Y_m(s) S_m(s)\big|^r\right] ds.
    \end{aligned}    
    \end{equation}
    Let us focus on the last summand of \eqref{eq: pre-Gron diff of S and Sm 1}. Using the fact that
    \[
        \max_{m\ge 1} \max_{t\in[0,T]} |Y_m(t)| \le \max_{t\in[0,T]}|\psi(t)|,
    \]
    we have
    \begin{equation}\label{eq: pre-Gron diff of S and Sm 2}
    \begin{aligned}
        \int_0^t &\mathbb E\left[ |Y(s)S(s) - Y_m(s) S_m(s)|^r \right] ds 
        \\
        &\le C\int_0^t \mathbb E \left[ Y^r_m(s)|S(s) - S_m(s)|^r \right]ds + C\int_0^t \mathbb E\left[ S^r(s)|Y(s) - Y_m(s)|^r\right]ds
        \\
        &\le C\int_0^t \mathbb E \left[ \sup_{u\in[0,s]}|S(u) - S_m(u)|^r \right]ds  + C\int_0^t \mathbb E\left[ S^r(s)|Y(s) - Y_m(s)|^r\right]ds.
    \end{aligned}    
    \end{equation}
    Studying again the last summand of \eqref{eq: pre-Gron diff of S and Sm 2}, we obtain by Theorem \ref{th: approximation of volatility} that there exist random variables $\xi_m$, $m\ge 1$, such that \eqref{eq: xim uniformly bounded moments} holds and
    \begin{equation}\label{eq: pre-Gron diff of S and Sm 3}
    \begin{aligned}
        \int_0^t &\mathbb E\left[ S^r(s)|Y(s) - Y_m(s)|^r \right]ds 
        \\
        & \le C\int_0^t \mathbb E\left[ S^r(s)|Z(s) - Z_m(s)|^r\right]ds + C\int_0^t \mathbb E\left[ S^r(s) \xi^r_m \int_0^s |Z(u) - Z_m(u)|^r du \right]ds.
    \end{aligned}    
    \end{equation}
    Also, by \cite[{Theorem 2.6}]{DNMYT2022},
    \[
         \mathbb E \left[\sup_{t\in[0,T]} S^{2r}(t) \right] < \infty,
    \]
    so the H\"older inequality as well as the Gaussianity of the random variable $Z(u) - Z_m(u)$ yield
    \begin{equation}\label{proofeq: price error 1} 
    \begin{aligned}
        \int_0^t &\mathbb E\left[ S^r(s)|Z(s) - Z_m(s)|^r\right]ds \le \int_0^t \left(\mathbb E\left[ S^{2r}(s)\right]\right)^{\frac{1}{2}}\left(\mathbb E\left[|Z(s) - Z_m(s)|^{2r}\right]\right)^{\frac{1}{2}}ds
        \\
        &\le C \int_0^t \left(\int_0^s (\mathcal K(s,u) - \mathcal K_m(s,u))^2 du\right)^{\frac{r}{2}} ds \le C\lVert \mathcal K - \mathcal K_m \rVert^r_{L^r([0,T]^2)}.
    \end{aligned}
    \end{equation}
    Additionally, \eqref{eq: xim uniformly bounded moments} together with \cite[{Theorem 2.6}]{DNMYT2022} imply that
    \[
        \sup_{m\ge 1} \mathbb E [\xi^{2r}_m\sup_{t\in [0,T]} S^{2r}(t) ] < \infty  
    \]
    hence
    \begin{equation}\label{proofeq: price error 2} 
    \begin{aligned}
        \int_0^t &\mathbb E\left[ S^r(s) \xi^r_m \int_0^s |Z(u) - Z_m(u)|^r du \right]ds 
        \\
        &\le \int_0^t \int_0^s\left(\mathbb E\left[ S^{2r}(s) \xi^{2r}_m \right]\right)^{\frac{1}{2}} \left(\mathbb E \left[ |Z(u) - Z_m(u)|^{2r}  \right] \right)^{\frac{1}{2}}duds
        \\
        &\le C \int_0^t \int_0^s\left(\mathbb E\left[|Z(u) - Z_m(u)|^{2r}\right] \right)^{\frac{1}{2}} duds \le C \int_0^t \int_0^s \left(\int_0^u \left(\mathcal K(u,v) - \mathcal K_m(u,v)\right)^2 dv\right)^{\frac{r}{2}} duds
        \\
        &\le C \lVert \mathcal K - \mathcal K_m \rVert^r_{L^r([0,T]^2)}.
    \end{aligned}
    \end{equation}
    Therefore, taking into account \eqref{eq: pre-Gron diff of S and Sm 1}--\eqref{proofeq: price error 2}, we see that there exists a constant $C>0$ that does not depend on $m$ or the particular choice of $t\in[0,T]$ such that
    \begin{equation}\label{proofeq: price error 3}
    \begin{aligned}
        \mathbb E&\left[\sup_{u\in[0,t]}|S(u) - S_m(u)|^r\right] \le C \int_0^t \mathbb E\left[\sup_{u\in[0,s]}|S(u) - S_m(u)|^r\right]ds + C \lVert \mathcal K - \mathcal K_m \rVert^r_{L^r([0,T]^2)},
    \end{aligned}  
    \end{equation}
    and item 1) now follows from the Gronwall's inequality.
    
\noindent     \textit{Item 2.} If $\mathcal K$ and $\mathcal K_m$ both have the form
    \[
        \mathcal K(t,s) = \mathcal K(t-s)\mathbbm 1_{s<t}, \quad \mathcal K_m(t,s) = \mathcal K_m(t-s)\mathbbm 1_{s<t}, \quad t,s \in [0,T],
    \]
    then \eqref{proofeq: price error 1}--\eqref{proofeq: price error 2} can be rewritten as
    \begin{align*}
        \int_0^t \mathbb E\left[ S^r(s)|Z(s) - Z_m(s)|^r\right]ds & \le C \int_0^t \left(\int_0^s (\mathcal K(s-u) - \mathcal K_m(s-u))^2 du\right)^{\frac{r}{2}} ds
        \\
        & \le  C \int_0^t \lVert \mathcal K - \mathcal K_m \rVert^r_{L^2([0,T])} ds \le C\lVert \mathcal K - \mathcal K_m \rVert^r_{L^2([0,T])}
    \end{align*}
    and
    \begin{align*}
        \int_0^t &\mathbb E\left[ S^r(s) \xi^r_m \int_0^s |Z(u) - Z_m(u)|^r du \right]ds \le C \int_0^t \int_0^s \left(\int_0^u \left(\mathcal K(u-v) - \mathcal K_m(u-v)\right)^2 dv\right)^{\frac{r}{2}} duds
        \\
        &\le C \int_0^t \int_0^s \lVert \mathcal K - \mathcal K_m \rVert^r_{L^2([0,T])} duds \le C\lVert \mathcal K - \mathcal K_m \rVert^r_{L^2([0,T])}.
    \end{align*}
    Whence \eqref{proofeq: price error 3} becomes
    \begin{equation*}
    \begin{aligned}
        \mathbb E&\left[\sup_{u\in[0,t]}|S(u) - S_m(u)|^r\right] \le C \int_0^t \mathbb E\left[\sup_{u\in[0,s]}|S(u) - S_m(u)|^r\right]ds + C \lVert \mathcal K - \mathcal K_m \rVert^r_{L^2([0,T])}.
    \end{aligned}    
    \end{equation*}
    The required result then again follows from Gronwall's inequality.
\end{proof}

\section{Mean-variance hedging in the SVV model}\label{s: NA-derivative estimates}

In this Section, we will deal with the \textit{mean-variance hedging} problem \eqref{eq: J} withing the SVV model \eqref{eq: S}--\eqref{eq: Y}, where the payoff $F$ is of European type, i.e. $F = f(X(T))$. Note that Theorem \ref{th: properties of price} guarantees that $F = f(X(T)) \in L^2(\mathbb P)$ as long as the payoff function is of polynomial growth, i.e. the problem \eqref{eq: J} is well-posed and has a solution given by the Galtchouk-Kunita-Watanabe decomposition. In this situation, the optimal hedging portfolio coincides with the \textit{non-anticipating derivative} of $F$ with respect to the martingale $X$ the numerical computation of which heavily benefits from the Markovian approximation derived in Section~\ref{s: Markovian approximation}.

This Section is organized as follows. In Subsection~\ref{subsec: NA-derivative}, we provide all the necessary details on the non-anticipating derivatives and their connection to the optimization problem \eqref{eq: J}. In Subsection~\ref{subsec: Markovian approximation of the NA-derivative}, we estimate the error between the solution to the original hedging problem \eqref{eq: J} and its Markovian counterpart. The numerical algorithms for computing optimal hedging strategies are described in the subsequent Section~\ref{s: Monte Carlo approximation of the hedging strategy}.

\subsection{Non-anticipating derivative and mean-variance hedging}\label{subsec: NA-derivative}

Let $\xi \in L^2(\mathbb P)$ and $\eta = \{\eta(t),~t\in[0,T]\}$ be a square-integrable martingale w.r.t. a filtration $\mathbb G = \{\mathcal G_t,~t\in[0,T]\}$. For an arbitrary partition $\pi = \{0=t_0 < t_1 < ... < t_n = T\}$ of $[0,T]$ with the mesh $|\pi| := \max_{k} (t_k - t_{k-1})$, denote
\begin{equation}\label{eq: NA-derivative pre-limit}
\begin{aligned}
    u_{\pi} &:= \sum_{k=0}^{n-1} u_{\pi,k} \mathbbm 1_{(t_k, t_{k+1}]}, \quad u_{\pi,k} &:=  \frac{\mathbb E \left[(\eta(t_{k+1}) - \eta(t_k))\xi~|~\mathcal G_{t_k}\right]}{\mathbb E\left[ (\eta(t_{k+1}) - \eta(t_k))^2~|~\mathcal G_{t_k}\right]}, \quad k = 0, 1, ..., n-1.
\end{aligned}
\end{equation}

\begin{definition}\label{def: NA-derivative}
    Consider a monotone sequence of partitions $\{\pi_M,~M\ge 1\}$, such that $|\pi_M| \to 0$ as $M\to \infty$. The $L^2(\mathbb P \times [\eta])$-limit 
    \begin{equation}\label{eq: NA derivative integral sums}
        \mathfrak D \xi := \lim_{|\pi_M| \to 0} u_{\pi_M}
    \end{equation}
     is called the \textit{non-anticipating derivative} of $\xi$ w.r.t. $\eta$.
\end{definition}

\noindent It turns out that $\mathfrak D \xi$ is well-defined and $\int_0^T \mathfrak D \xi(s) d\eta(s)$ gives the orthogonal $L^2(\mathbb P \times [\eta])$-projection of $\xi$ on the subspace of stochastic integrals w.r.t. $\eta$. 

\begin{theorem}{\textit{(\cite[Theorem 2.1]{DiNunno})}}\label{th: NA-derivative properties}
    The non-anticipating derivative $\mathfrak D \xi$ is well-defined, i.e. the limit \eqref{eq: NA derivative integral sums} exists and does not depend on the particular choice of the partitions. Moreover, any $\xi \in L^2(\mathbb P)$ admits a unique representation of the form
    \[
        \xi = \xi_0 + \int_0^T \mathfrak D \xi(s)d\eta(s),
    \]
    where $\xi_0 \in L^2(\mathbb P)$ is such that $\mathfrak D \xi_0 = 0$ and $\mathbb E\left[ \xi_0 \int_0^T \mathfrak D \xi(s)d\eta(s) \right] = 0$. In other words, the infimum over all $\mathbb G$-adapted $\eta$-integrable strategies
    \[
        \inf_u \mathbb E\left[\left( \xi - \int_0^T u(s) d\eta(s) \right)^2\right]
    \]
    is attained at $u = \mathfrak D \xi$.
\end{theorem}

\begin{remark}
    By Theorem \ref{th: NA-derivative properties}, it is easy to see that the linear operator $\mathfrak D$ is actually the dual of the It\^o integral.
\end{remark}

\begin{notation*}
    Since the limit \eqref{eq: NA derivative integral sums} does not depend on the particular choice of partition $\{\pi_M,~M\ge 1\}$, we will use the notation 
    \[
        \mathfrak D \xi := \lim_{|\pi| \to 0} u_{\pi},
    \]
    meaning that the limit in $L^2(\mathbb P \times [\eta])$ is taken along an arbitrary sequence of monotone partitions with the mesh going to zero.
\end{notation*}

\noindent Next, let Assumptions \ref{assum: kernel} and \ref{assum: drift} hold, $S$ and $X$ be defined by \eqref{eq: S} and \eqref{eq: X} respectively. Consider the mean-square hedging problem \eqref{eq: J}, where the payoff function $f$: $\mathbb R_+ \to \mathbb R$ satisfies the following assumption.

\begin{assum}{(F)}\label{assum: F}
    The payoff function $f$: $\mathbb R_+ \to \mathbb R$ can be represented as $f = f_1 + f_2$, where
    \begin{itemize}
        \item[(i)] $f_1$ is globally Lipschitz, i.e. there exists $C > 0$ such that
        \[
            |f_1(t) - f_1(s)| \le C|t-s|, \quad s,t \ge 0;
        \]
        \item[(ii)] $f_2$: $\mathbb R_+ \to \mathbb R$ is of bounded variation over $\mathbb R_+$, i.e.
        \[
            V(f_2) := \lim_{x\to\infty} V_{[0,x]}(f_2) < \infty,
        \]
        where 
        \[
            V_{[0,x]}(f_2) := \sup \sum_{j=1}^N |f_2(x_j) - f_2(x_{j-1})|
        \]
        and the supremum is taken over all $N\ge 0$ and all partitions $0=x_0 < x_1 < ... < x_N = x$.
    \end{itemize}
\end{assum}

It is evident that Eq. \eqref{eq: sup of S and X} and Assumption \ref{assum: F} imply that $F = f(X(T))\in L^2(\mathbb P)$, i.e. Theorem \ref{th: NA-derivative properties} holds and we have the following corollary.

\begin{corollary}
    The non-anticipating derivative $u = \mathfrak D F$ of $F$ w.r.t. $X$ is the minimizer of $J$ from \eqref{eq: J} and hence represents the optimal hedging portfolio.
\end{corollary}

\begin{remark}\hfill
     Consider an arbitrary partition $\pi = \{0=t_0 < t_1 < ... < t_n = T\}$. The proof of \cite[Theorem 2.1]{DiNunno} implies that the pre-limit sum
        \begin{align}\label{eq: NA-derivative pre-limit X}
            u_{\pi} := \sum_{k=0}^{n-1} u_{\pi,k} \mathbbm 1_{(t_k, t_{k+1}]}, \quad u_{\pi,k} :=  \frac{\mathbb E \left[(X(t_{k+1}) - X(t_k))F~|~\mathcal F_{t_k}\right]}{\mathbb E\left[ (X(t_{k+1}) - X(t_k))^2~|~\mathcal F_{t_k}\right]}, \quad k = 0, 1, ..., n-1,
        \end{align}
        is the $L^2(\mathbb P \times [\eta])$-orthogonal projection of $F = f(X(T))$ onto the subspace generated by stochastic integrals of simple processes
        \begin{equation}\label{eq: simple processes paper 5}
            \sum_{k=0}^{n-1} \zeta_k \mathbbm 1_{(t_k, t_{k+1}]}
        \end{equation}
         w.r.t. $X$. Note that admissible portfolios in real markets are exactly of this type since there is no technical possibility of real ``continuous'' trading.
\end{remark}

\begin{remark}
    Note that formula \eqref{eq: NA-derivative pre-limit X} is explicit in the sense that the hedge is written only in terms of the discounted price model, the information flow of reference, and the claim $F$. This formula is in the spirit of the Clark-Haussmann-Ocone (CHO) formula (see e.g. \cite{Nualart_2006, Nunno_Oksendal_Proske_2009}) but we stress that the non-anticipating derivative has several important advantages. Namely, the CHO formula exploits the Malliavin derivative which is tailored for specific noises (e.g. Brownian motion) and claims $F$ falling in the domain of the Malliavin operator. In turn, formula \eqref{eq: NA-derivative pre-limit X} is available for all square integrable claims and all square integrable martingales as discounted prices.
\end{remark} 

\subsection{Non-anticipating derivative and Markovian approximations}\label{subsec: Markovian approximation of the NA-derivative}

The very definition of the non-anticipating derivative $u = \mathfrak D F$ describing the hedging portfolio provides a natural approximation of it. Indeed, simple processes $u_\pi$ given by \eqref{eq: NA-derivative pre-limit X} converge in $L^2(\mathbb P \times [X])$ to $\mathfrak D F$ and each $u_\pi$ is itself the optimal hedge in the corresponding class of simple processes \eqref{eq: simple processes paper 5}. However, the \textit{computation} of the conditional expectations in \eqref{eq: NA-derivative pre-limit X} is a challenging task that becomes even more complicated since the Volterra noise $Z$ from \eqref{eq: Y} may have memory (and thus, in general, $X$ is not Markovian).

In what follows, we will utilize the Markovian approximation derived in Section \ref{s: Markovian approximation} and compare the non-anticipating derivative $\mathfrak D f(X(T))$ w.r.t. $X$ with $\mathfrak D f(X_m(T))$ w.r.t. $X_m$. Note that we are interested in European options with rather complicated payoffs covering the ones with discontinuities, so we will need to compare $F = f(X(T))$ with $F_m := f(X_m(T))$. In order to do that, the following Theorem will be used.

\begin{theorem} \label{th: error in payoff}
    Let $f = f_1 + f_2$ satisfy Assumption \ref{assum: F} and $\xi$, $\hat\xi \in L^r(\mathbb P)$, $1\le r < \infty$. Then
    \begin{itemize}
        \item[1)] there exists a constant $C>0$ depending only on $f_1$ and $r$ such that 
        \[
            \mathbb E [|f_1(\xi) - f_1(\hat\xi)|^r] \le C \mathbb E[|\xi - \hat\xi|^{r}];
        \]
        \item[2)] if $\xi$ has bounded density $\phi_\xi$, then
        \begin{equation}\label{eq: error in payoff discontinuous}
            \mathbb E [|f_2(\xi) - f_2(\hat\xi)|^r] \le 3^{r+1} V^r(f_2) (\sup \phi_{\xi})^{\frac{r}{r+1}} \left(\mathbb E\left[ |\xi - \hat \xi|^r\right]\right)^{\frac{r}{r+1}};
        \end{equation}
        \item[3)] if $\xi$ has bounded density $\phi_\xi$, then there exists a constant $C>0$ depending only on $f_1$, $f_2$, $r$ and the density $\phi_\xi$ such that
        \[
            \mathbb E [|f(\xi) - f(\hat\xi)|^r] \le C \left(\mathbb E[|\xi - \hat\xi|^{r}] + \left(\mathbb E\left[ |\xi - \hat \xi|^r\right]\right)^{\frac{r}{r+1}}\right).
        \]
    \end{itemize}
\end{theorem}
\begin{proof}
    Item 1) is obvious and follows directly from the Lipschitz condition for $f_1$. Item 2) is proved in \cite[Theorem 2.4]{Avikainen2009}. Item 3) is a combination of 1) and 2).
\end{proof}

\begin{assumption}
    Throughout this Subsection, we always assume that Assumptions \ref{assum: kernel} and \ref{assum: drift} hold, the sequence of kernels $\{\mathcal K_m,~m\ge 1\}$ satisfies Assumption \ref{assum: approx kernels} and $\mathcal K$, $\mathcal K_m$, $m\ge 1$, are all difference kernels, i.e.
    \[
        \mathcal K(t,s) = \mathcal K(t-s)\mathbbm 1_{s<t}, \quad \mathcal K_m(t,s) = \mathcal K_m(t-s)\mathbbm 1_{s<t}, \quad t,s \in [0,T].
    \]
    Moreover, we also impose Assumption \ref{assum: F} on the payoff function $f = f_1 + f_2$.
\end{assumption}
\noindent To allow for compact writing, denote
\[
    W(t) := \rho B_1(t) + \sqrt{1 - \rho^2} B_2(t), \quad t\in [0,T],
\]
where $\rho \in (-1,1)$ is from \eqref{eq: S}. Consider $F_m := f(X_m(T))$ and observe that for any $r\in\mathbb R$
\begin{equation}\label{eq: uniform bound on moments of payoff}
    \mathbb E \left[ F^r \right] + \sup_{m\ge 0} \mathbb E \left[ F_m^r \right] < \infty
\end{equation}
by Lemma \ref{lemma: moments of prices}. For a given partition $\pi = \{0 = t_0 < t_1 < ... < t_n = T\}$, denote also $\Delta X(t_k) := X(t_{k+1}) - X(t_{k})$, $\Delta X_m(t_k) := X_m(t_{k+1}) - X_m(t_{k})$ and consider the non-anticipating derivatives
\begin{align*}
    \mathfrak D F  := \lim_{|\pi| \to 0} u_{\pi}, \quad \mathfrak D F_m  := \lim_{|\pi| \to 0} u^m_{\pi},
\end{align*}
where
\begin{align}
    u_{\pi} := \sum_{k=0}^{n-1} u_{\pi,k} \mathbbm 1_{(t_k, t_{k+1}]}, \quad &u_{\pi,k} :=  \frac{\mathbb E \left[F\Delta X(t_k)~|~\mathcal F_{t_k}\right]}{\mathbb E\left[ (\Delta X(t_k))^2~|~\mathcal F_{t_k}\right]},
    ,\label{eq: definition of u}
    \\
    u^m_{\pi} := \sum_{k=0}^{n-1} u^m_{\pi,k} \mathbbm 1_{(t_k, t_{k+1}]},\quad &u^m_{\pi,k} :=  \frac{\mathbb E \left[F_m\Delta X_m(t_k)~|~\mathcal F_{t_k}\right]}{\mathbb E\left[ (\Delta X_m(t_k))^2~|~\mathcal F_{t_k}\right]}.\label{eq: definition of um}
\end{align}

\begin{lemma}\label{lemma: difference betweem u and um}
    Let Assumptions \ref{assum: kernel}, \ref{assum: drift} and \ref{assum: approx kernels} hold, the payoff function $F$ satisfy Assumption \ref{assum: F} and both $\mathcal K$ and $\mathcal K_m$, $m\ge 1$, have the form
    \[
        \mathcal K(t,s) = \mathcal K(t-s)\mathbbm 1_{s<t}, \quad \mathcal K_m(t,s) = \mathcal K_m(t-s)\mathbbm 1_{s<t}, \quad t,s \in [0,T].
    \]
    Then there exists a constant $C>0$ that does not depend on $m$ such that for any partition $\pi = \{0 = t_0 < t_1 < ... < t_n = T\}$
    \[
        \mathbb E [| u_{\pi, k} - u^m_{\pi, k} |] \le \frac{C}{\sqrt{t_{k+1} - t_k}}\left( \left(\mathbb E\left[ (F-F_m)^4 \right]\right)^{\frac{1}{4}} + \lVert \mathcal K - \mathcal K_m \rVert_{L^2([0,T])}\right), \quad k=0,...,n-1,
    \]
    where $u_{\pi,k}$ and $u^m_{\pi,k}$ are defined in \eqref{eq: definition of u} and \eqref{eq: definition of um}.
\end{lemma}
\begin{proof}
    Fix $k=0,...,n-1$ and denote $\Delta := t_{k+1} - t_k$. It is easy to see that
    \begin{equation}\label{proofeq: approximation of summands in NA-derivative 1}
    \begin{aligned}
        \mathbb E [| u_{\pi, k} - u^m_{\pi, k} |] &\le \mathbb E \left[ \frac{|F - F_m| |\Delta X(t_k)| }{\mathbb E \left[ (\Delta X(t_k))^2~|~\mathcal F_{t_k} \right]} \right] + \mathbb E \left[ |F_m| \frac{\left| \Delta X(t_k) - \Delta X_m(t_k) \right|}{ \mathbb E \left[ (\Delta X(t_k))^2~|~\mathcal F_{t_k} \right] } \right]
        \\
        &\quad + \mathbb E \left[ |F_m| \left| \Delta X_m(t_k) \right|\frac{\mathbb E \left[ |(\Delta X_m(t_k))^2  -  (\Delta X(t_k))^2|~|~\mathcal F_{t_k} \right]}{ \mathbb E \left[ (\Delta X(t_k))^2~|~\mathcal F_{t_k} \right] \mathbb E \left[ (\Delta X_m(t_k))^2~|~\mathcal F_{t_k} \right]} \right]
        \\
        &=: I_1 + I_2 + I_3.
    \end{aligned}
    \end{equation}
    Now we will deal with each of the above summands separately.
    
    \noindent \textbf{Step 1: estimation of} $\boldsymbol{I_1}.$ By Lemma \ref{lemma: bounds for conditional increments}, there exist constants $C_1$, $C_2$ (not depending on the partition or $m$) such that
    \[
        \mathbb E \left[ (\Delta X(t_k))^2~|~\mathcal F_{t_k} \right] \ge \Delta C_1 X^2(t_k)
    \]
    and
    \[
        \mathbb E \left[ (\Delta X(t_k))^2 \right] \le \Delta C_2 \mathbb E\left[X^2(t_k)\right].
    \]
    Hence, using H\"older inequality and \eqref{eq: sup of S and X}, one can write
    \begin{equation}\label{proofeq: approximation of summands in NA-derivative 2}
    \begin{aligned}
        I_1 &:= \mathbb E \left[ \frac{|F - F_m| |\Delta X(t_k)| }{\mathbb E \left[ (\Delta X(t_k))^2~|~\mathcal F_{t_k} \right]} \right] \le \frac{C}{\Delta} \mathbb E \left[ |F - F_m| |\Delta X(t_k)| X^{-2}(t_k) \right]
        \\
        &\le \frac{C}{\Delta} \left(\mathbb E\left[(\Delta X(t_k))^2\right]\right)^{\frac{1}{2}}  \left(\mathbb E [X^{-8}(t_k)]\right)^{\frac{1}{4}} \left( \mathbb E \left[ |F - F_m|^4 \right] \right)^{\frac{1}{4}}
        \\
        &\le \frac{C}{\sqrt{\Delta}} \left(\mathbb E\left[ X^2(t_k)\right]\right)^{\frac{1}{2}} \left( \mathbb E \left[ |F - F_m|^4 \right] \right)^{\frac{1}{4}}\le \frac{C}{\sqrt{\Delta}} \left( \mathbb E \left[ |F - F_m|^4 \right] \right)^{\frac{1}{4}},
    \end{aligned}
    \end{equation}
    where $C>0$ can be chosen to not depend on $m$ or the partition.
    
    \noindent \textbf{Step 2: estimation of} $\boldsymbol{I_2}.$ By H\"older inequality, Lemma \ref{lemma: bounds for conditional increments}, Lemma \ref{lemma: difference between increments} and \eqref{eq: uniform bound on moments of payoff},
    \begin{equation}\label{proofeq: approximation of summands in NA-derivative 3}
    \begin{aligned}
        I_2 := \mathbb E \left[ |F_m| \frac{\left| \Delta X(t_k) - \Delta X_m(t_k) \right|}{ \mathbb E \left[ (\Delta X(t_k))^2~|~\mathcal F_{t_k} \right] } \right] &\le \frac{C}{\Delta}\left(\mathbb E[ \left| \Delta X(t_k) - \Delta X_m(t_k) \right|^2 ]\right)^{\frac{1}{2}} \left( \mathbb E\left[F^2_m X^{-4}(t_k) \right] \right)^{\frac{1}{2}} 
        \\
        &\le \frac{C}{\sqrt{\Delta}}\lVert \mathcal K - \mathcal K_m \rVert^r_{L^2([0,T])},
    \end{aligned}    
    \end{equation}
    where $C$ does not depend on $m$ or the partition.

    \noindent \textbf{Step 3: estimation of} $\boldsymbol{I_3}.$ Using H\"older's inequality and \eqref{eq: uniform bound on moments of payoff}, one can verify that
    \begin{equation}\label{proofeq: approximation of summands in NA-derivative 4}
    \begin{aligned}
        I_3 &:= \mathbb E \left[ |F_m| \left| \Delta X_m(t_k) \right|\frac{\mathbb E \left[ |(\Delta X_m(t_k))^2  -  (\Delta X(t_k))^2|~|~\mathcal F_{t_k} \right]}{ \mathbb E \left[ (\Delta X(t_k))^2~|~\mathcal F_{t_k} \right] \mathbb E \left[ (\Delta X_m(t_k))^2~|~\mathcal F_{t_k} \right]} \right]
        \\
        &\le \frac{C}{\Delta^2} \left( \mathbb E\left[F_m^4\right] \right)^{\frac{1}{4}} \left(\mathbb E\left[ |\Delta X_m(t_k)|^4 \right]\right)^{\frac{1}{4}} \left(\mathbb E \left[ |(\Delta X_m(t_k))^2  -  (\Delta X(t_k))^2|^2 \right]\right)^{\frac{1}{2}}
        \\
        &\le \frac{C}{\Delta^2} \left(\mathbb E\left[ |\Delta X_m(t_k)|^4 \right]\right)^{\frac{1}{4}} \left(\mathbb E \left[ |(\Delta X_m(t_k))^2  -  (\Delta X(t_k))^2|^2 \right]\right)^{\frac{1}{2}}.
    \end{aligned}    
    \end{equation}
    By the Burkholder--Davis--Gundy and H\"older inequalities, uniform boundedness of $Y_m$ and \eqref{eq: sup of S and X},
    \begin{equation}\label{proofeq: approximation of summands in NA-derivative 5}
    \begin{aligned}
        \left(\mathbb E\left[ |\Delta X_m(t_k)|^4 \right]\right)^{\frac{1}{4}} &= \left(\mathbb E\left[ \left|\int_{t_k}^{t_{k+1}} Y_m(s)X_m(s) dW(s) \right|^4 \right]\right)^{\frac{1}{4}} \le \left( \mathbb E\left[ \left( \int_{t_k}^{t_{k+1}} Y^2_m(s) X^2_m(s) ds \right)^2 \right] \right)^{\frac{1}{4}}
        \\
        &\le \left( \Delta  \int_{t_k}^{t_{k+1}} \mathbb E\left[Y^4_m(s) X^4_m(s) \right]ds   \right)^{\frac{1}{4}} \le C \sqrt{\Delta} \left(\mathbb E\left[ \sup_{u\in [0,T]} X^4_m(u) \right]\right)^{\frac{1}{4}} \le C \sqrt{\Delta}.
    \end{aligned}
    \end{equation}
    Further, one can verify using It\^o formula that
    \begin{equation}\label{proofeq: nasty estimation after Ito}
    \begin{aligned}
        \mathbb E &\left[ |(\Delta X_m(t_k))^2  -  (\Delta X(t_k))^2|^2 \right] \le C \mathbb E\left[\left( \int_{t_k}^{t_{k+1}} \Big( Y^2_m(s)X^2_m(s) - Y^2(s) X^2(s)\Big) ds \right)^2\right]
        \\
        &\quad + C \mathbb E\Bigg[\bigg( 2\int_{t_k}^{t_{k+1}} \Big((X_m(s) - X_m(t_k)) Y_m(s)X_m(s) - (X(s) - X(t_k)) Y(s)X(s)\Big) dW(s) \bigg)^2\Bigg].
    \end{aligned}
    \end{equation}
    We study the last two summands separately. It is clear that
    \begin{equation}\label{proofeq: technical estimation}
    \begin{aligned}
        \mathbb E&\left[\left( \int_{t_k}^{t_{k+1}} \Big( Y^2_m(s)X^2_m(s) - Y^2(s) X^2(s)\Big) ds \right)^2\right] \le \Delta  \int_{t_k}^{t_{k+1}} \mathbb E\left[\Big( Y^2_m(s)X^2_m(s) - Y^2(s) X^2(s)\Big)^2\right] ds
        \\
        &\le C \Delta  \int_{t_k}^{t_{k+1}} \mathbb E\left[\Big( X^2_m(s) - X^2(s)\Big)^2\right] ds + C \Delta  \int_{t_k}^{t_{k+1}} \mathbb E\left[X^2(s)\Big( Y^2_m(s) - Y^2(s)\Big)^2\right] ds.
    \end{aligned}    
    \end{equation}
    Taking into account that processes $\{X^2(t),~t\in[0,T]\}$ and $\{X_m^2(t),~t\in[0,T]\}$ satisfy the stochastic differential equations
    \begin{equation*}
    \begin{gathered}
        dX^2(t) = Y^2(t)X^2(t)dt + 2Y(t)X^2(t)dW(t),
        \\
        dX_m^2(t) = Y_m^2(t)X_m^2(t)dt + 2Y_m(t)X_m^2(t)dW(t),
    \end{gathered}    
    \end{equation*}
    boundedness of $Y$, uniform boundedness of $\{Y_m,~m\ge 1\}$ as well as the same argument as in Theorem \ref{th: approx with difference kernel} imply that there exists a constant $C>0$ such that
    \[
        \mathbb E\left[\sup_{s\in[0,T]}\Big( X^2_m(s) - X^2(s)\Big)^2\right] ds \le C\lVert \mathcal K - \mathcal K_m \rVert^2_{L^2([0,T])},
    \]
    whereas, again like in the proof of Theorem \ref{th: approx with difference kernel}, we have
    \begin{equation*}
    \begin{aligned}
        \mathbb E\left[X^2(s)\Big( Y^2_m(s) - Y^2(s)\Big)^2\right]  & = \mathbb E\left[X^2(s)( Y_m(s) + Y(s))^2 ( Y_m(s) - Y(s))^2\right]
        \\
        &\le C \mathbb E\left[X^2(s) ( Y_m(s) - Y(s))^2\right] \le C \lVert \mathcal K - \mathcal K_m \rVert^2_{L^2([0,T])}.
    \end{aligned}    
    \end{equation*}
    Plugging these estimates in \eqref{proofeq: technical estimation}, we obtain that
    \begin{equation*}
    \begin{aligned}
        \mathbb E\left[\left( \int_{t_k}^{t_{k+1}} \Big( Y^2_m(s)X^2_m(s) - Y^2(s) X^2(s)\Big) ds \right)^2\right] &\le C \Delta^2 \lVert \mathcal K - \mathcal K_m \rVert^2_{L^2([0,T])}.
    \end{aligned}    
    \end{equation*}
    Next, for the second summand of \eqref{proofeq: nasty estimation after Ito} we have
    \begingroup
    \allowdisplaybreaks
    \begin{equation}\label{proofeq: approximation of summands in NA-derivative 6}
    \begin{aligned}
        \mathbb E&\Bigg[\bigg( \int_{t_k}^{t_{k+1}} \Big((X_m(s) - X_m(t_k)) Y_m(s)X_m(s) - (X(s) - X(t_k)) Y(s)X(s)\Big) dW(s) \bigg)^2\Bigg]
        \\
        & = \mathbb E \Bigg[\int_{t_k}^{t_{k+1}} \Big((X_m(s) - X_m(t_k)) Y_m(s)X_m(s) - (X(s) - X(t_k)) Y(s)X(s)\Big)^2 ds \Bigg]
        \\
        & \le C \int_{t_k}^{t_{k+1}} \mathbb E \left[ (X(s) - X(t_k))^2 (Y_m(s)X_m(s) - Y(s)X(s))^2 \right]ds
        \\
        &\quad + C \int_{t_k}^{t_{k+1}} \mathbb E \left[ \Big( (X_m(s) - X_m(t_k)) - (X(s)-X(t_k))\Big)^2 Y^2_m(s)X^2_m(s) \right]ds
        \\
        &\le C \int_{t_k}^{t_{k+1}} \left(\mathbb E \left[ (X(s) - X(t_k))^4\right]\right)^{\frac{1}{2}} \left( \mathbb E\left[ (Y_m(s)X_m(s) - Y(s)X(s))^4 \right]\right)^{\frac{1}{2}}ds
        \\
        &\quad + C \int_{t_k}^{t_{k+1}} \left(\mathbb E \left[ \Big( (X_m(s) - X_m(t_k)) - (X(s)-X(t_k))\Big)^4\right]\right)^{\frac{1}{2}} \left(\mathbb E \left[ X^4_m(s) \right] \right)^{\frac{1}{2}}ds.
    \end{aligned}    
    \end{equation}
    \endgroup
    Using the same argument as in \eqref{proofeq: approximation of summands in NA-derivative 5}, one can verify that there exists a constant $C>0$ (not depending on $m$ or the partition) such that, for all $s\in[t_k, t_{k+1}]$,
    \begin{equation}\label{proofeq: approximation of summands in NA-derivative 7}
    \begin{aligned}
        \mathbb E \left[ (X(s) - X(t_k))^4 \right] \le C(s-t_k)^2.
    \end{aligned}    
    \end{equation}
    Moreover, using \eqref{eq: sup of S and X} and Theorem \ref{th: approximation of volatility}, it is easy to show that for any $s\in[0,T]$
    \begin{align*}
        \mathbb E \left[ X^4(s) (Y_m(s) - Y(s))^4 \right] & \le \left(\mathbb E \left[ X^8(s)\right]\right)^{\frac{1}{2}} \left(\mathbb E \left[ (Y_m(s) - Y(s))^8 \right]\right)^{\frac{1}{2}}  \le C \lVert \mathcal K - \mathcal K_m \rVert_{L^2([0,T])}^4,
    \end{align*}
    whence
    \begin{equation}\label{proofeq: approximation of summands in NA-derivative 8}
    \begin{aligned}
        \mathbb E &\left[ (Y_m(s) X_m(s) - Y(s)X(s))^4 \right] 
        \\
        & \le C \mathbb E \left[ Y^4_m(s) (X_m(s) - X(s))^4 \right] + C\mathbb E \left[ X^4(s) (Y_m(s) - Y(s))^4 \right]
        \\
        &\le C \mathbb E \left[ (X_m(s) - X(s))^4 \right] + C\mathbb E \left[ X^4(s) (Y_m(s) - Y(s))^4 \right] \le C\lVert \mathcal K - \mathcal K_m \rVert_{L^2([0,T])}^4.
    \end{aligned}    
    \end{equation}
    Next, by Lemma \ref{lemma: difference between increments}, for all $s\in[t_k, t_{k+1}]$,
    \begin{equation}\label{proofeq: approximation of summands in NA-derivative 9}
    \begin{aligned}
        \mathbb E \left[ \Big( (X_m(s) - X_m(t_k)) - (X(s)-X(t_k))\Big)^4 \right] &\le C(s-t_k)^2 \lVert \mathcal K - \mathcal K_m \rVert_{L^2([0,T])}^4.
    \end{aligned}    
    \end{equation}
    Plugging the estimates \eqref{proofeq: approximation of summands in NA-derivative 7}--\eqref{proofeq: approximation of summands in NA-derivative 9} into \eqref{proofeq: approximation of summands in NA-derivative 6}, we finally obtain that there exists a constant $C>0$ (not depending on $m$ or the partition) such that
    \begin{equation*}
    \begin{aligned}
        \mathbb E&\Bigg[\bigg( \int_{t_k}^{t_{k+1}} \Big((X_m(s) - X_m(t_k)) Y_m(s)X_m(s)- (X(s) - X(t_k)) Y(s)X(s)\Big) dW(s) \bigg)^2\Bigg]
        \\
        &\le C \int_{t_k}^{t_{k+1}} (s-t_k) \lVert \mathcal K - \mathcal K_m \rVert_{L^2([0,T])}^2 ds  + C \int_{t_k}^{t_{k+1}} (s-t_k) \lVert \mathcal K - \mathcal K_m \rVert_{L^2([0,T])}^2 \left(\sup_{m\ge 1} \mathbb E \left[ \sup_{t\in[0,T]}X^4_m(s) \right]\right)^{\frac{1}{2}}ds
        \\
        & \le C\Delta^2 \lVert \mathcal K - \mathcal K_m \rVert_{L^2([0,T])}^2.
    \end{aligned}    
    \end{equation*}  
    Taking into account all of the above, we can finally write
    \begin{equation*}
    \begin{aligned}
        \mathbb E \left[ |(\Delta X_m(t_k))^2  -  (\Delta X(t_k))^2|^2 \right] & \le C\Delta^2 \lVert \mathcal K - \mathcal K_m \rVert_{L^2([0,T])}^2
    \end{aligned}    
    \end{equation*}
    and whence we have the estimate
    \begin{equation*}
    \begin{aligned}
        I_3 & := \mathbb E \left[ |F_m| \left| \Delta X_m(t_k) \right|\frac{\mathbb E \left[ |(\Delta X_m(t_k))^2  -  (\Delta X(t_k))^2|~|~\mathcal F_{t_k} \right]}{ \mathbb E \left[ (\Delta X(t_k))^2~|~\mathcal F_{t_k} \right] \mathbb E \left[ (\Delta X_m(t_k))^2~|~\mathcal F_{t_k} \right]} \right]
        \\
        &\le \frac{C}{\Delta^2} \left(\mathbb E\left[ |\Delta X_m(t_k)|^4 \right]\right)^{\frac{1}{4}} \left(\mathbb E \left[ |(\Delta X_m(t_k))^2  -  (\Delta X(t_k))^2|^2 \right]\right)^{\frac{1}{2}}
        \\
        &\le \frac{C}{\Delta^2} \sqrt{\Delta} \Delta \lVert \mathcal K - \mathcal K_m \rVert_{L^2([0,T])} \le \frac{C}{\sqrt{\Delta}} \lVert \mathcal K - \mathcal K_m \rVert_{L^2([0,T])},
    \end{aligned}    
    \end{equation*}
    which finalizes the proof.
\end{proof}

\begin{theorem}\label{th: NA-derivative approximation}
    Let Assumptions \ref{assum: kernel}, \ref{assum: drift} and \ref{assum: approx kernels} hold, the payoff function $f = f_1 + f_2$ satisfy Assumption \ref{assum: F}, the kernels $\mathcal K$ and $\mathcal K_m$, $m\ge 1$, have the form
    \[
        \mathcal K(t,s) = \mathcal K(t-s)\mathbbm 1_{s<t}, \quad \mathcal K_m(t,s) = \mathcal K_m(t-s)\mathbbm 1_{s<t}, \quad t,s \in [0,T],
    \]
    and $\lVert \mathcal K - \mathcal K_m\rVert_{L^2([0,T])} \to 0$, $m\to \infty$. Then the following statements hold.
    \begin{itemize}
        \item[1)] There exists a constant $C>0$ that does not depend on $m$ such that, for any partition $\pi = \{0 = t_0 < t_1 < ... < t_n = T\}$ with the mesh $|\pi| := \max_{k}|t_{k+1} - t_k|$,
        \begin{equation}\label{eq: the most important formula 0}
            \mathbb E\left[ \int_0^T |u_{\pi}(s) - u^m_{\pi}(s)|ds \right] \le C n\sqrt{|\pi|}  \lVert \mathcal K - \mathcal K_m \rVert_{L^2([0,T])}^{\frac{1}{5}},
        \end{equation}
        where $u_\pi$ and $u^m_\pi$ are defined by \eqref{eq: definition of u} and \eqref{eq: definition of um}. In particular, if the partition $\pi$ is uniform, i.e. $t_k = \frac{kT}{n}$ and $|\pi| = \frac{T}{n}$,
        \begin{equation}\label{eq: the most important formula}
            \mathbb E\left[ \int_0^T |u_\pi(s) - u^m_\pi(s)|ds \right] \le C \sqrt{n} \lVert \mathcal K - \mathcal K_m \rVert_{L^2([0,T])}^{\frac{1}{5}}.
        \end{equation}
        
        \item[2)] If $\{m_n,~n\ge 1\}$ is such that
        \[
            n \sqrt{|\pi|} \lVert \mathcal K - \mathcal K_{m_n} \rVert_{L^2([0,T])}^{\frac{1}{5}} \to 0, \quad n \to \infty,
        \]
        then
        \[
            \mathbb E \left[ \int_0^T | \mathfrak D F (s) - u^{m_n}_{\pi}(s) |ds \right] \to 0, \quad n \to \infty.
        \]
    \end{itemize}
\end{theorem}

\begin{proof}
    By Theorem \ref{th: absolute continuity of laws of S and X}, the random variable $X(T)$ has continuous and bounded density, whence conditions of Theorem \ref{th: error in payoff} are fulfilled and
    \begin{align*}
       \mathbb E[|F - F_m|^4] & = \mathbb E[|f(X(T)) - f(X_m(T))|^4] \le C\left( \mathbb E[|X(T) - X_m(T)|^{4}] + \left(\mathbb E\left[ |X(T) - X_m(T)|^4\right]\right)^{\frac{4}{5}} \right)
       \\
       & \le C \left(\lVert \mathcal K - \mathcal K_m \rVert_{L^2([0,T])}^4 + \lVert \mathcal K - \mathcal K_m \rVert_{L^2([0,T])}^{\frac{4}{5}}\right) \le C\lVert \mathcal K - \mathcal K_m \rVert_{L^2([0,T])}^{\frac{4}{5}},
    \end{align*}
    where the constant $C$ does not depend on $m$ or the partition. Furthermore, by Lemma \ref{lemma: difference betweem u and um},
    \begin{align*}
        \mathbb E&\left[ \int_0^T |u_\pi(s) - u^m_\pi(s)|ds \right] = \sum_{k=0}^{n-1} \mathbb E\left[ |u_{\pi,k} - u^m_{\pi,k}| \right](t_{k+1} - t_k)
        \\
        &\le C \left( \sum_{k=0}^{n-1} \sqrt{t_{k+1} - t_k} \right)\left( \left(\mathbb E[|F - F_m|^4]\right)^{\frac{1}{4}} + \lVert \mathcal K - \mathcal K_m \rVert_{L^2([0,T])}\right)
        \\
        & \le C n\sqrt{|\pi|} \left( \lVert \mathcal K - \mathcal K_m \rVert_{L^2([0,T])}^{\frac{1}{5}} + \lVert \mathcal K - \mathcal K_m \rVert_{L^2([0,T])}\right)
        \\
        & \le C n\sqrt{|\pi|} \lVert \mathcal K - \mathcal K_m \rVert_{L^2([0,T])}^{\frac{1}{5}},
    \end{align*}
    which gives item 1).
    
    To prove item 2), first note that
    \begin{align*}
        \mathbb E &\left[ \int_0^T \left|\mathfrak D F(s) - u_\pi(s)\right| ds \right] = \mathbb E \left[ \int_0^T \left|\mathfrak D F(s) - u_\pi(s)\right| Y(s) X(s) \frac{1}{Y(s) X(s)} ds \right]
        \\
        & \le \left( \mathbb E \left[\int_0^T \left|\mathfrak D F(s) - u_\pi(s)\right|^2 Y^2(s) X^2(s) ds\right] \right)^{\frac{1}{2}} \left( \mathbb E \left[\int_0^T \frac{1}{Y^2(s) X^2(s)} ds \right]\right)^{\frac{1}{2}}
        \\
        & \le C \lVert \mathfrak D F - u_\pi \rVert_{L^2(\mathbb P \times [X])} \to 0
    \end{align*}
    as $|\pi| \to 0$. Whence
    \begin{align*}
         \mathbb E &\left[ \int_0^T | \mathfrak D F (s) - u^{m_n}_{\pi}(s) |ds \right] \le  \mathbb E \left[ \int_0^T \left| \mathfrak D F (s) - u_{\pi}(s) \right|ds \right] + \mathbb E \left[ \int_0^T \left| u_{\pi}(s) - u^{m_n}_{\pi}(s) \right|ds\right]
         \\
         & \le \mathbb E \left[ \int_0^T \left| \mathfrak D F (s) - u_{\pi}(s) \right|ds \right] + C n \sqrt{|\pi|} \lVert \mathcal K - \mathcal K_{m_n} \rVert_{L^2([0,T])}^{\frac{1}{5}} \to 0, \quad n\to \infty.
    \end{align*}
    By this, the proof is complete.
\end{proof}

We remark that that the exponent ${\frac{1}{5}}$ appears in \eqref{eq: the most important formula 0} and \eqref{eq: the most important formula} exclusively due the estimate \eqref{eq: error in payoff discontinuous} of Theorem \ref{th: error in payoff} that corresponds to the (possibly discontinuous) component $f_2$ of $f$. If $f_2 \equiv 0$, i.e. when
\begin{align*}
    \mathbb E[|F - F_m|^4] &= \mathbb E[|f(X(T)) - f(X_m(T))|^4] \le C \mathbb E [|X(T) - X_m(T)|^4] 
    \\
    &\le C \lVert \mathcal K - \mathcal K_m\rVert_{L^2([0,T])}^4,
\end{align*}
Theorem \ref{th: NA-derivative approximation} can be reformulated as follows.

\begin{theorem}\label{th: NA-derivative approximation Lipschitz}
    Let Assumptions \ref{assum: kernel}, \ref{assum: drift} and \ref{assum: approx kernels} hold, the payoff function $f = f_1$ be globally Lipschitz, $\mathcal K$ and $\mathcal K_m$, $m\ge 1$, have the form
    \[
        \mathcal K(t,s) = \mathcal K(t-s)\mathbbm 1_{s<t}, \quad \mathcal K_m(t,s) = \mathcal K_m(t-s)\mathbbm 1_{s<t}, \quad t,s \in [0,T],
    \]
    and $\lVert \mathcal K - \mathcal K_m\rVert_{L^2([0,T])} \to 0$, $m\to \infty$. Then the following statements are true.
    \begin{itemize}
        \item[1)] There exists a constant $C>0$ that does not depend on $m$ such that for any partition $\pi = \{0 = t_0 < t_1 < ... < t_n = T\}$ with the mesh $|\pi| := \max_{k}|t_{k+1} - t_k|$
        \[
            \mathbb E\left[ \int_0^T |u_{\pi}(s) - u^m_{\pi}(s)|ds \right] \le C n\sqrt{|\pi|}  \lVert \mathcal K - \mathcal K_m \rVert_{L^2([0,T])},
        \]
        where $u_\pi$ and $u^m_\pi$ are defined by \eqref{eq: definition of u} and \eqref{eq: definition of um}. In particular, if $\pi$ is uniform, i.e. $t_k = \frac{kT}{n}$ and $|\pi| = \frac{T}{n}$,
        \begin{equation}\label{eq: the most important formula Lipschitz}
            \mathbb E\left[ \int_0^T |u_\pi(s) - u^m_\pi(s)|ds \right] \le C \sqrt{n} \lVert \mathcal K - \mathcal K_m \rVert_{L^2([0,T])}.
        \end{equation}
        
        \item[2)] If $\{m_n,~n\ge 1\}$ is such that
        \[
            n \sqrt{|\pi|} \lVert \mathcal K - \mathcal K_{m_n} \rVert_{L^2([0,T])} \to 0, \quad n \to \infty,
        \]
        then
        \[
            \mathbb E \left[ \int_0^T | \mathfrak D F (s) - u^{m_n}_{\pi}(s) |ds \right] \to 0, \quad n \to \infty.
        \]
    \end{itemize}
\end{theorem}

\begin{example}\label{ex: NA-derivative Holder}
    Let the kernel $\mathcal K$  be as in Examples \ref{ex: Holder kernel} and \ref{ex: Holder kernel approx}, i.e. $\mathcal K(t,s) = \mathcal K(t-s) \mathbbm 1_{s<t}$, $\mathcal K(0) = 0$ and $\mathcal K \in C^H([0,T])$ for some $H\in (0,1)$. Let $X$ be the corresponding SVV discounted price \eqref{eq: X},
    $\mathcal K_m$ be the Bernstein polynomial approximation of $\mathcal K$ given by \eqref{eq: K Bernstein} and $X_m$ be the approximation \eqref{eq: X approximated} of $X$ constructed using $\mathcal K_m$. Then, taking into account \eqref{eq: Bernstein kernel error} and Theorem \ref{th: approx with difference kernel}, for any $r\ge 2$
    \[
        \mathbb E \left[ \sup_{t\in [0,T]}|X(t) - X_m(t)|^r \right] \le C m^{-\frac{rH}{2}} \to 0, \quad m\to \infty.
    \]
    Moreover, for a uniform partition $\pi = \{0 = t_0 < t_1 < ... < t_n = T\}$, $t_k = \frac{kT}{n}$, Theorem \ref{th: NA-derivative approximation} and Theorem \ref{th: NA-derivative approximation Lipschitz} imply that
    \begin{itemize}
        \item[1)] if $f = f_1 + f_2$ satisfies Assumption \ref{assum: F} and $m_n := n^{\alpha}$ for $\alpha > \frac{5}{H}$, then
        \begin{equation}\label{eq: NA-derivative for Holder kernels}
            \mathbb E\left[ \int_0^T |u_\pi(s) - u^{m_n}_\pi(s)|ds \right] \le C n^{-\frac{\alpha H}{10} + \frac{1}{2}} \to 0, \quad n \to \infty;
        \end{equation}
        
        \item[2)] if $f = f_1$ is globally Lipschitz and $m_n := n^{\alpha}$ for $\alpha > \frac{1}{H}$, then
        \begin{equation}\label{eq: NA-derivative for Holder kernels Lipschitz}
            \mathbb E\left[ \int_0^T |u_\pi(s) - u^{m_n}_\pi(s)|ds \right] \le C n^{-\frac{\alpha H}{2} + \frac{1}{2}} \to 0, \quad n \to \infty.
        \end{equation}
    \end{itemize}
\end{example}

\begin{example}\label{ex: NA-derivative fractional}
    Let the kernel $\mathcal K$  be the fractional kernel from Examples \ref{ex: fractional kernel} and \ref{ex: fractional kernel approx}, $X$ be the corresponding SVV discounted price \eqref{eq: X}, $\mathcal K_m$ be the approximation of $\mathcal K$ with exponentials given by \eqref{eq: rough fractional kernel approximation} with $\tau_{m,i}$, $i=0,...,m$, $m\ge 1$, given by \eqref{eq: possible choice of tau}. Let also $X_m$ be the approximation \eqref{eq: X approximated} of $X$ constructed using $\mathcal K_m$. Then, taking into account \eqref{eq: fractional kernel error} and Theorem \ref{th: approx with difference kernel}, for any $r\ge 2$,
    \[
        \mathbb E \left[ \sup_{t\in[0,T]}|X(t) - X_m(t)|^r \right] \le C m^{-\frac{4rH}{5}}.
    \]
    Moreover, for a uniform partition $\pi = \{0 = t_0 < t_1 < ... < t_n = T\}$, $t_k = \frac{kT}{n}$, Theorem \ref{th: NA-derivative approximation} and Theorem \ref{th: NA-derivative approximation Lipschitz} imply that
    \begin{itemize}
        \item[1)] if $f = f_1 + f_2$ satisfies Assumption \ref{assum: F} and $m_n := n^{\alpha}$ for $\alpha > \frac{25}{8H}$, then
        \begin{equation}\label{eq: NA-derivative for rough kernels}
            \mathbb E\left[ \int_0^T |u_\pi(s) - u^{m_n}_\pi(s)|ds \right] \le C n^{-\frac{4\alpha H}{25} + \frac{1}{2}} \to 0, \quad n \to \infty;
        \end{equation}
        
        \item[2)] if $f = f_1$ is globally Lipschitz and $m_n := n^{\alpha}$ for $\alpha > \frac{5}{8H}$, then
        \begin{equation}\label{eq: NA-derivative for rough kernels Lipschitz}
            \mathbb E\left[ \int_0^T |u_\pi(s) - u^{m_n}_\pi(s)|ds \right] \le C n^{-\frac{4\alpha H}{5} + \frac{1}{2}} \to 0, \quad n \to \infty.
        \end{equation}
    \end{itemize}
\end{example}

\section{Monte Carlo computation of the optimal hedge and simulations}\label{s: Monte Carlo approximation of the hedging strategy}

In Section \ref{s: NA-derivative estimates}, we approximated the optimal hedging portfolio $\mathfrak D f(X(T))$ with $\mathfrak D f(X_m(T))$, where $X_m$ is a coordinate of $(m+2)$-dimensional Markov process. As noted above, Markovianity is incredibly beneficial for the numerical computation of conditional expectations in \eqref{eq: definition of um}: indeed,
\begin{align*}
    \mathbb E &\left[f(X_m(T))(X_m(t_{k+1}) - X_m(t_k))~|~\mathcal F_{t_k}\right]  
    \\
    &= \mathbb E \left[f(X_m(T))(X_m(t_{k+1}) - X_m(t_k))~|~X_m(t_k),~Y_m(t_k),~U_{m,0}(t_k), ..., U_{m,m}(t_k)\right]
    \\
    &=: \Phi_1\big(t_k, X_m(t_k),Y_m(t_k),U_{m,0}(t_k), ..., U_{m,m}(t_k)\big),
    \\
    \mathbb E&\left[ (X_m(t_{k+1}) - X_m(t_k))^2~|~\mathcal F_{t_k}\right]  
    \\
    &= \mathbb E\left[ (X_m(t_{k+1}) - X_m(t_k))^2~|~X_m(t_k),~Y_m(t_k),~U_{m,0}(t_k), ..., U_{m,m}(t_k)\right]
    \\
    &=: \Phi_2\big(t_k,  X_m(t_k),Y_m(t_k),U_{m,0}(t_k), ..., U_{m,m}(t_k)\big),
\end{align*}
i.e. the computational problem boils down to learning the shape of functions $\Phi_1$ and $\Phi_2$ that can be done via Monte Carlo methods.

In this Section, we propose two algorithms for estimating $\Phi_1$ and $\Phi_2$: Nested Monte Carlo (NMC) and Least-Squares Monte Carlo (LSMC). In order to simulate paths of sandwiched volatility processes \eqref{eq: Y}, we use the drift-implicit Euler approximation scheme from \cite{DNMYT2022-1}. The original and the approximated discounted price processes $X$ and $X_m$ are simulated just like in \cite{DNMYT2022}. We also present simulations of the hedging strategies for a standard European call option. Note that algorithms presented below also work for exotic contracts with discontinuous payoffs, but slower convergence rates require longer computations. All simulations were performed in \textsf{R} programming language on the system with Intel Core i9-9900K CPU and 64 Gb RAM. 

\subsection{Nested Monte Carlo method}

The first and more straightforward way to compute $\Phi_1$ and $\Phi_2$ is the \textit{Nested Monte Carlo} (NMC) approach that can be summarized as follows (see also Fig. \ref{fig: NMC}):
\begin{itemize}
    \item[1)] given $X_m(t_k) = x$, $Y_m(t_k) = y$, $U_{m,0}(t_k) = u_0$, ..., $U_{m,m}(t_k) = u_m$,
    simulate $N$ independent trajectories $\{X^{(i)}_m(t),~t \in (t_k,T]\}$, $i=1,...,N$;
    \item[2)] for each trajectory, compute $f(X^{(i)}_m(T))(X^{(i)}_m(t_{k+1}) - X^{(i)}_m(t_k))$ and $(X^{(i)}_m(t_{k+1}) - X^{(i)}_m(t_k))^2$, $i=1,...,N$;
    \item[3)] put 
    \begin{align*}
        \widehat \Phi_1(t_k, x, y, u_0, ..., u_m) &:= \frac{1}{N} \sum_{i=1}^N f(X^{(i)}_m(T))(X^{(i)}_m(t_{k+1}) - X^{(i)}_m(t_k)),
        \\
        \widehat \Phi_2(t_k, x, y, u_0, ..., u_m) & := \frac{1}{N} \sum_{i=1}^N (X^{(i)}_m(t_{k+1}) - X^{(i)}_m(t_k))^2.
    \end{align*}
\end{itemize}
\begin{figure}[h!]
    \centering
    \includegraphics[width = 0.6\textwidth]{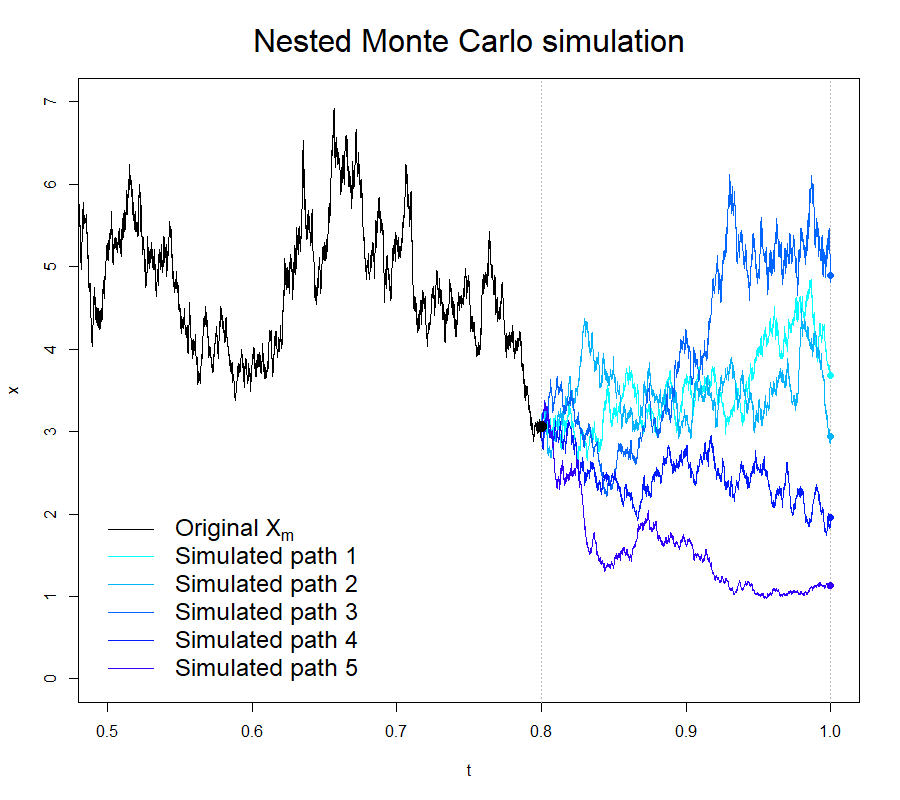}
    \caption{Nested Monte Carlo approach, $t_k = 0.8$, $T=1$. Given the values $X_m(t_k)$, $Y_m(t_k)$, $U_{m,1}(t_k)$, ..., $U_{m,m}(t_k)$, we simulate a number of trajectories $X^{(i)}_m$ on $(t_k, T]$ (blue), use each of those to compute $f(X^{(i)}_m(T))(X^{(i)}_m(t_{k+1}) - X^{(i)}_m(t_k))$ and $(X^{(i)}_m(t_{k+1}) - X^{(i)}_m(t_k))^2$. The means of the latter values are then used as approximations of the required conditional expectations.  }
    \label{fig: NMC}
\end{figure}

\begin{example}\label{ex: NMC Holder}{\textit{(H\"older continuous kernel)}}
    Consider the SVV model
    \begin{align*}
        X(t) &= X(0) + \int_0^t Y(s) X(s) \left(\rho dB_1(s) + \sqrt{1 - \rho^2} dB_2(s)\right), 
        \\
        Y(t) &= Y(0) + \int_0^t b(s, Y(s))ds + \int_0^t \mathcal K(t,s)dB_1(s),
    \end{align*}
    with $X(0) = 5$, $Y(0) = 1$, $\rho = 0.5$, $b(t,y) = \frac{1}{(y - 0.01)^4} - \frac{1}{(5 - y)^4}$ and $\mathcal K(t,s) = \mathcal K(t-s) = (t-s)^{0.4}$. The approximation $(X_m, Y_m, U_{m,0}, U_{m,1}, ..., U_{m,m})$ is constructed using the Bernstein polynomial approximation $\mathcal K_m$ from \eqref{eq: K Bernstein}. 
    
    \begin{figure}[h!]
    \centering
    \begin{minipage}[b]{0.4\linewidth}
        \centering
        \includegraphics[width=\textwidth]{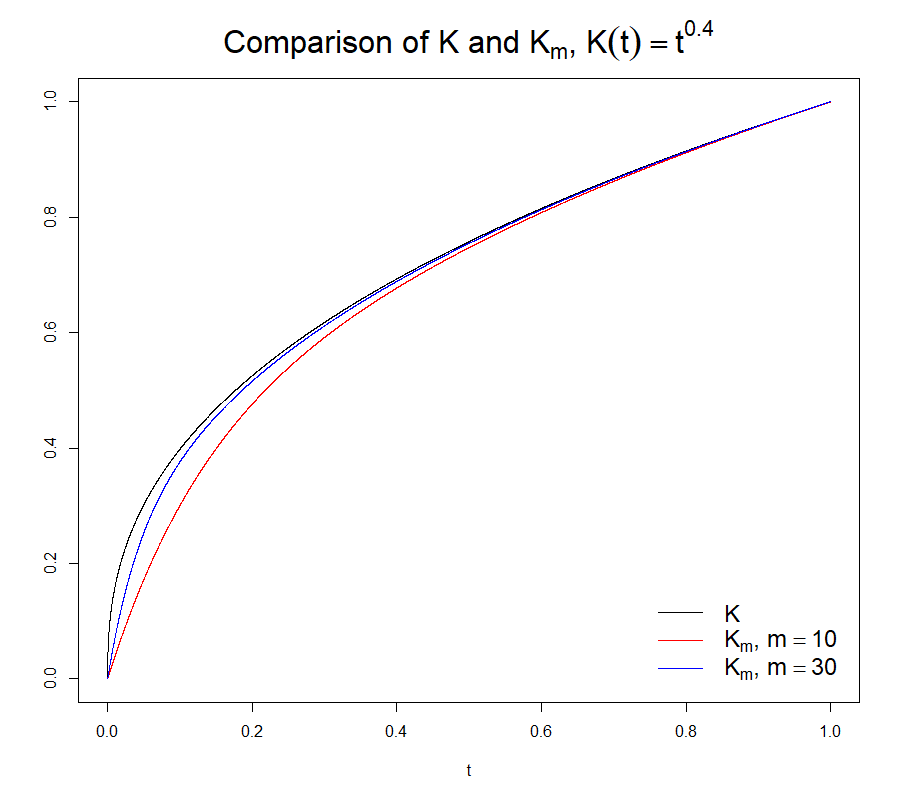}
        (a) Kernels $\mathcal K$ and $\mathcal K_m$
    \end{minipage}
    \begin{minipage}[b]{0.4\linewidth}
        \centering
        \includegraphics[width=\textwidth]{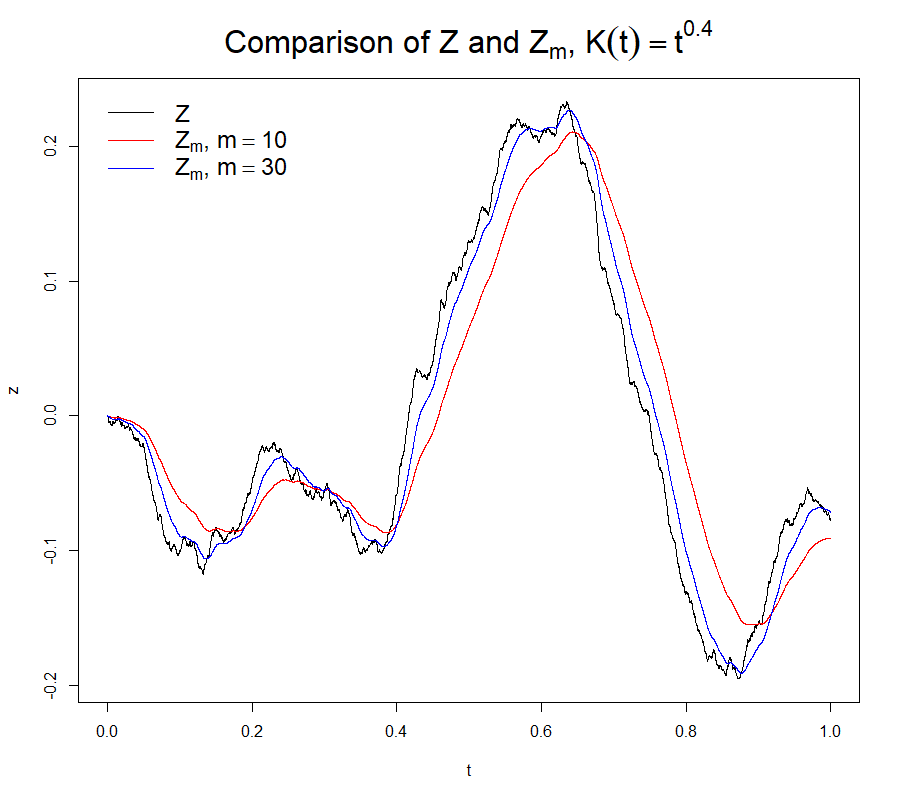}
        (b) Volterra noises $Z$ and $Z_m$
    \end{minipage}
    \begin{minipage}[b]{0.4\linewidth}
        \centering
        \includegraphics[width=\textwidth]{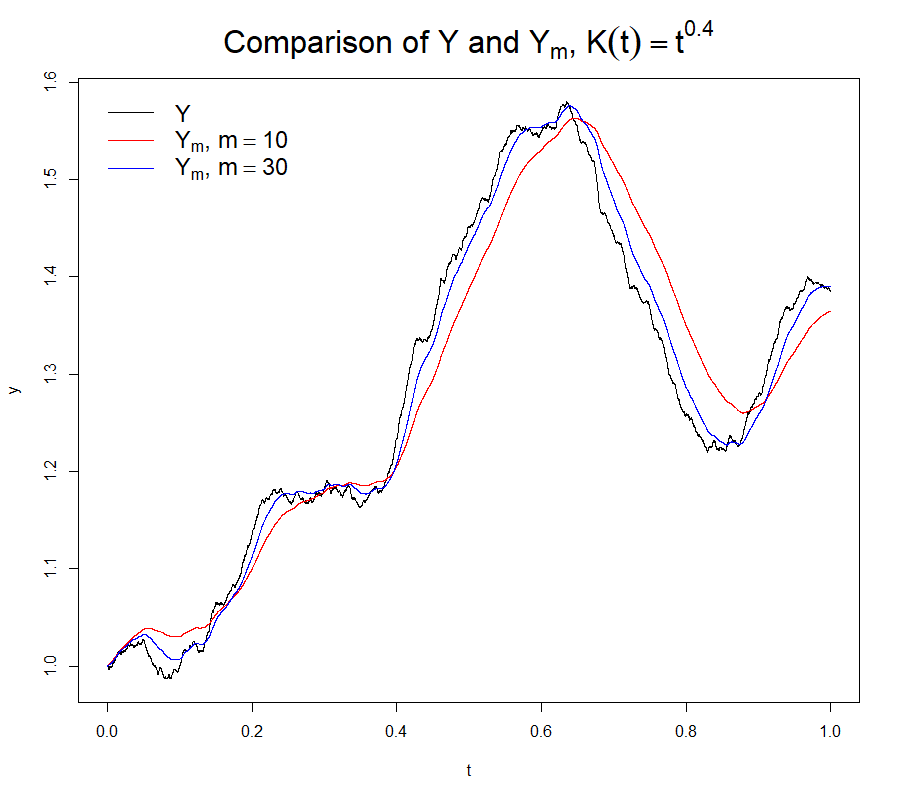}
        (c) Volatility processes $Y$ and $Y_m$
    \end{minipage}
    \begin{minipage}[b]{0.4\linewidth}
        \centering
        \includegraphics[width=\textwidth]{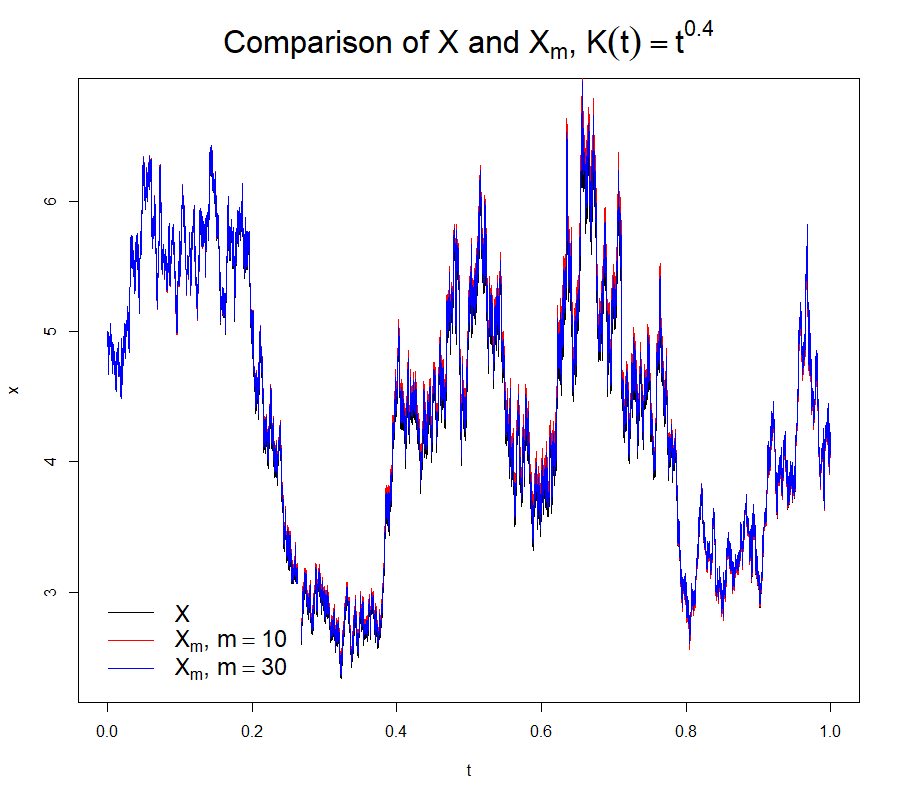}
        (d) Price processes $X$ and $X_m$
    \end{minipage}
    \caption{Approximation of the SVV model with H\"older continuous kernel}
    \label{fig: NMC Holder}
    \end{figure}
    \noindent Figure \ref{fig: NMC Holder} illustrates the approximation of the SVV model; original kernel and generated sample paths of the corresponding processes are depicted in black whereas approximations using the Bernstein polynomials are depicted in red (of order $m=10$) and blue (of order $m=30$). Note that the path of $X$ simulated using the original noise (black) is not visible on Figure \ref{fig: NMC Holder}(d) due to the high degree of overlapping with the approximations (red and blue trajectories). Figure \ref{fig: NMC hedge} contains a path of the process
    \begin{align}
        \widehat u_{\pi}(t) &:= \sum_{k=0}^{9} \frac{\widehat\Phi_1\left(\frac{k}{10}, X_m\left(\frac{k}{10}\right), Y_m\left(\frac{k}{10}\right), U_{m,0}\left(\frac{k}{10}\right), ..., U_{m,m}\left(\frac{k}{10}\right)\right)}{\widehat\Phi_2\left(\frac{k}{10}, X_m\left(\frac{k}{10}\right), Y_m\left(\frac{k}{10}\right), U_{m,0}\left(\frac{k}{10}\right), ..., U_{m,m}\left(\frac{k}{10}\right)\right)} \mathbbm 1_{\left(\frac{k}{10}, \frac{k+1}{10}\right]}(t),
    \end{align}
    with the European call payoff function $f(x) := \max\{x-4, 0\}$ and the values $X_m\left(\frac{k}{10}\right)$, $Y_m\left(\frac{k}{10}\right)$, $U_{m,0}\left(\frac{k}{10}\right)$, ..., $U_{m,m}\left(\frac{k}{10}\right)$ coming exactly from the trajectory depicted on Figure \ref{fig: NMC Holder}. In order to estimate
    \begin{gather*}
        \widehat\Phi_1\left(\frac{k}{10}, X_m\left(\frac{k}{10}\right), Y_m\left(\frac{k}{10}\right), U_{m,0}\left(\frac{k}{10}\right), ..., U_{m,m}\left(\frac{k}{10}\right)\right),
        \\
        \widehat\Phi_2\left(\frac{k}{10}, X_m\left(\frac{k}{10}\right), Y_m\left(\frac{k}{10}\right), U_{m,0}\left(\frac{k}{10}\right), ..., U_{m,m}\left(\frac{k}{10}\right)\right),
    \end{gather*}
    100000 simulations were used for each $k=0,1,...,9$.
    \begin{figure}[h!]
        \centering
        \includegraphics[width = 0.5\textwidth]{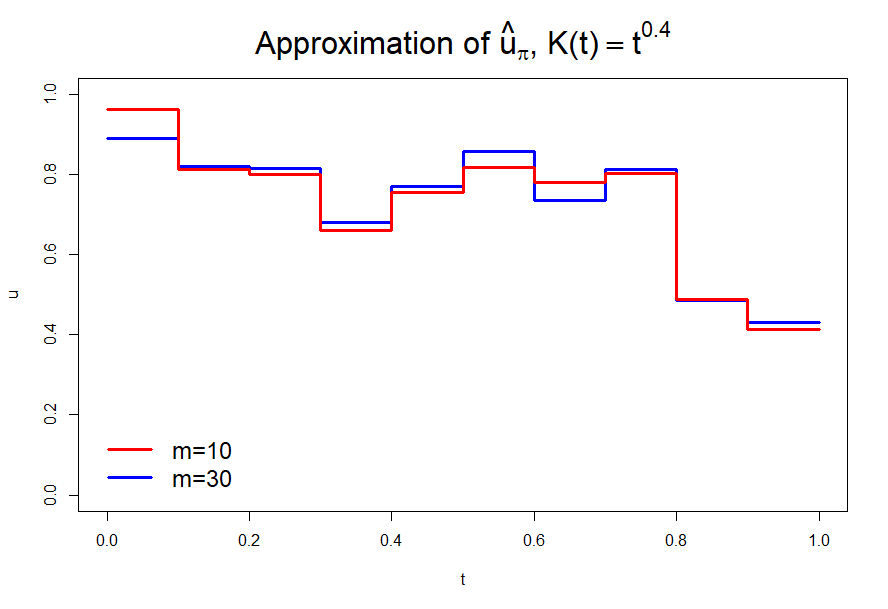}
        \caption{Hedging strategy estimated for the path from Fig. \ref{fig: NMC Holder} for $m=10$ (red) and $m=30$ (blue). The corresponding partition is $k/10$, $k=0,1,...,10$. Simulating the red line took 63231 seconds whereas the blue line -- 73410 seconds.}
    \label{fig: NMC hedge}
\end{figure}

\end{example}

\begin{example}\label{ex: NMC rough}{\textit{(Rough kernel)}}
    Consider the SVV model
    \begin{align*}
        X(t) &= X(0) + \int_0^t Y(s) X(s) \left(\rho dB_1(s) + \sqrt{1 - \rho^2} dB_2(s)\right),
        \\
        Y(t) &= Y(0) + \int_0^t b(s, Y(s))ds + \int_0^t \mathcal K(t,s)dB_1(s),
    \end{align*}
    with $X(0) = 5$, $Y(0) = 1$, $\rho = 0.5$, $b(t,y) = \frac{1}{(y - 0.01)^4} - \frac{1}{(5 - y)^4}$ and $\mathcal K(t,s) = \mathcal K(t-s) = \frac{1}{\Gamma(0.8) (t-s)^{0.2}}$. The approximation $(X_m, Y_m, U_{m,1}, ..., U_{m,m})$ is constructed by approximating the Volterra noise $Z(t) := \frac{1}{\Gamma(0.8)} \int_0^t \frac{1}{(t-s)^{0.2}} dB_1(s)$ by a linear combination of standard Ornstein-Uhlenbeck processes as described in Example \ref{ex: fractional kernel approx} and Remark \ref{rem: OU}. Just as in Example \ref{ex: NMC Holder}, Figure \ref{fig: NMC rough} illustrates approximation of the SVV model; original kernel and generated sample paths of the corresponding processes are depicted in black whereas approximations are depicted in red ($m=10$ summands), green ($m=100$ summands) and blue ($m=1000$ summands). Note that rough volatility requires much higher values of $m$ in order to ensure a decent level of approximation in comparison to Example \ref{ex: NMC Holder}. Figure \ref{fig: NMC rough hedge} contains a path of the process \begin{align}
        \widehat u_{\pi}(t) &:= \sum_{k=0}^{9} \frac{\widehat\Phi_1\left(\frac{k}{10}, X_m\left(\frac{k}{10}\right), Y_m\left(\frac{k}{10}\right), U_{m,0}\left(\frac{k}{10}\right), ..., U_{m,m}\left(\frac{k}{10}\right)\right)}{\widehat\Phi_2\left(\frac{k}{10}, X_m\left(\frac{k}{10}\right), Y_m\left(\frac{k}{10}\right), U_{m,0}\left(\frac{k}{10}\right), ..., U_{m,m}\left(\frac{k}{10}\right)\right)} \mathbbm 1_{\left(\frac{k}{10}, \frac{k+1}{10}\right]}(t),
    \end{align}
    $t\in[0,1]$, where the payoff function $f(x) := \max\{x-4, 0\}$ and the values $X_m\left(\frac{k}{10}\right)$, $Y_m\left(\frac{k}{10}\right)$, $U_{m,1}\left(\frac{k}{10}\right)$, ..., $U_{m,m}\left(\frac{k}{10}\right)$ are exactly the ones from the trajectory depicted on Figure \ref{fig: NMC rough}. In order to estimate
    \begin{gather*}
        \widehat\Phi_1\left(\frac{k}{10}, X_m\left(\frac{k}{10}\right), Y_m\left(\frac{k}{10}\right), U_{m,0}\left(\frac{k}{10}\right), ..., U_{m,m}\left(\frac{k}{10}\right)\right),
        \\
        \widehat\Phi_2\left(\frac{k}{10}, X_m\left(\frac{k}{10}\right), Y_m\left(\frac{k}{10}\right), U_{m,0}\left(\frac{k}{10}\right), ..., U_{m,m}\left(\frac{k}{10}\right)\right),
    \end{gather*}
    100000 simulations were used for each $k=0,1,...,9$.
    
    \begin{figure}[h!]
    \centering
    \begin{minipage}[b]{0.4\linewidth}
        \centering
        \includegraphics[width=\textwidth]{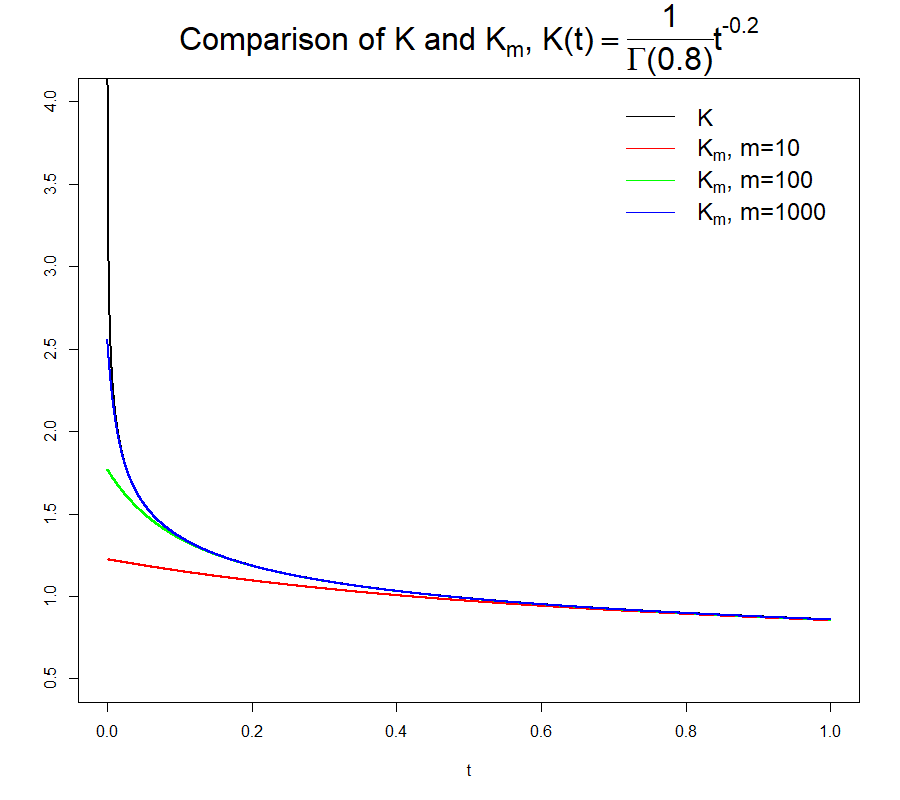}
        (a) Kernels $\mathcal K$ and $\mathcal K_m$
    \end{minipage}
    \begin{minipage}[b]{0.4\linewidth}
        \centering
        \includegraphics[width=\textwidth]{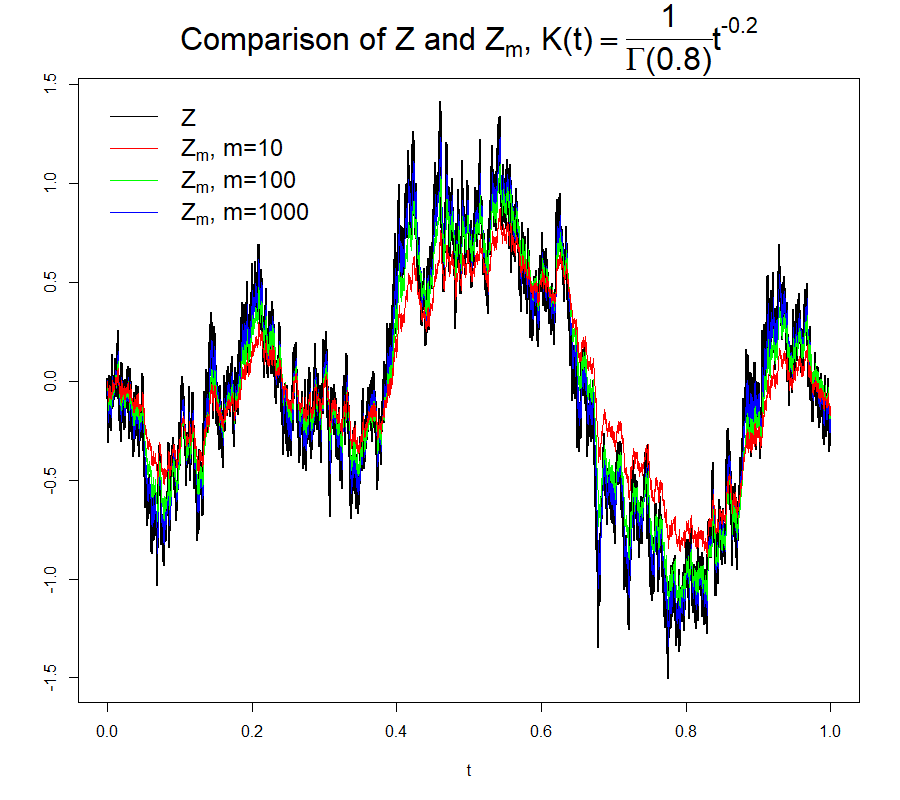}
        (b) Volterra noises $Z$ and $Z_m$
    \end{minipage}
    \begin{minipage}[b]{0.4\linewidth}
        \centering
        \includegraphics[width=\textwidth]{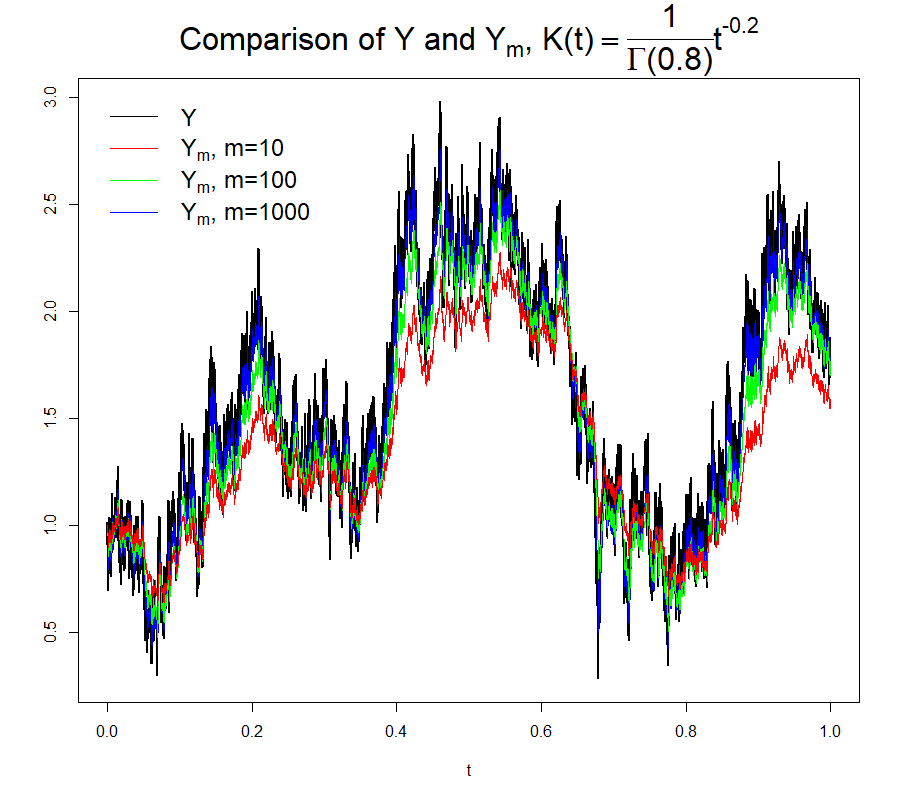}
        (c) Volatility processes $Y$ and $Y_m$
    \end{minipage}
    \begin{minipage}[b]{0.4\linewidth}
        \centering
        \includegraphics[width=\textwidth]{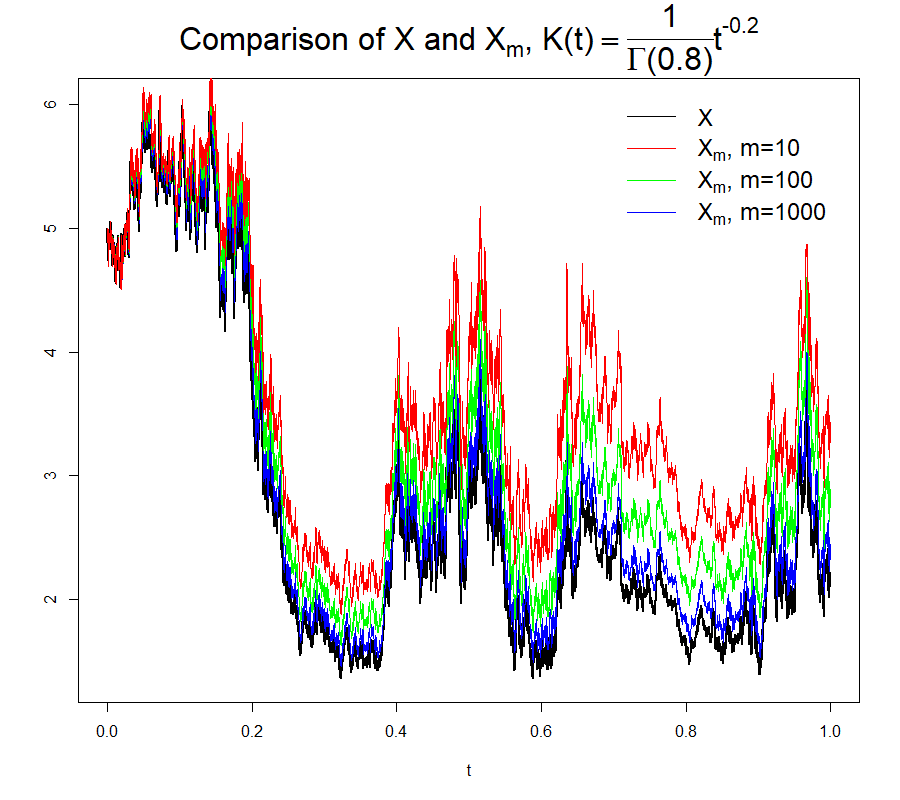}
        (d) Price processes $X$ and $X_m$
    \end{minipage}
    \caption{Approximation of the SVV model with rough fractional kernel}
    \label{fig: NMC rough}
    \end{figure}

    \begin{figure}[h!]
        \centering
        \includegraphics[width = 0.4\textwidth]{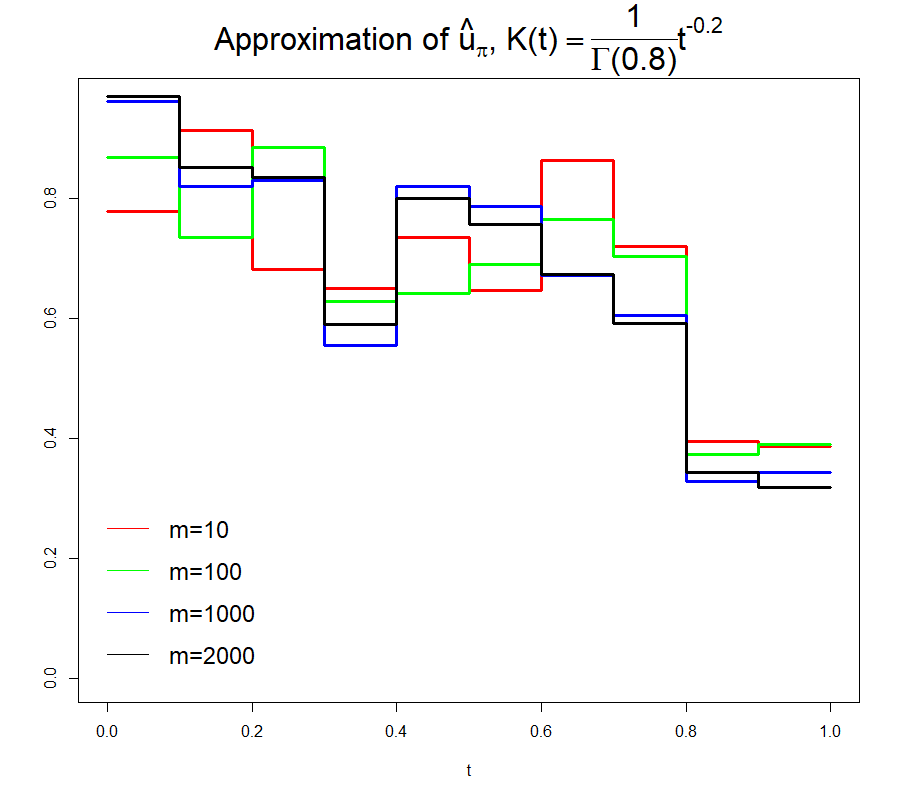}
        \caption{Hedging strategy estimated for the path from Fig. \ref{fig: NMC rough} for $m=10$ (red), $m=100$ (green), $m=1000$ (blue) and $m=2000$ (black). The corresponding partition is $k/10$, $k=0,1,...,10$. The figure also illustrates slower rate of convergence in comparison to the H\"older kernel case: the black and blue lines are close to each other but the red and green lines (corresponding to relatively low values of $m$) differ substantially. Computation time: 24162 seconds for $m=10$, 25348 seconds for $m=100$, 38530 seconds for $m=1000$ and 42431 seconds for $m=2000$. }
    \label{fig: NMC rough hedge}
    \end{figure}
\end{example}

\newpage

\subsection{Least squares Monte Carlo method}

Despite its simplicity and clear theoretical justification, the nested Monte Carlo approach has a substantial disadvantage: it takes long to compute and thus requires powerful computational resources in order to be used in practice. In order to overcome this issue, one can use the \textit{Least Squares Monte Carlo} (LSMC) method instead: 
\begin{itemize}
    \item[1)] simulate $N$ independent realizations
    \[
        (X^{(i)}_m,Y^{(i)}_m,U^{(i)}_{m,0},...,U^{(i)}_{m,m}) = \left\{\big(X^{(i)}_m(t),Y^{(i)}_m(t),U^{(i)}_{m,0}(t),...,U^{(i)}_{m,m}(t)\big),~t\in[0,T]\right\},
    \]
    $i = 1,...,N$;
    \item[2)] for each path $i=1,...,N$, evaluate 
    \[
        \big(X^{(i)}_m(t_k),Y^{(i)}_m(t_k),U^{(i)}_{m,0}(t_k),...,U^{(i)}_{m,m}(t_k)\big), \quad f(X^{(i)}_m(T))(X^{(i)}_m(t_{k+1}) - X^{(i)}_m(t_k)), \quad
        (X^{(i)}_m(t_{k+1}) - X^{(i)}_m(t_k))^2;
    \]
    \item[3)] apply an appropriate non-parametric regression (with a mean squared error loss function) to the generated ``dataset'' treating
    \[
        \big(X^{(i)}_m(t_k),Y^{(i)}_m(t_k),U^{(i)}_{m,0}(t_k),...,U^{(i)}_{m,m}(t_k)\big), \quad i = 1,...,N,
    \]
    as ``input'' and 
    \[
        f(X^{(i)}_m(T))(X^{(i)}_m(t_{k+1}) - X^{(i)}_m(t_k)), \quad (X^{(i)}_m(t_{k+1}) - X^{(i)}_m(t_k))^2, \quad i = 1,...,N,
    \]
    as ``output'' variables. The resulting estimates $\widehat \Phi_1(t_k, \cdot)$, $\widehat \Phi_2(t_k, \cdot)$ of the regression functions are then used to calculate $\widehat u_\pi$ on the interval $(t_k, t_{k+1}]$.
\end{itemize}

\begin{remark}
    The Least Squares Monte Carlo method described above can be regarded as a simplified version of the approach suggested in \cite{Carriere_1996, Longstaff_Schwartz_2001, Tsitsiklis_Van_Roy_2001} for pricing American options or in \cite{Krah_Nikolic_Korn_2018, Krah_Nikolic_Korn_2020b, Krah_Nikolic_Korn_2020} for modeling of life insurance companies.
\end{remark}

\begin{example}\label{ex: LSMC}
    Fig. \ref{fig: LSMC} contains approximations of the optimal hedging strategy constructed for the path of $X_m$ from Example \ref{ex: NMC Holder} that corresponds to $m=10$. In order to obtain the dark green line, we simulated the ``dataset'' containing the ``input'' variables
    \[
        \big(X^{(i)}_m(t_k),Y^{(i)}_m(t_k),U^{(i)}_{m,0}(t_k),...,U^{(i)}_{m,m}(t_k)\big), \quad i = 1,...,N,
    \]
    together with the corresponding ``output'' variables 
    \[
        f(X^{(i)}_m(T))(X^{(i)}_m(t_{k+1}) - X^{(i)}_m(t_k)), \quad (X^{(i)}_m(t_{k+1}) - X^{(i)}_m(t_k))^2, \quad i = 1,...,N,
    \]
    with $N=1000000$ for each $k=0,1,...,9$. In order to compute $\widehat \Phi_1(t_k, \cdot)$, $\widehat \Phi_2(t_k, \cdot)$, we used the idea from \cite{Krah_Nikolic_Korn_2020b} and utilized a neural network with 3 hidden layers and 20 nods in each layer. The red line is exactly the path of $\hat u_\pi$ for $m=10$ from Fig. \ref{fig: NMC hedge} and is treated as a reference. 
    \begin{figure}[h!]
        \centering
        \includegraphics[width = 0.5\textwidth]{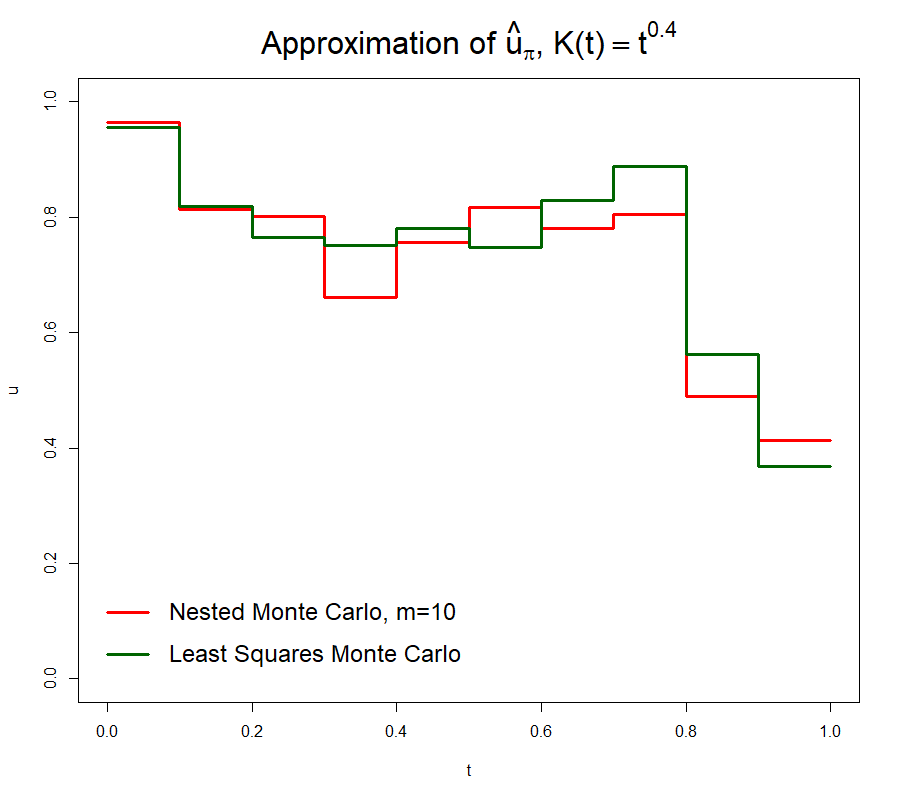}
        \caption{Approximation of the optimal hedge using the Nested Monte Carlo method for $m=10$ from Fig. \ref{fig: NMC hedge} (red) and using the Least Squares Monte Carlo method (dark green).}
        \label{fig: LSMC}
    \end{figure}
    
    Note that the Nested approach described in Example \ref{ex: NMC Holder} required 63231 seconds to simulate a path of $\hat u_\pi$ corresponding to a \textit{single} realization of $(X_m, Y_m, U_{m,0}, ... U_{m,m})$ whereas the Least Squares modification took roughly 6 days to simulate the initial dataset and about 5 hours to fit the neural network. Once ready, the actual computations were conducted almost instantly.
\end{example}

\begin{remark}
    The neural network architecture in Example \ref{ex: LSMC} is not optimal. It is fairly clear from Fig.~\ref{fig: LSMC}: even though the LSMC Monte Carlo estimate preserves in general the shape of the reference optimal hedge simulated using the NMC approach, the difference between the two is still quite substantial. We emphasize that performance of the NMC method heavily depends on the chosen non-parametric regression method, and therefore a separate investigation on performance of different regression approaches is required.
\end{remark}

\begin{remark}
    Similar method can theoretically be applied for the SVV model with a rough fractional kernel. However, if one chooses the values of $\tau_{m,i}$ as in \eqref{eq: possible choice of tau}, one may need a very high dimensionality of the ``input'' variables vector (over 1000) which requires a dataset with possibly unrealistically huge number of observations. Perhaps a different choice of $\tau_{m,i}$ (e.g. as in \cite{Abi_Jaber_2019}) may improve the situation; a separate analysis on that is needed.
\end{remark}

\bibliographystyle{acm}
\bibliography{biblio.bib}

\begin{thebibliography}{10}

\bibitem{Abi_Jaber_2019}
{\sc Abi~Jaber, E.}
\newblock Lifting the {H}eston model.
\newblock {\em Quantitative finance 19}, 12 (2019), 1995–2013.

\bibitem{Abi_Jaber_2022}
{\sc Abi~Jaber, E.}
\newblock The characteristic function of {G}aussian stochastic volatility
  models: an analytic expression.
\newblock {\em Finance and stochastics 26}, 4 (2022), 733–769.

\bibitem{AJE2019}
{\sc Abi~Jaber, E., and El~Euch, O.}
\newblock Multifactor approximation of rough volatility models.
\newblock {\em SIAM journal on financial mathematics 10}, 2 (2019), 309–349.

\bibitem{Jaber_Illand_Shaun_Li_2022}
{\sc Abi~Jaber, E., Illand, C., and Li, S.}
\newblock Joint {SPX}-{VIX} calibration with {G}aussian polynomial volatility
  models: deep pricing with quantization hints.
\newblock {\em arXiv [q-fin.MF]\/} (2022).

\bibitem{AbiJaber_Miller_Pham_2021}
{\sc Abi~Jaber, E., Miller, E., and Pham, H.}
\newblock Linear-quadratic control for a class of stochastic {V}olterra
  equations: {S}olvability and approximation.
\newblock {\em The annals of applied probability 31}, 5 (2021).

\bibitem{Alfonsi_2010}
{\sc Alfonsi, A.}
\newblock High order discretization schemes for the {CIR} process: Application
  to affine term structure and {H}eston models.
\newblock {\em Mathematics of computation 79}, 269 (2010), 209–209.

\bibitem{Alos_Garcia_Lorite_2021}
{\sc Alòs, E., and Garcia~Lorite, D.}
\newblock {\em Malliavin calculus in finance: Theory and practice}.
\newblock CRC Press, London, England, 2021.

\bibitem{Alos_Leon_Vives_2007}
{\sc Alòs, E., León, J.~A., and Vives, J.}
\newblock On the short-time behavior of the implied volatility for
  jump-diffusion models with stochastic volatility.
\newblock {\em Finance and stochastics 11}, 4 (2007), 571–589.

\bibitem{AP2006}
{\sc Andersen, L. B.~G., and Piterbarg, V.~V.}
\newblock Moment explosions in stochastic volatility models.
\newblock {\em Finance Stoch. 11}, 1 (2007), 29--50.

\bibitem{AndersenBollerslev1997}
{\sc Andersen, T.~G., and Bollerslev, T.}
\newblock Intraday periodicity and volatility persistence in financial markets.
\newblock {\em Journal of empirical finance 4}, 2–3 (1997), 115–158.

\bibitem{AndersenBollerslevDieboldLabys2001}
{\sc Andersen, T.~G., Bollerslev, T., Diebold, F.~X., and Labys, P.}
\newblock The distribution of realized exchange rate volatility.
\newblock {\em Journal of the American Statistical Association 96}, 453 (2001),
  42–55.

\bibitem{Avikainen2009}
{\sc Avikainen, R.}
\newblock On irregular functionals of {SDE}s and the {E}uler scheme.
\newblock {\em Finance and stochastics 13}, 3 (2009), 381–401.

\bibitem{Ayache_Peng_2012}
{\sc Ayache, A., and Peng, Q.}
\newblock Stochastic volatility and multifractional {B}rownian motion.
\newblock In {\em Stochastic Differential Equations and Processes\/} (2012),
  Springer Berlin Heidelberg, p.~211–237.

\bibitem{ASVY2014}
{\sc Azmoodeh, E., Sottinen, T., Viitasaari, L., and Yazigi, A.}
\newblock Necessary and sufficient conditions for {H}ölder continuity of
  {G}aussian processes.
\newblock {\em Statistics \& Probability Letters 94\/} (2014), 230 -- 235.

\bibitem{BayerFrizGatheral2016}
{\sc Bayer, C., Friz, P., and Gatheral, J.}
\newblock Pricing under rough volatility.
\newblock {\em Quantitative finance 16}, 6 (2016), 887–904.

\bibitem{BMdP2018}
{\sc Bezborodov, V., Persio, L.~D., and Mishura, Y.}
\newblock Option pricing with fractional stochastic volatility and
  discontinuous payoff function of polynomial growth.
\newblock {\em Methodology and Computing in Applied Probability 21}, 1 (Aug.
  2018), 331--366.

\bibitem{BollerslevMikkelsen1996}
{\sc Bollerslev, T., and Mikkelsen, H.~O.}
\newblock Modeling and pricing long memory in stock market volatility.
\newblock {\em Journal of Econometrics 73}, 1 (1996), 151–184.

\bibitem{Bauerle_Desmettre_2020}
{\sc Bäuerle, N., and Desmettre, S.}
\newblock Portfolio optimization in fractional and rough {H}eston models.
\newblock {\em SIAM journal on financial mathematics 11}, 1 (2020), 240–273.

\bibitem{CarmonaCoutin1998}
{\sc Carmona, P., and Coutin, L.}
\newblock Fractional {B}rownian motion and the {M}arkov property.
\newblock {\em Electronic communications in probability 3\/} (1998), 95–107.

\bibitem{Carriere_1996}
{\sc Carriere, J.~F.}
\newblock Valuation of the early-exercise price for options using simulations
  and nonparametric regression.
\newblock {\em Insurance, mathematics \& economics 19}, 1 (1996), 19–30.

\bibitem{ChronopoulouViens2012}
{\sc Chronopoulou, A., and Viens, F.~G.}
\newblock Estimation and pricing under long-memory stochastic volatility.
\newblock {\em Annals of finance 8}, 2–3 (2012), 379–403.

\bibitem{ComteCoutinRenault2012}
{\sc Comte, F., Coutin, L., and Renault, E.}
\newblock Affine fractional stochastic volatility models.
\newblock {\em Annals of finance 8}, 2–3 (2012), 337–378.

\bibitem{ComteRenault1998}
{\sc Comte, F., and Renault, E.}
\newblock Long memory in continuous-time stochastic volatility models.
\newblock {\em Mathematical Finance 8}, 4 (1998), 291–323.

\bibitem{Cont2005}
{\sc Cont, R.}
\newblock Long range dependence in financial markets.
\newblock In {\em Fractals in Engineering}. Springer-Verlag, 2005,
  p.~159–179.

\bibitem{Cont2006}
{\sc Cont, R.}
\newblock Volatility clustering in financial markets: {E}mpirical facts and
  agent-based models.
\newblock In {\em Long Memory in Economics}. Springer Berlin Heidelberg, 2006,
  p.~289–309.

\bibitem{Cont_2010}
{\sc Cont, R.}
\newblock {\em Encyclopedia of Quantitative Finance}.
\newblock Wiley, 2010.

\bibitem{CLV_2014}
{\sc Corlay, S., Lebovits, J., and Lévy~Véhel, J.}
\newblock Multifractional stochastic volatility models.
\newblock {\em Mathematical Finance. An International Journal of Mathematics,
  Statistics and Financial Economics 24}, 2 (2014), 364–402.

\bibitem{DiNunno}
{\sc Di~Nunno, G.}
\newblock Stochastic integral representations, stochastic derivatives and
  minimal variance hedging.
\newblock {\em Stoch. Stoch. Rep. 73}, 1-2 (2002), 181--198.

\bibitem{Di_Nunno_Kubilius_Mishura_Yurchenko-Tytarenko_2023}
{\sc Di~Nunno, G., Kubilius, K., Mishura, Y., and Yurchenko-Tytarenko, A.}
\newblock From constant to rough: {A} survey of continuous volatility modeling.
\newblock {\em Mathematics 11}, 19 (2023), 4201.

\bibitem{DNMYT2022}
{\sc Di~Nunno, G., Mishura, Y., and Yurchenko-Tytarenko, A.}
\newblock Option pricing in {V}olterra {S}andwiched {V}olatility model.
\newblock {\em arXiv:2209.10688 [q-fin.MF]\/} (2022).

\bibitem{DNMYT2022-1}
{\sc Di~Nunno, G., Mishura, Y., and Yurchenko-Tytarenko, A.}
\newblock Drift-implicit {E}uler scheme for sandwiched processes driven by
  {H}ölder noises.
\newblock {\em Numerical algorithms 93}, 2 (2023), 459–491.

\bibitem{DNMYT2020}
{\sc Di~Nunno, G., Mishura, Y., and Yurchenko-Tytarenko, A.}
\newblock Sandwiched {SDE}s with unbounded drift driven by {H}ölder noises.
\newblock {\em Advances in applied probability 55}, 3 (2023), 927–964.

\bibitem{DN_YT_power_law_2023}
{\sc Di~Nunno, G., and Yurchenko-Tytarenko, A.}
\newblock Power law in {S}andwiched {V}olterra {V}olatility model.
\newblock {\em ArXiV:2311.01228\/} (2023).

\bibitem{Nunno_Oksendal_Proske_2009}
{\sc Di~Nunno, G., Øksendal, B., and Proske, F.}
\newblock {\em Malliavin calculus for {L}évy processes with applications to
  finance}.
\newblock Springer, Berlin, Heidelberg, 2009.

\bibitem{DingGrangerEngle1993}
{\sc Ding, Z., Granger, C. W.~J., and Engle, R.~F.}
\newblock A long memory property of stock market returns and a new model.
\newblock {\em Journal of empirical finance 1}, 1 (1993), 83–106.

\bibitem{EuchGatheralRosenbaum2018}
{\sc El~Euch, O., Gatheral, J., and Rosenbaum, M.}
\newblock Roughening {H}eston.
\newblock {\em SSRN Electronic Journal\/} (2018).

\bibitem{Euch_Rosenbaum_2018}
{\sc El~Euch, O., and Rosenbaum, M.}
\newblock Perfect hedging in rough {H}eston models.
\newblock {\em The annals of applied probability: an official journal of the
  Institute of Mathematical Statistics 28}, 6 (2018).

\bibitem{EuchRosenbaum2019}
{\sc El~Euch, O., and Rosenbaum, M.}
\newblock The characteristic function of rough {H}eston models.
\newblock {\em Mathematical Finance. An International Journal of Mathematics,
  Statistcs and Financial Economics 29}, 1 (2019), 3–38.

\bibitem{Fukasawa_2021}
{\sc Fukasawa, M.}
\newblock Volatility has to be rough.
\newblock {\em Quantitative finance 21}, 1 (2021), 1–8.

\bibitem{Fukasawa_Gatheral_2022}
{\sc Fukasawa, M., and Gatheral, J.}
\newblock A rough {SABR} formula.
\newblock {\em Frontiers of Mathematical Finance 1}, 1 (2022), 81.

\bibitem{Fukasawa_Takabatake_Westphal_2019}
{\sc Fukasawa, M., Takabatake, T., and Westphal, R.}
\newblock Consistent estimation for fractional stochastic volatility model
  under high‐frequency asymptotics.
\newblock {\em Mathematical Finance. An International Journal of Mathematics,
  Statistcs and Financial Economics 32}, 4 (2022), 1086–1132.

\bibitem{Funahashi_Kijima_2017}
{\sc Funahashi, H., and Kijima, M.}
\newblock Does the {H}urst index matter for option prices under fractional
  volatility?
\newblock {\em Annals of finance 13}, 1 (2017), 55–74.

\bibitem{Funahashi_Kijima_2017-1}
{\sc Funahashi, H., and Kijima, M.}
\newblock A solution to the time-scale fractional puzzle in the implied
  volatility.
\newblock {\em Fractal and fractional 1}, 1 (2017), 14.

\bibitem{GRR}
{\sc Garsia, A., Rodemich, E., and Rumsey, H.}
\newblock A real variable lemma and the continuity of paths of some {G}aussian
  processes.
\newblock {\em Indiana Univ. Math. J. 20\/} (1970), 565–578.

\bibitem{GatheralJaissonRosenbaum2018}
{\sc Gatheral, J., Jaisson, T., and Rosenbaum, M.}
\newblock Volatility is rough.
\newblock {\em Quantitative finance 18}, 6 (2018), 933–949.

\bibitem{Gerhold_Gerstenecker_Pinter_2019}
{\sc Gerhold, S., Gerstenecker, C., and Pinter, A.}
\newblock Moment explosions in the rough {H}eston model.
\newblock {\em Decisions in Economics and Finance 42}, 2 (2019), 575–608.

\bibitem{Harms_Stefanovits_2019}
{\sc Harms, P., and Stefanovits, D.}
\newblock Affine representations of fractional processes with applications in
  mathematical finance.
\newblock {\em Stochastic processes and their applications 129}, 4 (2019),
  1185–1228.

\bibitem{Kallsen_Muhle-Karbe_2010}
{\sc Kallsen, J., and Muhle-Karbe, J.}
\newblock Utility maximization in affine stochastic volatility models.
\newblock {\em International journal of theoretical and applied finance 13}, 03
  (2010), 459–477.

\bibitem{Keller-Ressel_2011}
{\sc Keller-Ressel, M.}
\newblock Moment explosions and long-term behavior of affine stochastic
  volatility models.
\newblock {\em Mathematical Finance 21}, 1 (2011), 73–98.

\bibitem{Krah_Nikolic_Korn_2018}
{\sc Krah, A.-S., Nikolić, Z., and Korn, R.}
\newblock A least-squares {M}onte {C}arlo framework in proxy modeling of life
  insurance companies.
\newblock {\em Risks 6}, 2 (2018), 62.

\bibitem{Krah_Nikolic_Korn_2020b}
{\sc Krah, A.-S., Nikolić, Z., and Korn, R.}
\newblock Least-squares {M}onte {C}arlo for proxy modeling in life insurance:
  {N}eural networks.
\newblock {\em Risks 8}, 4 (2020), 116.

\bibitem{Krah_Nikolic_Korn_2020}
{\sc Krah, A.-S., Nikolić, Z., and Korn, R.}
\newblock Machine learning in least-squares {M}onte {C}arlo proxy modeling of
  life insurance companies.
\newblock {\em Risks 8}, 1 (2020), 21.

\bibitem{Longstaff_Schwartz_2001}
{\sc Longstaff, F.~A., and Schwartz, E.~S.}
\newblock Valuing american options by simulation: {A} simple least-squares
  approach.
\newblock {\em The review of financial studies 14}, 1 (2001), 113–147.

\bibitem{Mathe1999}
{\sc Mathé, P.}
\newblock Approximation of {H}ölder continuous functions by {B}ernstein
  polynomials.
\newblock {\em The American mathematical monthly: the official journal of the
  Mathematical Association of America 106}, 6 (1999), 568–574.

\bibitem{Merino_Pospisil_Sobotka_Sottinen_Vives_2021}
{\sc Merino, R., Pospíšil, J., Sobotka, T., Sottinen, T., and Vives, J.}
\newblock Decomposition formula for rough {V}olterra stochastic volatility
  models.
\newblock {\em International journal of theoretical and applied finance 24}, 02
  (2021), 2150008.

\bibitem{Nualart_2006}
{\sc Nualart, D.}
\newblock {\em The {M}alliavin calculus and related topics}, 2~ed.
\newblock Springer, Berlin, Germany, 2006.

\bibitem{Pham_2000}
{\sc Pham, H.}
\newblock On quadratic hedging in continuous time.
\newblock {\em Mathematical methods of operations research 51}, 2 (2000),
  315–339.

\bibitem{Rosenbaum_2008}
{\sc Rosenbaum, M.}
\newblock Estimation of the volatility persistence in a discretely observed
  diffusion model.
\newblock {\em Stochastic processes and their applications 118}, 8 (2008),
  1434–1462.

\bibitem{Schweizer_2001}
{\sc Schweizer, M.}
\newblock A guided tour through quadratic hedging approaches.
\newblock In {\em Handbooks in Mathematical Finance: Option Pricing, Interest
  Rates and Risk Management}, E.~Jouini, J.~Cvitanic, and Musiela, Eds.
  Cambridge University Press, 2001, p.~538–574.

\bibitem{Tsitsiklis_Van_Roy_2001}
{\sc Tsitsiklis, J.~N., and Van~Roy, B.}
\newblock Regression methods for pricing complex {A}merican-style options.
\newblock {\em IEEE transactions on neural networks 12}, 4 (2001), 694–703.

\bibitem{WX2022}
{\sc Wang, R., and Xiao, Y.}
\newblock Exact uniform modulus of continuity and {C}hung’s {LIL} for the
  generalized fractional {B}rownian motion.
\newblock {\em Journal of theoretical probability\/} (2022).

\end{thebibliography}

\appendix

\section{Proof of Theorem \ref{th: approximation of volatility}}\label{appendix: proof of approximation of volatility}

    Fix $t\in[0,T]$ and consider a uniform partition $0=t_0 < t_1 <...<t_N = t$, $t_n = \frac{n}{N}t$, of the interval $[0,t]$ such that the mesh $\Delta_N := \frac{t}{N}$ satisfies the condition
    \begin{equation}\label{proofeq: condition on partition}
        c_3\Delta_N < 1,
    \end{equation}
    for $c_3$ from \textbf{(Y4)}. Denote $e_{n,m} := Y(t_n) - Y_m(t_{n})$ and observe that
    \begin{align*}
        e_{N,m} &= Y(t_{N-1}) + \int_{t_{N-1}}^{t_N} b(s, Y(s))ds + Z(t_N) - Z(t_{N-1})
        \\
        &\quad - Y_m(t_{N-1}) - \int_{t_{N-1}}^{t_N} b(s, Y_m(s))ds + Z_m(t_N) - Z_m(t_{N-1})
        \\
        &= e_{N-1, m} + \int_{t_{N-1}}^{t_N} \left(b(s, Y(s)) - b(t_N, Y(t_N))\right)ds - \int_{t_{N-1}}^{t_N} \left(b(s, Y_m(s)) - b(t_N, Y_m(t_N))\right)ds
        \\
        &\quad + \left(b(t_N, Y(t_N)) - b(t_N, Y_m(t_N))\right)\Delta_N +\left( Z(t_N) - Z(t_{N-1}) \right) - \left(Z_m(t_N) - Z_m(t_{N-1})\right).
    \end{align*}
    Note that, for each $n=1,...,N$, there exists $\Theta_{n,m}$ between $Y(t_n)$ and $Y_m(t_n)$ such that
    \[
        b(t_n, Y(t_n)) - b(t_n, Y_m(t_n)) = \frac{\partial b}{\partial y}(t_n, \Theta_{n,m}) (Y(t_n) - Y_m(t_n)) = \frac{\partial b}{\partial y}(t_n, \Theta_{n,m}) e_{n, m},
    \]
    and thus
    \begin{equation}\label{eq: error estimate 1}
    \begin{aligned}
        \bigg(1 - &\frac{\partial b}{\partial y}(t_N, \Theta_{N,m}) \Delta_N\bigg)e_{N,m}= e_{N-1, m} + \int_{t_{N-1}}^{t_N} \left(b(s, Y(s)) - b(t_N, Y(t_N))\right)ds 
        \\
        &- \int_{t_{N-1}}^{t_N} \left(b(s, Y_m(s)) - b(t_N, Y_m(t_N))\right)ds +\left( Z(t_N) - Z(t_{N-1}) \right) - \left(Z_m(t_N) - Z_m(t_{N-1})\right).
    \end{aligned}
    \end{equation}
    Observe that each $1 - \frac{\partial b}{\partial y}(t_N, \Theta_{N,m}) \Delta_N > 0$ by \eqref{proofeq: condition on partition} and denote
    \[
        \zeta_{0,m} :=1, \quad \zeta_{n,m} := \prod_{k=1}^n \left(1 - \frac{\partial b}{\partial y}(t_k, \Theta_{k,m}) \Delta_N\right),
    \]
    $\widetilde e_{n,m} := \zeta_{n,m} e_{n,m}$. By multiplying the left- and right-hand sides of \eqref{eq: error estimate 1} by $\zeta_{N-1,m}$, we obtain:
    \begingroup
    \allowdisplaybreaks
    \begin{align*}
        \widetilde e_{N,m} &= \widetilde e_{N-1,m} + \zeta_{N-1,m} \int_{t_{N-1}}^{t_N} \left(b(s, Y(s)) - b(t_N, Y(t_N))\right)ds
        \\
        &\quad - \zeta_{N-1,m} \int_{t_{N-1}}^{t_N} \left(b(s, Y_m(s)) - b(t_N, Y_m(t_N))\right)ds
        \\
        &\quad +\zeta_{N-1,m} \Big(\left( Z(t_N) - Z(t_{N-1}) \right) - \left(Z_m(t_N) - Z_m(t_{N-1})\right)\Big)
        \\
        &= \widetilde e_{N-2,m} + \sum_{n=N-1}^N\zeta_{n-1,m} \int_{t_{n-1}}^{t_n} \left(b(s, Y(s)) - b(t_n, Y(t_n))\right)ds
        \\
        &\quad - \sum_{n=N-1}^N \zeta_{n-1,m} \int_{t_{n-1}}^{t_n} \left(b(s, Y_m(s)) - b(t_n, Y_m(t_n))\right)ds
        \\
        &\quad + \sum_{n=N-1}^N \zeta_{n-1,m} \left(\left( Z(t_n) - Z(t_{n-1}) \right) - \left(Z_m(t_n) - Z_m(t_{n-1})\right)\right)
        \\
        &= \sum_{n=1}^N\zeta_{n-1,m} \int_{t_{n-1}}^{t_n} \left(b(s, Y(s)) - b(t_n, Y(t_n))\right)ds  - \sum_{n=1}^N \zeta_{n-1,m} \int_{t_{n-1}}^{t_n} \left(b(s, Y_m(s)) - b(t_n, Y_m(t_n))\right)ds
        \\
        &\quad + \sum_{n=1}^N \zeta_{n-1,m} \left(\left( Z(t_n) - Z(t_{n-1}) \right) - \left(Z_m(t_n) - Z_m(t_{n-1})\right)\right),
    \end{align*}
    \endgroup
    i.e.
    \begin{align*}
        e_{N,m} &= \sum_{n=1}^N \frac{\zeta_{n-1,m}}{\zeta_{N,m}} \int_{t_{n-1}}^{t_n} \left(b(s, Y(s)) - b(t_n, Y(t_n))\right)ds - \sum_{n=1}^N \frac{\zeta_{n-1,m}}{\zeta_{N,m}} \int_{t_{n-1}}^{t_n} \left(b(s, Y_m(s)) - b(t_n, Y_m(t_n))\right)ds
        \\
        &\quad + \sum_{n=1}^N \frac{\zeta_{n-1,m}}{\zeta_{N,m}} \left(\left( Z(t_n) - Z(t_{n-1}) \right) - \left(Z_m(t_n) - Z_m(t_{n-1})\right)\right).
    \end{align*}
    Next, observe that, by \textbf{(Y4)} and \eqref{proofeq: condition on partition}, for any $n=1,...,N$
    \[
        \frac{\zeta_{n-1,m}}{\zeta_{N,m}} = \prod_{k=n}^N \left( 1 - \frac{\partial b}{\partial y}(t_k, \Theta_{k,m}) \Delta_N \right)^{-1} \le \prod_{k=n}^N \left( 1 - c_3 \Delta_N \right)^{-1} \le (1 - c_3\Delta_N)^{-N}.
    \]
    Note that $(1 - c_3\Delta_N)^{-N}$ converges as $N\to\infty$ and hence is bounded w.r.t. $N$, therefore there exists a (non-random) constant $C>0$ that does not depend on $n$, $N$ or $m$ such that $\frac{\zeta_{n-1,m}}{\zeta_{N,m}} \le C$. Using this, one can easily deduce that
    \begingroup
    \allowdisplaybreaks
    \begin{equation}\label{eq: conv. of Markovian approximations 0}
    \begin{aligned}
        |e_{N,m}| &\le \left|\sum_{n=1}^N \frac{\zeta_{n-1,m}}{\zeta_{N,m}} \int_{t_{n-1}}^{t_n} \left(b(s, Y(s)) - b(t_n, Y(t_n))\right)ds\right| 
        \\
        &\qquad + \left|\sum_{n=1}^N \frac{\zeta_{n-1,m}}{\zeta_{N,m}} \int_{t_{n-1}}^{t_n} \left(b(s, Y_m(s)) - b(t_n, Y_m(t_n))\right)ds\right|
        \\
        &\qquad + \left| \sum_{n=1}^N \frac{\zeta_{n-1,m}}{\zeta_{N,m}} \left(\left( Z(t_n) - Z(t_{n-1}) \right) - \left(Z_m(t_n) - Z_m(t_{n-1})\right)\right) \right|
        \\
        &\le C \sum_{n=1}^N \int_{t_{n-1}}^{t_n} \left|b(s, Y(s)) - b(t_n, Y(t_n))\right|ds + C\sum_{n=1}^N \int_{t_{n-1}}^{t_n} \left|b(s, Y_m(s)) - b(t_n, Y_m(t_n))\right|ds
        \\
        &\qquad + \left| \sum_{n=1}^N \frac{\zeta_{n-1,m}}{\zeta_{N,m}} \left(\left( Z(t_n) - Z(t_{n-1}) \right) - \left(Z_m(t_n) - Z_m(t_{n-1})\right)\right) \right|
    \end{aligned}
    \end{equation}
    \endgroup
    Next, fix an arbitrary $\lambda \in \left( \frac{1}{1+\gamma}, H \right)$ and observe that \eqref{eq: upper and lower bounds for sandwiched volatility, general case} and \textbf{(Y2)} imply 
    \begin{align*}
        \left|b(s_1, Y(s_1)) - b(s_2, Y(s_2))\right| & \le C (L_2 + \Lambda)^{\frac{p}{\gamma \lambda + \lambda - 1}} \left( |s_1 - s_2|^H + |Y(s_1) - Y(s_2)|\right)
        \\
        & \le C (L_2 + \Lambda)^{\frac{p}{\gamma \lambda + \lambda - 1}} \left( |s_1 - s_2|^\lambda + |Y(s_1) - Y(s_2)|\right) 
    \end{align*}
    for all $s_1, s_2 \in [0,T]$. Therefore, by  Lemma \ref{lemma: Holder continuity of approximated volatilities}, 
    \begin{equation}\label{eq: conv. of Markovian approximations summ 1}
    \begin{aligned}
        \sum_{n=1}^N &\int_{t_{n-1}}^{t_n} \left|b(s, Y(s)) - b(t_n, Y(t_n))\right|ds
        \\
        &\le C(L_2 + \Lambda)^{\frac{p}{\gamma \lambda + \lambda - 1}}  \left(\sum_{n=1}^N \int_{t_{n-1}}^{t_n} | s - t_n |^{\lambda} ds +   \sum_{n=1}^N \int_{t_{n-1}}^{t_n} |Y(s) - Y(t_n)|ds \right)
        \\
        &\le C(L_2 + \Lambda)^{\frac{p}{\gamma \lambda + \lambda - 1}}  \left(\sum_{n=1}^N \int_{t_{n-1}}^{t_n} | s - t_n |^{\lambda} ds +  \Upsilon  \sum_{n=1}^N \int_{t_{n-1}}^{t_n} | s - t_n |^{\lambda} ds \right)
        \\
        & \le C(L_2 + \Lambda)^{\frac{p}{\gamma \lambda + \lambda - 1}}(1+ \Upsilon) \Delta_N^{ \lambda },
    \end{aligned}    
    \end{equation}
    where $\Upsilon$ is from Lemma \ref{lemma: Holder continuity of approximated volatilities} and has all the moments. Moreover, \eqref{eq: upper and lower bounds for sandwiched volatility, approximations} and the same argument as above yield that
    \begin{equation}\label{eq: conv. of Markovian approximations summ 2}
    \begin{aligned}
        \sum_{n=1}^N \int_{t_{n-1}}^{t_n} \left|b(s, Y_m(s)) - b(t_n, Y_m(t_n))\right|ds \le C(L_2 + \Lambda_m)^{\frac{p}{\gamma \lambda + \lambda - 1}}(1+ \Upsilon_m) \Delta_N^{\lambda}
    \end{aligned}    
    \end{equation}
    with $\Upsilon_m$ being from Lemma \ref{lemma: Holder continuity of approximated volatilities}. Finally, using Abel summation-by-parts formula, one can write:
    \begingroup
    \allowdisplaybreaks
    \begin{equation}\label{eq: conv. of Markovian approximations summ 3}
    \begin{aligned}
        &\bigg| \sum_{n=1}^N \frac{\zeta_{n-1,m}}{\zeta_{N,m}} \Big(\left( Z(t_n) - Z(t_{n-1}) \right) - \left(Z_m(t_n) - Z_m(t_{n-1})\right)\Big) \bigg|
        \\
        & = \bigg|\frac{\zeta_{N-1,m}}{\zeta_{N,m}} (Z(t_N) - Z_m(t_N)) - \sum_{n=1}^{N-1} (Z(t_n) - Z_m(t_n)) \left(\frac{\zeta_{n,m}}{\zeta_{N,m}} - \frac{\zeta_{n-1,m}}{\zeta_{N,m}}\right)\bigg| 
        \\
        & = \bigg|\frac{\zeta_{N-1,m}}{\zeta_{N,m}} (Z(t_N) - Z_m(t_N)) + \sum_{n=1}^{N-1} \frac{\zeta_{n-1,m}}{\zeta_{N,m}} \frac{\partial b}{\partial y}(t_n, \Theta_{n,m}) (Z(t_n) - Z_m(t_n))\Delta_N\bigg|
        \\
        &\le C |Z(t_N) - Z_m(t_N)| + C \sum_{n=1}^{N-1} \left|\frac{\partial b}{\partial y} (t_n, \Theta_{n,m})\right| |Z(t_n) - Z_m(t_n)|\Delta_N.
    \end{aligned}    
    \end{equation}
    \endgroup
    Random variables $\Theta_{n,m}$ lie between $Y(t_n)$ and $Y_m(t_n)$, hence
    \begin{gather*}
        \Theta_{n,m} - \varphi(t) \ge \frac{L_1}{(L_2 + \Lambda)^\frac{1}{\gamma \lambda + \lambda - 1}} \wedge \frac{L_1}{(L_2 + \Lambda_m)^\frac{1}{\gamma \lambda + \lambda - 1}}, 
        \\
        \psi(t) - \Theta_{n,m} \ge \frac{L_1}{(L_2 + \Lambda)^\frac{1}{\gamma \lambda + \lambda - 1}} \wedge \frac{L_1}{(L_2 + \Lambda_m)^\frac{1}{\gamma \lambda + \lambda - 1}},
    \end{gather*}
    and thus it is easy to see by \textbf{(Y4)} that 
    \[
        \left| \frac{\partial b}{\partial y} (t_n, \Theta_{n,m}) \right| \le C\left(1 + (L_2 + \Lambda)^{\frac{q}{\gamma \lambda + \lambda - 1}} + (L_2 + \Lambda_m)^{\frac{q}{\gamma \lambda + \lambda - 1}}\right),
    \]
    where $C$ is, as always, a deterministic constant that does not depend on $n$, $N$ or $m$. Therefore,
    \begin{align*}
        \bigg| \sum_{n=1}^N &\frac{\zeta_{n-1,m}}{\zeta_{N,m}} \left(\left( Z(t_n) - Z(t_{n-1}) \right) - \left(Z_m(t_n) - Z_m(t_{n-1})\right)\right) \bigg|
        \\
        &\le C \left(|Z(t_N) - Z_m(t_N)| + \xi_m \sum_{n=1}^{N-1} |Z(t_n) - Z_m(t_n)|\Delta_N \right),
    \end{align*}
    where $\xi_m := 1 + (L_2 + \Lambda)^{\frac{q}{\gamma \lambda + \lambda - 1}} + (L_2 + \Lambda_m)^{\frac{q}{\gamma \lambda + \lambda - 1}}.$
   
    Summarizing \eqref{eq: conv. of Markovian approximations 0} and \eqref{eq: conv. of Markovian approximations summ 1}--\eqref{eq: conv. of Markovian approximations summ 3}, we have that
    \begin{align*}
        |Y(t) - Y_m(t)| &\le C(L_2 + \Lambda)^{\frac{p}{\gamma \lambda + \lambda - 1}}(1+ \Upsilon) \Delta_N^{\lambda } + C(L_2 + \Lambda_m)^{\frac{p}{\gamma \lambda + \lambda - 1}}(1+ \Upsilon_m) \Delta_N^{ \lambda }
        \\
        &\quad + C |Z(t) - Z_m(t)|  + C \xi_m \sum_{n=1}^{N-1} |Z(t_n) - Z_m(t_n)|\Delta_N 
    \end{align*}
    and, moving $\Delta_N \to 0$, we obtain
    \begin{align*}
        |Y(t) - Y_m(t)| &\le C \left(|Z(t) - Z_m(t)| + \xi_m \int_0^t |Z(s) - Z_m(s)| ds \right).
    \end{align*}
    It remains to notice that $\sup_{m\ge 1} \mathbb E [\xi_m^r] < \infty$ for any $r>0$ due to \eqref{eq: uniform bound on Holder seminorms of the noise} from Lemma \ref{lemma: Holder continuity of approximated noises}.

    %Taking into account \eqref{eq: conv. of Markovian approximations 0} as well as estimates \eqref{eq: conv. of Markovian approximations 1}--\eqref{eq: conv. of Markovian approximations 3}, one can deduce that, for any partition of the interval $[0,t]$,
    %\begin{equation}\label{eq: conv. of Markovian approximations 4}
    %    \mathbb E \left[ |Y(t) - Y_m(t)|^2 \right] \le C\left( \lVert \mathcal K(t, \cdot) - \mathcal K_m(t, \cdot)\rVert^2_{L^2[0,T]} +  \sum_{n=1}^{N-1} \lVert \mathcal K(t_n, \cdot) - \mathcal K_m(t_n, \cdot) \rVert^2_{L^2[0,T]} \Delta_N + \Delta_N^{2\lambda} \right)
    %\end{equation}
    %Moving $\Delta_N \to 0$ in the right-hand side of \eqref{eq: conv. of Markovian approximations 4} then yields
    %\begin{equation}\label{eq: conv. of Markovian approximations 5}
    %    \mathbb E \left[ |Y(t) - Y_m(t)|^2 \right] \le C\left( \lVert \mathcal K(t, \cdot) - \mathcal K_m(t, \cdot)\rVert^2_{L^2[0,T]} + \lVert \mathcal K - \mathcal K_m \rVert^2_{L^2([0,T]^2)} \right).
    %\end{equation}
    %Finally, integrating both sides of \eqref{eq: conv. of Markovian approximations 5} w.r.t. $t$, we obtain that
    %\[
    %    \int_0^T \mathbb E \left[ |Y(t) - Y_m(t)|^2 \right] dt \le C \lVert \mathcal K - \mathcal K_m \rVert^2_{L^2([0,T]^2)},
    %\]
    %which ends the proof.

    \section{Estimates for the increments of $X$ and $X_m$}\label{appendix: estimates for price increments}

    In this Section, we will provide some technical estimates for the increments of $X$ and $X_m$, where $X$ is the SVV discounted price \eqref{eq: X} and $X_m$ is its approximation \eqref{eq: X approximated}. To allow for compact writing, we also denote
    \[
        W(t) := \rho B_1(t) + \sqrt{1 - \rho^2} B_2(t), \quad t\in [0,T],
    \]
    where $\rho \in (-1,1)$ is from \eqref{eq: S}.

    \begin{lemma}\label{lemma: bounds for conditional increments}
    Let Assumptions \ref{assum: kernel}, \ref{assum: drift} and \ref{assum: approx kernels} hold. For any pair $0 \le t_1 \le t_2 \le T$, there exist positive constants $C_1 < C_2$ that do not depend on $m$ or the particular choice of $t_1, t_2$ such that
    \[
        C_1(t_2-t_1)X^2(t_1) \le \mathbb E \left[(\Delta X)^2~|~\mathcal F_{t_1}\right] \le C_2(t_2-t_1)X^2(t_1)
    \]
    and
    \[
        C_1(t_2-t_1)X_m^2(t_1) \le \mathbb E \left[(\Delta X_m)^2~|~\mathcal F_{t_1}\right] \le C_2(t_2-t_1)X_m^2(t_1).
    \]
    Here above $\Delta X := X(t_2) - X(t_1)$, $\Delta X_m := X_m(t_2) - X_m(t_1)$.
\end{lemma}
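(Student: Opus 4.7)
The plan is to exploit the conditional It\^o isometry together with the sharp deterministic sandwich bounds $\varphi(t) < Y(t) < \psi(t)$ (which are uniform in $m$ because they are the very same $\varphi,\psi$ for all $Y_m$). Since we only treat $X$ and $X_m$ symmetrically, I describe the argument for $X$; the one for $X_m$ is identical, with $Y$ replaced by $Y_m$, and all constants remain independent of $m$ because $\varphi_{\min} := \min_{t\in[0,T]}\varphi(t) > 0$ and $\psi_{\max} := \max_{t\in[0,T]}\psi(t)$ do not depend on $m$.

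First I would write $\Delta X = \int_{t_1}^{t_2} Y(s) X(s)\,dW(s)$ using \eqref{eq: X dynamics} and apply the conditional It\^o isometry to obtain
\[
    \mathbb E[(\Delta X)^2\mid \mathcal F_{t_1}] = \int_{t_1}^{t_2} \mathbb E[Y^2(s) X^2(s)\mid \mathcal F_{t_1}]\,ds.
\]
Next, using the explicit representation from Theorem \ref{th: properties of price}, for $s \ge t_1$ I factor
\[
    X^2(s) = X^2(t_1)\, M_s, \qquad M_s := \exp\Bigl\{-\int_{t_1}^s Y^2(u)\,du + 2\int_{t_1}^s Y(u)\,dW(u)\Bigr\},
\]
so that $X^2(t_1)$ (which is $\mathcal F_{t_1}$-measurable) can be pulled out of the conditional expectation, leaving
\[
    \mathbb E[(\Delta X)^2\mid \mathcal F_{t_1}] = X^2(t_1)\int_{t_1}^{t_2} \mathbb E[Y^2(s)M_s\mid \mathcal F_{t_1}]\,ds.
\]

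The key step is to bound $m(s) := \mathbb E[M_s\mid \mathcal F_{t_1}]$. Applying It\^o's formula to $M_s$ gives $dM_s = M_s Y^2(s)\,ds + 2 M_s Y(s)\,dW(s)$, whence, since $M$ is positive and its stochastic integral part is a genuine martingale (use $Y \le \psi_{\max}$ and Novikov, as in Lemma \ref{lemma: moments of prices}),
\[
    m(s) = 1 + \int_{t_1}^s \mathbb E[M_u Y^2(u)\mid \mathcal F_{t_1}]\,du.
\]
Combining with $\varphi_{\min}^2 \le Y^2(u) \le \psi_{\max}^2$, the integral inequality and Gr\"onwall yield
\[
    e^{\varphi_{\min}^2 (s-t_1)} \le m(s) \le e^{\psi_{\max}^2 (s-t_1)},
\]
so that, bounding $Y^2(s)$ inside the remaining integral by the same deterministic constants,
\[
    \varphi_{\min}^2 X^2(t_1)\!\!\int_{t_1}^{t_2}\! e^{\varphi_{\min}^2(s-t_1)} ds \le \mathbb E[(\Delta X)^2\mid \mathcal F_{t_1}] \le \psi_{\max}^2 X^2(t_1)\!\!\int_{t_1}^{t_2}\! e^{\psi_{\max}^2(s-t_1)} ds.
\]
The left-hand integral is $\ge (t_2-t_1)$ and the right-hand one is $\le (t_2-t_1)\, e^{\psi_{\max}^2 T}$, giving the claimed bounds with $C_1 := \varphi_{\min}^2$ and $C_2 := \psi_{\max}^2\, e^{\psi_{\max}^2 T}$.

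I do not expect any substantial obstacle: the sandwich property from Theorem \ref{th: properties of Y} reduces everything to deterministic bounds, and the only point one needs to double-check is that the stochastic integral in the It\^o formula for $M_s$ is a true (and not merely local) martingale, which is immediate from the uniform boundedness of $Y$ (resp.\ $Y_m$) via Novikov's criterion. The $m$-uniformity of $C_1$, $C_2$ is then automatic since $\varphi$ and $\psi$ are the same for every $Y_m$.
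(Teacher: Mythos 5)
Your proof is correct and follows essentially the same route as the paper: both reduce $\mathbb E[(\Delta X)^2\mid\mathcal F_{t_1}]$ to $\int_{t_1}^{t_2}\mathbb E[Y^2(s)X^2(s)\mid\mathcal F_{t_1}]\,ds$ via It\^o, then control $\mathbb E[X^2(s)\mid\mathcal F_{t_1}]$ by $X^2(t_1)$ using the exponential-martingale structure and the deterministic sandwich bounds on $Y$ (resp.\ $Y_m$), which give $m$-uniform constants. The only cosmetic difference is that you run Gr\"onwall (and its reverse, valid here since the integrand is nonnegative) on $\mathbb E[M_s\mid\mathcal F_{t_1}]$, whereas the paper factors $X^2(s)$ as the true martingale $\mathcal E_s(2Y\cdot W)$ times a bounded correction and conditions directly.
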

\begin{proof}
    We only give the proof for the process $X$; the one for $X_m$ is identical. By the very definition of $X$ and It\^o lemma, we have
    \begin{equation}\label{proofeq: bounds for conditional increments 1}
    \begin{aligned}
        \mathbb E \left[(\Delta X)^2~|~\mathcal F_{t_1}\right] &= \mathbb E \left[2\int_{t_1}^{t_2} (X(s) - X(t_1)) Y(s) X(s) dW(s) ~\bigg|~\mathcal F_{t_1}\right] + \mathbb E \left[\int_{t_1}^{t_2} Y^2(s) X^2(s) ds ~\bigg|~\mathcal F_{t_1}\right]
        \\
        & = \int_{t_1}^{t_2} \mathbb E \left[Y^2(s) X^2(s) ~\big|~\mathcal F_{t_1}\right] ds.
    \end{aligned}
    \end{equation}
    We remark that the above relies on the boundedness of $Y$ and the moments of $X$.
    
    Let now $s>t_1$ be fixed. Note that, with probability 1,
    \begin{equation}\label{proofeq: bounds for conditional increments 2}
    \begin{aligned}
        \mathbb E \left[X^2(s) ~\big|~\mathcal F_{t_1}\right]\min_{t\in[0,T]} \varphi(t) \le \mathbb E \left[Y^2(s) X^2(s) ~\big|~\mathcal F_{t_1}\right] \le  \mathbb E \left[X^2(s) ~\big|~\mathcal F_{t_1}\right]\max_{t\in[0,T]} \psi(t)
    \end{aligned}
    \end{equation}
    and
    \begin{align*}
        \mathbb E \left[X^2(s) ~\big|~\mathcal F_{t_1}\right] &= \mathbb E \left[\exp\left\{ 2 \int_0^s Y(u) dW(u) - \int_0^s Y^2(u) du \right\} ~\big|~\mathcal F_{t_1}\right]
        \\
        &= \mathbb E \left[\exp\left\{  \int_0^s 2Y(u) dW(u) - \frac{1}{2}\int_0^s (2Y(u))^2 du + \int_0^s Y^2(u) du \right\} ~\big|~\mathcal F_{t_1}\right],
    \end{align*}
    whence, by the boundedness of $Y$,
    \begin{align*}
        \mathbb E \left[X^2(s) ~\big|~\mathcal F_{t_1}\right] &\le e^{T\max_{t\in[0,T] }\psi(t)}\mathbb E \left[\exp\left\{ \int_0^s 2Y(u) dW(u) - \frac{1}{2}\int_0^s (2Y(u))^2 du \right\} ~\big|~\mathcal F_{t_1}\right]
        \\
        \mathbb E \left[X^2(s) ~\big|~\mathcal F_{t_1}\right] &\ge \mathbb E \left[\exp\left\{ \int_0^s 2Y(u) dW(u) - \frac{1}{2}\int_0^s (2Y(u))^2 du \right\} ~\big|~\mathcal F_{t_1}\right].
    \end{align*}
    Next, by the Novikov criterion, the process
    \[
        \exp\left\{\int_0^s 2Y(u) dW(u) - \frac{1}{2}\int_0^s (2Y(u))^2 du \right\}, \quad s\in[0,T],
    \]
    is a martingale, and hence, for $s>t_1$,
    \begin{align*}
        \mathbb E \left[\exp\left\{  \int_0^s 2Y(u) dW(u) - \frac{1}{2}\int_0^s (2Y(u))^2 du  \right\} ~\big|~\mathcal F_{t_1}\right] &= \exp\left\{  \int_0^{t_1} 2Y(u) dW(u) - \frac{1}{2}\int_0^{t_1} (2Y(u))^2 du \right\}
        \\
        &= X^2(t_1) \exp\left\{ -\int_0^{t_1} Y^2(s)ds \right\}.
    \end{align*}
    Therefore,
    \begin{align*}
        \mathbb E \left[\exp\left\{  \int_0^s 2Y(u) dW(u) - \frac{1}{2}\int_0^s (2Y(u))^2 du  \right\} ~\big|~\mathcal F_{t_1}\right] &\le X^2(t_1),
        \\
        \mathbb E \left[\exp\left\{  \int_0^s 2Y(u) dW(u) - \frac{1}{2}\int_0^s (2Y(u))^2 du  \right\} ~\big|~\mathcal F_{t_1}\right] &\ge X^2(t_1) e^{ -T\max_{t\in[0,T]} \psi(t) },
    \end{align*}
    and we can now write
    \begin{equation}\label{proofeq: bounds for conditional increments 3}
        e^{ -T\max_{t\in[0,T]} \psi(t) }X^2(t_1) \le \mathbb E \left[X^2(s) ~\big|~\mathcal F_{t_1}\right] \le e^{T\max_{t\in[0,T] }\psi(t)} X^2(t_1).
    \end{equation}
   
    Finally, by \eqref{proofeq: bounds for conditional increments 1}--\eqref{proofeq: bounds for conditional increments 3}, 
    \[
        C_1(t_2 - t_1)  X^2(t_1) \le \mathbb E \left[(\Delta X)^2~|~\mathcal F_{t_1}\right] \le C_2(t_2 - t_1) X^2(t_1)
    \]
    where
    \begin{align*}
        C_1  := \min_{t\in[0,T]} \varphi(t) e^{-T\max_{t\in[0,T] }\psi(t)}, \quad  C_2 := \max_{t\in[0,T]} \psi(t) e^{T\max_{t\in[0,T] }\psi(t)}.
    \end{align*}
\end{proof}

\begin{lemma}\label{lemma: difference between increments}
    Let Assumptions \ref{assum: kernel}, \ref{assum: drift} and \ref{assum: approx kernels} hold and both $\mathcal K$ and $\mathcal K_m$, $m\ge 1$, have the form
    \[
        \mathcal K(t,s) = \mathcal K(t-s)\mathbbm 1_{s<t}, \quad \mathcal K_m(t,s) = \mathcal K_m(t-s)\mathbbm 1_{s<t}, \quad t,s \in [0,T].
    \]
    For any $0 \le t_1 \le t_2 \le T$ and $r\ge 2$, there exists a constant $C>0$ that does not depend on $m$ or the particular choice of $t_1, t_2 \in [0,T]$ such that
    \[
        \mathbb E[|\Delta X - \Delta X_m|^r] \le C (t_2 - t_1)^{\frac{r}{2}} \lVert \mathcal K - \mathcal K_m \rVert^r_{L^2([0,T])}
    \]
    with $\Delta X := X(t_2) - X(t_1)$, $\Delta X_m := X_m(t_2) - X_m(t_1)$.
\end{lemma}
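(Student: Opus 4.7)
The plan is to exploit the martingale structure: since $X$ and $X_m$ are continuous martingales driven by $W$, we can write
\[
    \Delta X - \Delta X_m = \int_{t_1}^{t_2} \bigl( Y(s) X(s) - Y_m(s) X_m(s) \bigr) dW(s),
\]
and the Burkholder--Davis--Gundy inequality together with Jensen's inequality applied to the time-integral on $[t_1,t_2]$ would give
\[
    \mathbb E[|\Delta X - \Delta X_m|^r] \le C (t_2 - t_1)^{\frac{r}{2}-1} \int_{t_1}^{t_2} \mathbb E\bigl[ |Y(s) X(s) - Y_m(s) X_m(s)|^r \bigr] ds.
\]
This already isolates the desired $(t_2-t_1)^{r/2}$ prefactor; what remains is to show that the integrand is uniformly (in $s$ and $m$) bounded by $C \lVert \mathcal K - \mathcal K_m \rVert^r_{L^2([0,T])}$.

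For this, I would split
\[
    Y(s) X(s) - Y_m(s) X_m(s) = Y(s) (X(s) - X_m(s)) + X_m(s)(Y(s) - Y_m(s)).
\]
For the first summand, the boundedness of $Y$ (by $\max_t \psi(t)$ from Theorem \ref{th: properties of Y}) followed by Corollary \ref{cor: approx with difference kernel for X} immediately yields the required bound (even with a $\sup_{t\in[0,T]}$, which majorizes the pointwise version). For the second summand, I would apply Cauchy--Schwarz to separate $X_m$ and $Y-Y_m$, then use Lemma \ref{lemma: moments of prices} to control $\mathbb E[X_m^{2r}(s)]$ uniformly in $m$ and $s$.

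It then remains to estimate $\mathbb E[|Y(s) - Y_m(s)|^{2r}]$. This is the place where Theorem \ref{th: approximation of volatility} enters: it gives
\[
    |Y(s) - Y_m(s)| \le C\Bigl( |Z(s) - Z_m(s)| + \xi_m \int_0^s |Z(u) - Z_m(u)| du \Bigr),
\]
with $\xi_m$ having uniformly bounded moments of all orders. I would raise to the power $2r$, use Jensen to push the power inside the time integral over $[0,s]$, apply Cauchy--Schwarz once more to separate $\xi_m$ from the $Z - Z_m$ integral, and finally exploit the Gaussianity of $Z(u) - Z_m(u)$: since $Z - Z_m$ is a centered Gaussian process and the kernels are of difference type,
\[
    \mathbb E[(Z(u) - Z_m(u))^2] = \int_0^u (\mathcal K(u-v) - \mathcal K_m(u-v))^2 dv \le \lVert \mathcal K - \mathcal K_m \rVert^2_{L^2([0,T])},
\]
so $\mathbb E[|Z(u) - Z_m(u)|^{p}] \le C_p \lVert \mathcal K - \mathcal K_m \rVert^{p}_{L^2([0,T])}$ for any $p \ge 1$. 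Collecting the factors — and taking a square root at the end to pass from $2r$-th moments back to $r$-th ones — yields the required bound on the second summand. Combining the two summands and integrating over $[t_1,t_2]$ produces the claimed estimate.

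The only mildly delicate point is bookkeeping the exponents: one has to choose moment orders ($2r$ for Cauchy--Schwarz, then $4r$ if another Cauchy--Schwarz is used to decouple $\xi_m$ from the Gaussian integral) so that the Gaussian moments of $Z - Z_m$ come out with the correct scaling $\lVert \mathcal K - \mathcal K_m \rVert^r_{L^2([0,T])}$ after taking all the roots. Everything else is a standard application of BDG plus the previously established uniform moment bounds (Lemmas \ref{lemma: Holder continuity of approximated noises}, \ref{lemma: moments of prices} and Theorem \ref{th: approximation of volatility}), so I do not expect a serious obstacle beyond this exponent accounting.
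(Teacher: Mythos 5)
Your proposal is correct and follows essentially the same route as the paper: BDG plus Jensen to extract the $(t_2-t_1)^{r/2}$ factor, a product-rule splitting of $YX - Y_mX_m$ (you attach the bounded volatility to $X-X_m$ and the price to $Y-Y_m$, the paper does the symmetric variant), then Theorem \ref{th: approximation of volatility}, the uniform moment bounds, H\"older/Cauchy--Schwarz, and Gaussianity of $Z-Z_m$ with the difference-kernel identity to land on $\lVert \mathcal K - \mathcal K_m\rVert^r_{L^2([0,T])}$. The exponent bookkeeping you flag does close, since $\mathbb E[|Z(u)-Z_m(u)|^p] \le C_p \lVert \mathcal K - \mathcal K_m\rVert^p_{L^2([0,T])}$ for every $p$, so all roots cancel as intended.
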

\begin{proof}
    Using uniform boundedness of $Y$ and $Y_m$, the Burkholder-Davis-Gundy and Jensen inequalities as well as Theorem \ref{th: approximation of volatility}, one can write:
    \begin{align*}
        \mathbb E& \left[ |\Delta X - \Delta X_m|^r \right] = \mathbb E \left[ \left|\int_{t_1}^{t_{2}} (Y(s)X(s) - Y_m(s)X_m(s))dW(s)\right|^r \right]
        \\
        &\le C \mathbb E \left[ \left|\int_{t_1}^{t_2} (Y(s)X(s) - Y_m(s)X_m(s))^2 ds\right|^{\frac{r}{2}} \right]
        \\
        &\le C(t_2 - t_1)^{\frac{r}{2}-1}  \int_{t_1}^{t_2} \mathbb E \left[|Y(s)X(s) - Y_m(s)X_m(s)|^r\right] ds 
        \\
        &\le C(t_2 - t_1)^{\frac{r}{2}-1} \int_{t_1}^{t_2} \mathbb E\left[|X(s) - X_m(s)|^r\right] ds + C(t_2 - t_1)^{\frac{r}{2}-1} \int_{t_1}^{t_2} \mathbb E\left[ X^r(s)|Y(s) - Y_m(s)|^r\right]ds
        \\
        &\le C(t_2 - t_1)^{\frac{r}{2}-1}\int_{t_1}^{t_2} \mathbb E\left[|X(s) - X_m(s)|^r\right] ds + C(t_2 - t_1)^{\frac{r}{2}-1}\int_{t_1}^{t_2} \mathbb E\left[ X^r(s)|Z(s) - Z_m(s)|^r\right]ds 
        \\
        &\quad + C(t_2 - t_1)^{\frac{r}{2}-1}\int_{t_1}^{t_2} \mathbb E\left[ X^r(s) \xi^{r}_m \int_0^s |Z(u) - Z_m(u)|^r du \right]ds,
    \end{align*}
    where the random variable $\xi_m$ does not depend on $t_1$ or $t_2$.
    Now, since
    \[
         \mathbb E\left[\sup_{s\in [0,T]}|X(s) - X_m(s) |^r\right] \le C \lVert \mathcal K - \mathcal K_m \rVert^r_{L^2([0,T])}  
    \]
    by Theorem \ref{th: approx with difference kernel}, it is clear that
    \[
        \int_{t_1}^{t_2} \mathbb E\left[|X(s) - X_m(s)|^r\right] ds \le C(t_2 - t_1) \lVert \mathcal K - \mathcal K_m \rVert^r_{L^2([0,T])}.
    \]
    Next, \eqref{eq: sup of S and X} in Theorem \ref{th: properties of price} implies that there exists a constant $C$ that does not depend on $t_1$ or $t_2$ such that
    \[
        \int_{t_1}^{t_2} \mathbb E\left[ X^{2r}(s)\right]ds < C,
    \]
    which yields
    \begin{align*}
        \int_{t_1}^{t_2} & \mathbb E\left[ X^r(s)|Z(s) - Z_m(s)|^r\right]ds \le \left( \int_{t_1}^{t_2} \mathbb E\left[ X^{2r}(s)\right]ds \right)^{\frac{1}{2}} \left(\int_{t_1}^{t_2} \mathbb E\left[|Z(s) - Z_m(s)|^{2r}\right]ds \right)^{\frac{1}{2}}
        \\
        & \le C \left(\int_{t_1}^{t_2} \mathbb E\left[|Z(s) - Z_m(s)|^{2r}\right]ds \right)^{\frac{1}{2}} \le C(t_2 - t_1) \lVert \mathcal K - \mathcal K_m \rVert^r_{L^2([0,T])}.
    \end{align*}
    Similarly, by \eqref{eq: sup of S and X} and \eqref{eq: xim uniformly bounded moments} we can write
    \begin{align*}
        \int_{t_1}^{t_2} \mathbb E\left[ X^r(s) \xi^{r}_m \int_0^s |Z(u) - Z_m(u)|^r du \right]ds &\le C \int_{t_1}^{t_2} \int_0^s \left(\mathbb E\left[|Z(u) - Z_m(u)|^{2r}\right] \right)^{\frac{1}{2}} duds
        \\
        &\le C (t_2 - t_1) \lVert \mathcal K - \mathcal K_m \rVert^r_{L^2([0,T])},
    \end{align*}
    which finalizes the proof.
\end{proof}
    
\end{document}